\definecolor{darkred}{rgb}{0.8,0.1,0.1}
\theoremstyle{plain}
\newtheorem{theo}{Theorem}[section]
\newtheorem{lem}[theo]{Lemma}
\newtheorem{propo}[theo]{Proposition}
\newtheorem{cor}[theo]{Corollary}
\theoremstyle{definition}
\newtheorem{defi}[theo]{Definition}
\newtheorem{ex}[theo]{Example}
\newtheorem{rem}[theo]{Remark}
\numberwithin{equation}{section}
\def\Aff{\mathsf{Aff}}
\def\Vec{\mathsf{Vec}}
\def\AffBund{\mathsf{AffBund}}
\def\VecBund{\mathsf{VecBund}}
\def\Mod{\mathsf{Mod}}
\def\GlobHypAffGreen{\mathsf{GlobHypAffGreen}}
\def\VecBiLin{\mathsf{VecBiLin}}
\def\VecBiLinR{\mathsf{VecBiLinR}}
\def\PhaseSpace{\mathfrak{PhSp}}
\def\CCR{\mathfrak{CCR}}
\def\CAR{\mathfrak{CAR}}
\def\astAlg{{^\ast}\mathsf{Alg}}
\def\QFT{\mathfrak{A}}
\def\nn{\nonumber}
\def\bbR{\mathbb{R}}
\def\bbC{\mathbb{C}}
\def\bbN{\mathbb{N}}
\def\bfA{\mathsf{A}}
\def\bfV{\mathsf{V}}
\def\bfW{\mathsf{W}}
\def\MM{\mathcal{M}}
\def\AA{\mathcal{A}}
\def\EE{\mathcal{E}}
\def\Hom{\mathrm{Hom}}
\def\Ker{\mathrm{Ker}}
\def\id{\mathrm{id}}
\def\supp{\mathrm{supp}}
\def\vol{\mathrm{vol}_M}
\def\volo{\mathrm{vol}_{M_1}}
\def\volt{\mathrm{vol}_{M_2}}
\def\volthree{\mathrm{vol}_{M_3}}
\def\1{\mathbbm{1}}
\def\oone{\mathbf{1}}
\def\lin{\mathrm{lin}}
\def\DualConf{\mathfrak{DualConf}}
\newcommand{\ip}[2]{\langle #1,#2 \rangle}
\newcommand{\sect}[2]{\Gamma^\infty( #1 , #2 )}
\newcommand{\sectn}[2]{\Gamma_0^\infty( #1 , #2 )}
\newcommand{\sectsc}[2]{\Gamma_\mathrm{sc}^\infty( #1 , #2 )}
\def\sk{\vspace{2mm}}
\title{%
Quantum field theory on affine bundles
}
\author{%
Marco Benini$^{1,a}$, Claudio Dappiaggi$^{1,b}$ and Alexander Schenkel$^{2,c}$\vspace{4mm}\\
{\small $^1$ Dipartimento di Fisica}\\ 
{\small Universit{\`a} di Pavia \& INFN, sezione di Pavia –- Via Bassi 6, 27100 Pavia, Italia.}\vspace{2mm}\\
{\small $^2$ Fachgruppe Mathematik}\\
{\small Bergische~Universit\"at~Wuppertal,~Gau\ss stra\ss e~20,~42119~Wuppertal,~Germany.}\vspace{4mm}
\\
{\footnotesize $^a$ \texttt{marco.benini@pv.infn.it}~,~$^b$ \texttt{claudio.dappiaggi@unipv.it}~,~$^c$ \texttt{schenkel@math.uni-wuppertal.de} }
 }
\date{October 2012}
\begin{document}

\maketitle

\begin{abstract}
We develop a general framework for the quantization of bosonic and fermionic field theories
on affine bundles over arbitrary globally hyperbolic spacetimes. All concepts and results are formulated using the language
of category theory, which allows us to prove that these models satisfy the principle of
general local covariance. Our analysis is a preparatory step towards a full-fledged quantization scheme
for the Maxwell field, which emphasises the affine bundle structure of the 
bundle of principal $U(1)$-connections.
 As a by-product, our construction provides a new class of 
  exactly tractable locally covariant quantum field theories, which are a mild generalization of the linear ones.
  We also show the existence of a functorial 
assignment of linear quantum field theories to affine ones. The identification of suitable algebra homomorphisms enables us
to induce whole families of physical states (satisfying the microlocal spectrum condition) for affine quantum field theories
by pulling back quasi-free Hadamard states of the underlying linear theories.
\end{abstract}
\paragraph*{Keywords:}
Affine bundles, 
globally hyperbolic spacetimes, 
locally covariant quantum field theory, 
quantum field theory on curved spacetimes, 
microlocal spectrum condition
\paragraph*{MSC 2010:}81T20, 53C80, 58J45, 35Lxx


\section{\label{sec:intro}Introduction}
In the past two decades the study of quantum field theories on curved spacetimes via algebraic
 techniques has witnessed considerable leap forwards, which have strengthened our 
 understanding of its mathematical and physical foundations. Particularly relevant has 
 been the formulation of the principle of general local covariance in \cite{Brunetti:2001dx}, 
 which characterizes a well-defined quantization scheme of any theory as a covariant functor 
 from the category of globally hyperbolic spacetimes to that of unital topological $\ast$-algebras. 
 In particular linear field theories, both bosonic and fermionic, have been thoroughly discussed within 
 this framework, see \cite{Bar:2007zz,Bar:2011iu}. Interactions have been dealt with only at a perturbative level 
via the extended algebra of Wick polynomials and the related time-ordered product \cite{Hollands:2001fb, Brunetti:2009qc}.

There are two weak points in our present understanding of locally covariant quantum field theories: 
The first is that all exactly constructable models so far are given by linear theories.
The second concerns local gauge theories and, most notably, the prime example of the Maxwell field. In a recent paper it has been shown that the standard quantization scheme applied to the field strength tensor, described by a dynamical $2$-form on 
the underlying spacetime, fails to satisfy the principle of general local covariance \cite{Dappiaggi:2011zs}.

The goal of this paper is to partly amend these deficiencies by considering affine field theories.
This name is prompted by the fact that the configuration bundle is an affine bundle  and the equation 
of motion operator is compatible with this affine structure. 
The motivation for developing these models is twofold: On the one hand, 
affine field theories are a mild generalization of linear field theories (in particular the 
equation of motion is not linear, however affine) and they can still be treated exactly within 
the standard scheme of locally covariant field theories. On the other hand, this paper is preparatory 
to a full-fledged analysis of the Maxwell field as a particular example of a
 Yang-Mills theory. From a mathematical point of view, the Maxwell field should be understood as a dynamical theory 
 of connections on principal $U(1)$-bundles. The basic geometric object in this description is
 the bundle of connections, which is an affine bundle modeled on the cotangent bundle of the spacetime. 
The Maxwell equation can be understood as a differential operator compatible with this affine structure.
However, since a key role in this case is played by the gauge group and by the 
gauge fixing procedure, we decided to devote a second paper to this specific model \cite{BDSMaxwell}. 
In this paper we shall provide a full-fledged characterization of the structural properties of affine field theories,
which are not subject to local gauge invariance. We name this type of model affine matter field theory. 
In order to achieve our goal, we exploit as much as possible the language of category theory and we think 
of our paper as a natural generalization of the work in \cite{Bar:2007zz,Bar:2011iu},
which addresses the structural properties of linear matter field theories.

We clarify the structure of the paper by outlining its content:
 In Section \ref{sec:notation1} we review the definition of an affine space and of its dual, showing in particular that 
it is a vector space. Affine bundles, their duals and the structure of their sets of sections are discussed
 in Section \ref{sec:notation2} and Section \ref{sec:notation3}.
 The remaining part of Section \ref{sec:notation} is devoted to summarize 
 well-known results on differential operators on vector bundles and Green-hyperbolic operators 
 on vector bundles over globally hyperbolic spacetimes. 
 In Section \ref{sec:affinediffop} we introduce the notion of affine differential operators and study in detail their formal adjoints. 
 Then in Section \ref{sec:classaff} we provide a characterization of affine matter field theories and prove important structural
 properties. Starting from Section \ref{sec:category} we show that the above
 concepts can be phrased in the language of category theory, which allows us in Section \ref{sec:quant} to formulate the quantization 
 of bosonic and fermionic affine matter field theories as a suitable covariant functor.
 The axioms of locally covariant quantum field theory are verified.
 In Section \ref{sec:linfunctor} we construct a functor associating linear quantum field theories to affine ones. 
 This is used in Section \ref{sec:states} to show that, starting from any quasi-free Hadamard state for the underlying linear theory, 
 it is possible to induce a family of states for the affine theory, which satisfies the microlocal spectrum condition.
 In Section \ref{sec:inhommat} we show that linear matter field theories coupled to external source terms 
 are naturally interpreted as affine matter field theories.
In order to be self-contained, we review in the Appendix \ref{app:CCRandCAR} 
 the construction of the CCR and CAR functors.


\section{\label{sec:notation}Preliminaries and notation}
In this work we focus for the sake of simplicity on real field theories. 
Complex field theories follow accordingly with minor modifications.
Thus, all vector spaces will be taken over the field $\bbR$
if not stated otherwise.
\subsection{\label{sec:notation1}Affine spaces, duals and morphisms}
We briefly review the basic concepts of affine spaces. All vector spaces in this section will be finite-dimensional.
\begin{defi}\label{affinespace}
\begin{itemize}
\item[(i)]An {\bf affine space} is a triple $(A,V,\Phi)$ consisting of a set $A$, a vector space $V$ 
and an action $\Phi: A\times V\to A$ of the abelian group $(V,+)$ on $A$, which is free and transitive.
We frequently use the convenient notation $\Phi(a,v) = a+v$ in which the group action properties
simply read $a+0 =a$ and $(a+v)+w = a+(v+w) =: a+v+w$, for all $a\in A$ and $v,w\in V$.
For $a,a^\prime\in A$ we denote by $a^\prime-a:= v$ the unique element $v\in V$ such that
$a^\prime = a +v$.
If it does not cause confusion we simply write $A$ for the affine space $(A,V,\Phi)$.
\item[(ii)] Given two affine spaces $(A_1,V_1,\Phi_1)$ and $(A_2,V_2,\Phi_2)$,
we call a map $f: A_1\to A_2$ an {\bf affine map}, if 
there exists a linear map $f_V:V_1\to V_2$, such that for all $a\in A_1$ and $v\in V_1$,
\begin{flalign}
f\big(\Phi_1(a,v)\big) = \Phi_2\big(f(a),f_V(v)\big)~,
\end{flalign}
or in the more convenient notation $f(a+v) = f(a)+f_V(v)$, where we used for notational
simplicity the same symbol ``+'' for both, $\Phi_1$ and $\Phi_2$.
We call the map $f_V$ the {\bf linear part} of $f$.
\end{itemize}
\end{defi}
\sk
\begin{rem}
The linear part of an affine map $f:A_1\to A_2$ is unique:
Let us assume that there are two linear parts $f_V,\widetilde{f}_V:V_1\to V_2$. Then by definition
we have, for all $a\in A_1$ and $v\in V_1$, $f(a+v) = f(a) +f_V(v) = f(a) + \widetilde{f}_V(v)$.
This implies $f_V(v) = \widetilde{f}_V(v)$, for all $v\in V_1$, and hence $f_V =\widetilde{f}_V$.
\end{rem}
\sk

\begin{ex}
Every vector space $V$ can be regarded as an affine space modeled on itself: Choose $A=V$ (as sets) and define
$\Phi$ via the abelian group structure $+$ on $V$.
In this case the affine endomorphisms are given by affine transformations, 
i.e.~maps $f:A\to A\,,~v\mapsto f(v) = f_V(v) +b$, where $f_V:V\to V$ is a linear map and $b\in V$. 
The linear part of this map is $f_V$.
\end{ex}
\begin{ex}\label{ex:seq}
Consider a short exact sequence of vector spaces and linear maps
\begin{flalign}
\xymatrix{
0\ar[r] &  W_1 \ar[r]^-{f}& W_2 \ar[r]^-{g} & W_3 \ar[r] & 0~.
}
\end{flalign}
We say that a linear map $a: W_3\to W_2$ is a splitting of this sequence, if
$g\circ a =\id_{W_3}$.
Let us denote by $A := \big\{a \in \Hom_\bbR(W_3,W_2) : g\circ a =\id_{W_3}\big\}\subseteq \Hom_\bbR(W_3,W_2)$
  the set of all splittings and
by $V:= \Hom_\bbR(W_3,W_1)$ the vector space of linear maps from $W_3$ to $W_1$.
Consider the following map
\begin{flalign}
\Phi: A\times V \to A ~,~~(a,v) \mapsto \Phi(a,v) = a + f\circ v~.
\end{flalign}
The map $\Phi(a,v):W_3\to W_2$  is linear and satisfies $g\circ \Phi(a,v)=\id_{W_3}$, for
all $a\in A$ and $v\in V$. Furthermore, $\Phi$ defines an action of the abelian group $(V,+)$ on $A$, 
i.e.~for all $a\in A$, $\Phi(a,0) = a$ and, for all $a\in A$ and $v,w\in V$,  $\Phi(\Phi(a,v),w) = 
a+f\circ v + f\circ w = a + f\circ (v+w) = \Phi(a,v+w)$. Since $f$ is injective, the group action $\Phi$
is free. It is also transitive: Let $a,a^\prime\in A$ be arbitrary, then $a^\prime-a:W_3\to \Ker(g)\subseteq W_2$ is a linear map
with values in the kernel of the map $g:W_2\to W_3$. Since the sequence is exact, we have that $\mathrm{Im}(f) = \Ker(g)$
and that $f:W_1\to \mathrm{Im}(f)\subseteq W_2$ is a linear isomorphism. Defining
$v:= f^{-1}\circ(a^\prime-a)$ we obtain $\Phi(a,v) = a + a^\prime-a = a^\prime$. Hence, the triple
$(A,V,\Phi)$ is an affine space.
\end{ex}
\sk
\begin{rem}\label{rem:affinevecdiffeo}
Let $(A,V,\Phi)$ be an affine space.
Since $V$ is a finite-dimensional vector space, it comes with a  canonical topology induced from an Euclidean norm
and with a canonical $C^\infty$-structure induced by $\bbR^n$.
Fixing any element $a\in A$ we obtain a bijection of sets $\Phi_a:=\Phi(a,\,\cdot\,) : V \to A$. We induce a
topology on $A$ by $ \big\{U\subseteq A: \Phi_a^{-1}[U] \subseteq V\text{ is open}\big\}$.
This induced topology does not depend on the choice of $a\in A$. Even more, $A$ is a $C^\infty$-manifold:
The inverse of the map $\Phi_a : V \to A$ provides for any choice of $a\in A$ a global coordinate chart.
The transition function for two charts specified by $a,a^\prime\in A$, i.e.~the map 
$\Phi_{a^\prime}^{-1}\circ \Phi_a :V\to V$, is a polynomial of degree one and thus smooth.
Notice that the maps $\Phi : A\times V\to A\,,~(a,v)\mapsto a+v$ and $A\times A\to V\,,~(a,a^\prime)\mapsto a-a^\prime$
are smooth. Furthermore, every affine map $f:A_1\to A_2$ is smooth.
\end{rem}
\sk

Of particular importance for us is the dual  of an affine space.
\begin{defi}
Let $(A,V,\Phi)$ be an affine space. The {\bf vector dual} $A^\dagger$ of $(A,V,\Phi)$ is defined to be
the set of all affine maps $\varphi: A\to\bbR$.
\end{defi}
\sk

We review the following standard lemma:
\begin{lem}\label{lem:vectordual}
The vector dual $A^\dagger$ of an affine space $(A,V,\Phi)$ is a vector space
 of dimension $\mathrm{dim}(A^\dagger) = \mathrm{dim}(V) + 1$.
\end{lem}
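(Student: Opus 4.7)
The plan is to first equip $A^\dagger$ with a vector space structure by pointwise operations, and then compute its dimension by exhibiting a short exact sequence relating $A^\dagger$ to $V^\ast := \Hom_\bbR(V,\bbR)$.

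For the vector space structure, I would define $(\varphi+\psi)(a) := \varphi(a)+\psi(a)$ and $(\lambda\varphi)(a) := \lambda\,\varphi(a)$ for $\varphi,\psi\in A^\dagger$ and $\lambda\in\bbR$. A direct computation using Definition \ref{affinespace} shows
\begin{flalign}
(\varphi+\psi)(a+v) = \varphi(a)+\varphi_V(v)+\psi(a)+\psi_V(v) = (\varphi+\psi)(a)+(\varphi_V+\psi_V)(v)~,
\end{flalign}
so $\varphi+\psi$ is affine with linear part $\varphi_V+\psi_V$; an analogous computation handles scalar multiplication. The zero element is the constant map $0:A\to\bbR$, which is affine with linear part $0\in V^\ast$. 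The remaining vector space axioms are inherited from those of $\bbR$.

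For the dimension, I would introduce the assignment $L:A^\dagger\to V^\ast$, $\varphi\mapsto\varphi_V$, which is well-defined by the uniqueness of linear parts established in the Remark following Definition \ref{affinespace}. The computation above shows that $L$ is linear. The kernel of $L$ consists of affine maps $\varphi:A\to\bbR$ whose linear part vanishes, i.e.\ $\varphi(a+v)=\varphi(a)$ for all $a\in A$ and $v\in V$; since the action $\Phi$ is transitive, such $\varphi$ must be constant, so $\Ker(L)\cong\bbR$ has dimension one. To see that $L$ is surjective, I would pick any basepoint $a_0\in A$ and, for arbitrary $\alpha\in V^\ast$, define $\varphi_\alpha:A\to\bbR$, $a\mapsto\alpha(a-a_0)$, where $a-a_0\in V$ is the unique vector provided by freeness and transitivity of $\Phi$. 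A short check gives $\varphi_\alpha(a+v)=\alpha(a-a_0+v)=\varphi_\alpha(a)+\alpha(v)$, so $\varphi_\alpha\in A^\dagger$ with $L(\varphi_\alpha)=\alpha$.

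Combining these facts yields the short exact sequence
\begin{flalign}
\xymatrix{0\ar[r] & \bbR \ar[r] & A^\dagger \ar[r]^-{L} & V^\ast \ar[r] & 0}
\end{flalign}
of vector spaces, from which $\dim(A^\dagger)=\dim(V^\ast)+1=\dim(V)+1$ follows. There is no real obstacle here: the only non-automatic step is the construction of a preimage under $L$, which requires choosing a basepoint in $A$, reflecting the fact that the splitting of the above sequence is non-canonical while the dimension count is not.
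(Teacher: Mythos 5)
Your proof is correct, and it uses the same essential ingredients as the paper's: the constant function (the paper's $\1$, your kernel generator) and the basepoint-dependent functionals $a\mapsto v^\ast(a-\widehat{a})$ (your $\varphi_\alpha$). The difference is purely in the packaging. The paper exhibits these elements directly as a basis of $A^\dagger$ and reads off the dimension, leaving the verification of linear independence and spanning implicit; you instead organize the same data into the short exact sequence
\begin{flalign*}
\xymatrix{0\ar[r] & \bbR \ar[r] & A^\dagger \ar[r]^-{L} & V^\ast \ar[r] & 0}
\end{flalign*}
via the linear-part map $L$, so that the dimension count follows from the kernel computation and the surjectivity of $L$. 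Your route is slightly longer to write down but is self-contained where the paper's is asserted: the transitivity argument identifying $\Ker(L)$ with the constants and the explicit preimage $\varphi_\alpha$ together do exactly the work that "provides a basis" glosses over. Your closing observation that the choice of $a_0$ only enters through a non-canonical splitting of the sequence, while the dimension itself is canonical, is a nice touch that the paper does not make explicit.
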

\begin{proof}
The fact that $A^\dagger$ is a vector space is standard, since the target space of the affine maps $\varphi:A\to \bbR$
is a vector space.

To obtain the dimension of $A^\dagger$ we construct a basis. The constant map
$\1:A\to\bbR\,,~a\mapsto \1(a)=1$ is an element in $A^\dagger$ with linear part $\1_V =0$. 
We fix an arbitrary element $\widehat{a}\in A$ and define for each element $v^\ast\in V^\ast$,
with $V^\ast$ denoting the dual vector space of $V$, an affine map 
$\varphi^{v^\ast}:A\to\bbR\,,~a\mapsto \varphi^{v^\ast} (a) = v^\ast(a-\widehat{a})$.
The linear part of $\varphi^{v^\ast}$ is $v^\ast$. A basis of $V^\ast$ together with $\1$ provides a basis of $A^\dagger$,
thus $\mathrm{dim}(A^\dagger) = \mathrm{dim}(V^\ast)+1 = \mathrm{dim}(V)+1$.
\end{proof}
\sk

Let $(A_i,V_i,\Phi_i)$, $i=1,2,3$, be affine spaces. The usual composition $\circ$ of two affine maps
$f:A_1\to A_2$ and $g: A_2\to A_3$ is again an affine map
$g\circ f:A_1\to A_3$ with linear part
$ (g\circ f)_V = g_V\circ f_{V}$.
This allows us to define a category of affine spaces. We restrict the morphisms in this category to affine isomorphisms,
 since this is the structure we require later. 
\begin{defi}
The category $\Aff$ consists of the following objects and morphisms:
\begin{itemize}
\item An object in $\Aff$ is an affine space $(A,V,\Phi)$.
\item A morphism between two objects $(A_1,V_1,\Phi_1)$ and $(A_2,V_2,\Phi_2)$ in $\Aff$ is an affine isomorphism
$f:A_1\to A_2$.
\end{itemize} 
The category $\Vec$ consists of the following objects and morphisms:
\begin{itemize}
\item An object in $\Vec$ is a vector space $V$.
\item A morphism between two objects $V_1$ and $V_2$ in $\Vec$ is a linear isomorphism $f:V_1\to V_2$.
\end{itemize}
\end{defi}
\sk

The association of the underlying vector space $V$ to an affine space $(A,V,\Phi)$ is functorial:
\begin{lem}
The mapping $\mathfrak{Lin}:\Aff\to \Vec$  specified on objects by
$\mathfrak{Lin}(A,V,\Phi) =V$ and on morphisms by $\mathfrak{Lin}(f) = f_V$ is a covariant functor.
\end{lem}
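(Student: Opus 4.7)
The plan is to verify the three defining properties of a covariant functor: well-definedness on morphisms (i.e.\ that $\mathfrak{Lin}(f)$ actually lands in $\Vec$), preservation of identities, and preservation of composition.

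First I would check that if $f:A_1\to A_2$ is an affine isomorphism, then its linear part $f_V:V_1\to V_2$ is a linear isomorphism. Linearity is built into the definition of an affine map, so only bijectivity needs attention. For injectivity, suppose $f_V(v)=0$ for some $v\in V_1$ and pick any $a\in A_1$; then $f(a+v)=f(a)+f_V(v)=f(a)$, and since $f$ is injective we get $a+v=a$, hence $v=0$ by freeness of the $V_1$-action. For surjectivity, let $w\in V_2$ and fix any $a\in A_1$; by surjectivity of $f$ there exists $a^\prime\in A_1$ with $f(a^\prime)=f(a)+w$, and then by transitivity $a^\prime-a\in V_1$ satisfies $f_V(a^\prime-a)=f(a+(a^\prime-a))-f(a)=w$.

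Next I would verify preservation of identities and composition. The identity $\id_{A}:A\to A$ is trivially affine with linear part $\id_V$, since $\id_A(a+v)=a+v=\id_A(a)+\id_V(v)$, so uniqueness of the linear part (already recorded in the Remark following Definition~2.1) gives $\mathfrak{Lin}(\id_A)=\id_V=\id_{\mathfrak{Lin}(A)}$. For composition, given affine isomorphisms $f:A_1\to A_2$ and $g:A_2\to A_3$, compute $(g\circ f)(a+v)=g(f(a)+f_V(v))=g(f(a))+g_V(f_V(v))$, so that $g_V\circ f_V$ satisfies the defining property of the linear part of $g\circ f$; by uniqueness $(g\circ f)_V=g_V\circ f_V$, which is exactly $\mathfrak{Lin}(g\circ f)=\mathfrak{Lin}(g)\circ\mathfrak{Lin}(f)$. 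This identity is in fact already stated in the paragraph preceding Definition~2.4.

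There is no real obstacle here: the only nontrivial point is establishing bijectivity of $f_V$ from bijectivity of $f$, and this reduces to a short direct argument using freeness and transitivity of the underlying group actions. All remaining verifications are immediate consequences of the uniqueness of the linear part of an affine map.
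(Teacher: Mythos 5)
Your proof is correct and follows the same overall strategy as the paper: verify directly that $\mathfrak{Lin}$ is well defined on morphisms and that it preserves identities and composition, with the latter two steps resting on the uniqueness of the linear part. The one place where you diverge is in showing that $f_V$ is a linear isomorphism. The paper simply exhibits the inverse as ${f_V}^{-1}={f^{-1}}_V$, the linear part of the inverse affine map $f^{-1}:A_2\to A_1$; this is shorter but implicitly relies on $f^{-1}$ being an affine map. You instead prove injectivity and surjectivity of $f_V$ by hand from freeness and transitivity of the underlying group actions, which is slightly longer but more self-contained: it does not presuppose that the inverse of a bijective affine map is again affine. Either route is fine, and your handling of the identity and composition axioms matches the paper's.
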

\begin{proof}
For every object $(A,V,\Phi)$ in $\Aff$, $\mathfrak{Lin}(A,V,\Phi)=V$ is by definition a vector space.
For every morphism $f:A_1\to A_2$ in $\Aff$ the linear part $\mathfrak{Lin}(f)=f_V:V_1\to V_2$ is a linear isomorphism,
with inverse ${f_V}^{-1}={f^{-1}}_V:V_2\to V_1$ given by the linear part of the inverse affine map $f^{-1}:A_2\to A_1$.
We also have that $\mathfrak{Lin}(\id_A) = \id_V$ and $\mathfrak{Lin}(g\circ f) = (g\circ f)_V= g_V\circ f_V
= \mathfrak{Lin}(g)\circ\mathfrak{Lin}(f)$, for every composable morphisms $f,g$ in $\Aff$.
\end{proof}
The association of the vector dual $A^\dagger$ to an affine space $(A,V,\Phi)$ is also functorial:
\begin{lem}\label{lem:dualfunctor}
The mapping $\mathfrak{Dual}:\Aff\to \Vec$ specified on objects by 
$\mathfrak{Dual}(A,V,\Phi)=A^\dagger$ and on morphisms by $\mathfrak{Dual}(f) = f^\dagger$, with
\begin{flalign}\label{eqn:dualmap}
f^\dagger :A_1^\dagger \to A_2^\dagger~,~~\varphi \mapsto f^\dagger(\varphi) = \varphi\circ f^{-1}~,
\end{flalign}
is a covariant functor.
\end{lem}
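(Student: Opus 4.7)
The plan is to verify the three defining properties of a covariant functor: well-definedness on morphisms (i.e.\ that $f^\dagger$ is a linear isomorphism $A_1^\dagger\to A_2^\dagger$ whenever $f:A_1\to A_2$ is an affine isomorphism), preservation of identities, and compatibility with composition. The formula $f^\dagger(\varphi)=\varphi\circ f^{-1}$ only makes sense because we restricted the morphisms of $\Aff$ to isomorphisms in the previous definition; this invertibility is what turns what looks like a contravariant assignment into a covariant one, so the proof will hinge on leveraging $f^{-1}$ at each step.

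First I would check that $f^\dagger(\varphi)$ is actually an element of $A_2^\dagger$. Since $f$ is an affine isomorphism, its inverse $f^{-1}:A_2\to A_1$ is again affine (with linear part $(f_V)^{-1}$, which is straightforward to verify from $f(a+v)=f(a)+f_V(v)$). Hence the composition $\varphi\circ f^{-1}:A_2\to\bbR$ of two affine maps is affine, with linear part $\varphi_V\circ (f_V)^{-1}$, and therefore lies in $A_2^\dagger$. Next, linearity of the map $f^\dagger:A_1^\dagger\to A_2^\dagger$ is immediate from the pointwise vector space structure on $A_2^\dagger$: for any $\varphi,\psi\in A_1^\dagger$, $\lambda\in\bbR$ and $a\in A_2$, one has $f^\dagger(\varphi+\lambda\psi)(a)=(\varphi+\lambda\psi)(f^{-1}(a))=\varphi(f^{-1}(a))+\lambda\,\psi(f^{-1}(a))=\bigl(f^\dagger(\varphi)+\lambda\,f^\dagger(\psi)\bigr)(a)$.

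To see that $f^\dagger$ is in fact a linear isomorphism, I would exhibit $(f^{-1})^\dagger$ as its inverse. This, however, is most cleanly done \emph{after} proving the composition law, so I would treat both together: for composable morphisms $f:A_1\to A_2$ and $g:A_2\to A_3$ in $\Aff$, and any $\varphi\in A_1^\dagger$,
\begin{flalign*}
(g\circ f)^\dagger(\varphi)=\varphi\circ (g\circ f)^{-1}=\varphi\circ f^{-1}\circ g^{-1}=g^\dagger\bigl(\varphi\circ f^{-1}\bigr)=g^\dagger\circ f^\dagger(\varphi)~.
\end{flalign*}
Applying this to $g=f^{-1}$ (and to $g=f$ with the roles swapped), together with the identity preservation check below, yields $(f^{-1})^\dagger\circ f^\dagger=\id_{A_1^\dagger}$ and $f^\dagger\circ (f^{-1})^\dagger=\id_{A_2^\dagger}$, so $f^\dagger$ is bijective with inverse $(f^{-1})^\dagger$.

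Finally, $\mathfrak{Dual}(\id_A)(\varphi)=\varphi\circ (\id_A)^{-1}=\varphi\circ\id_A=\varphi$ for every $\varphi\in A^\dagger$, so $\mathfrak{Dual}(\id_A)=\id_{A^\dagger}$. I do not foresee any serious obstacle: the whole argument is a routine unpacking of definitions, and the only subtle point is remembering that affineness of $f^{-1}$ (hence of $\varphi\circ f^{-1}$) is what allows the formula to land in $A_2^\dagger$ in the first place.
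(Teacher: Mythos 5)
Your proposal is correct and follows essentially the same route as the paper's proof: the paper likewise cites Lemma \ref{lem:vectordual} for the objects, exhibits $(f^{-1})^\dagger$ as the inverse of $f^\dagger$, and notes the identity and composition laws, while you simply spell out the routine verifications (affineness of $\varphi\circ f^{-1}$, pointwise linearity, the reversal $(g\circ f)^{-1}=f^{-1}\circ g^{-1}$ that makes the assignment covariant) that the paper leaves implicit. No gaps.
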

\begin{proof}
For every object $(A,V,\Phi)$ in $\Aff$, $\mathfrak{Dual}(A,V,\Phi)=A^\dagger$ is a vector space due to
 Lemma \ref{lem:vectordual}. For every morphism $f:A_1\to A_2$ in $\Aff$, the dual map
 $\mathfrak{Dual}(f)=f^\dagger :A_1^\dagger \to A_2^\dagger$ given in (\ref{eqn:dualmap}) is a linear isomorphism,
 with inverse given by ${f^\dagger}^{-1}={f^{-1}}^\dagger=\mathfrak{Dual}(f^{-1})$,
  where $f^{-1}:A_2\to A_1$ is the inverse affine map.
 We also have that $\mathfrak{Dual}(\id_A) = \id_{A^\dagger}$ and 
 $\mathfrak{Dual}(g\circ f) = \mathfrak{Dual}(g)\circ\mathfrak{Dual}(f)$, for every
composable morphisms $f,g$ in $\Aff$.
\end{proof}


\subsection{\label{sec:notation2}Affine bundles, duals and morphisms}
All manifolds will be  of class $C^\infty$, Hausdorff and second countable.
All maps between manifolds are $C^\infty$.
\sk

A {\bf fibre bundle} is a quadruple $(E,M,\pi,F)$, where $E$, $M$, $F$ are manifolds and $\pi:E\to M$ is 
a surjection. 
We require  that the fibre bundle is locally trivializable,
i.e.~that for every $x\in M$ there exists an open neighbourhood $U\subseteq M$ and 
a diffeomorphism $\psi:\pi^{-1}[U] \to U\times F$, such that the following diagram commutes
\begin{flalign}
\xymatrix{
\pi^{-1}[U] \ar[d]_-{\pi} \ar[rr]^-{\psi} & &U\times F\ar[dll]^-{\mathrm{pr}_1}\\
U &&
}
\end{flalign}
where $\mathrm{pr}_1$ denotes the canonical projection on the first factor.
We call $E$ the total space, $M$ the base space, $\pi$ the projection,
$F$ the typical fibre and $E\vert_x := \pi^{-1}[\{x\}]$ the fibre over $x\in M$.
 We also call the pair $(U,\psi)$ a local bundle chart.
 Notice that a local bundle chart $(U,\psi)$  provides
 a diffeomorphism, for all $y\in U$, $\psi\vert_y : E\vert_y \to F$.
 If it is convenient we also denote the fibre bundle $(E,M,\pi,F)$ simply by $E$.
 
Given two fibre bundles $(E_i,M_i,\pi_i,F_i)$, $i=1,2$, a {\bf fibre bundle map} is a pair of maps
$(f:E_1\to E_2,\underline{f}:M_1\to M_2)$, such that the following diagram commutes
\begin{flalign}
\xymatrix{
E_1\ar[d]_-{\pi_1}\ar[rr]^-{f} & & E_2\ar[d]^-{\pi_2}\\
M_1\ar[rr]^-{\underline{f}} && M_2
}
\end{flalign}
If the source and target of the fibre bundle map are clear, we also simply write
$(f,\underline{f})$
\sk

A {\bf vector bundle} is a fibre bundle $(\bfV,M,\rho,V)$, such that all fibres $\bfV\vert_x =\rho^{-1}[\{x\}]$, $x\in M$,
and the typical fibre $V$ are vector spaces, and such that, for all $x\in M$, there exists a local bundle chart $(U,\psi)$ with
 $\psi\vert_y:\bfV\vert_y \to V$ being a linear isomorphism, for all $y\in U$.
A local bundle chart of this kind is called a local vector bundle chart. 
If it is convenient we also denote the vector bundle $(\bfV,M,\rho,V)$ simply by $\bfV$.

Given two vector bundles $(\bfV_i,M_i,\rho_i,V_i)$, $i=1,2$, a {\bf vector bundle map} is a fibre bundle map
$(f:\bfV_1\to \bfV_2,\underline{f}:M_1\to M_2)$, such that, for all $x\in M_1$,
$f\vert_{x}: \bfV_1\vert_x\to \bfV_2\vert_{\underline{f}(x)}$  is a linear map. If the source and target are clear,
we also simply write $(f,\underline{f})$ for a vector bundle map
\sk

We define the concept of affine bundles following \cite[Chapter 6.22]{KMS}.
\begin{defi}\label{def:affinebundle}
\begin{itemize}
\item[(i)] An {\bf affine bundle} is a triple $\big(M,(\bfA,M,\pi,A),(\bfV,M,\rho,V)\big)$, where $M$ is a manifold,
$(\bfA,M,\pi,A)$ is a fibre bundle over $M$ and $(\bfV,M,\rho,V)$ is a vector bundle over $M$, such that
\begin{itemize}
\item for all $x\in M$, the fibre $\bfA\vert_x$ is an affine space modeled on $\bfV\vert_x$,
\item the typical fibre $A$ is an affine space modeled on the typical fibre $V$,
\item for all $x\in M$, there exists a local  bundle chart $(U,\psi)$ of $(\bfA,M,\pi,A)$ and a local vector bundle
chart $(U,\psi_\bfV)$ of $(\bfV,M,\rho,V)$,
such that, for all $y\in U$,
$\psi\vert_y:\bfA\vert_y\to A$ is an affine isomorphism
with linear part ${\psi\vert_y}_V =\psi_\bfV\vert_y:\bfV\vert_y \to V$. 
We call the triple $(U,\psi,\psi_\bfV)$ a local affine bundle chart. 
\end{itemize}
If it does not cause confusion we simply write $(M,\bfA,\bfV)$ or even only $\bfA$ for an affine bundle.
\item[(ii)] Given two affine bundles $\big(M_i,(\bfA_i,M_i,\pi_i,A_i),(\bfV_i,M_i,\rho_i,V_i)\big)$, $i=1,2$,
an {\bf affine bundle map} is a fibre bundle map $(f:\bfA_1\to \bfA_2,\underline{f}:M_1\to M_2)$, such that,
for all $x\in M_1$, $f\vert_x:\bfA_1\vert_x\to \bfA_2\vert_{\underline{f}(x)}$ is an affine map.

If the source and target are clear, we also write for the affine bundle map simply $(f,\underline{f})$.
\end{itemize}
\end{defi}
\sk

\begin{rem}\label{rem:uniquelinmap}
Every affine bundle map $(f,\underline{f})$ induces a unique vector bundle map $(f_\bfV,\underline{f})$ on the underlying
vector bundles. Indeed, the map $f_\bfV:\bfV_1\to \bfV_2$ is uniquely specified by, for all $x\in M_1$,
$f_\bfV\vert_x :={ f\vert_x}_V : \bfV_1\vert_x\to\bfV_2\vert_{\underline{f}(x)}$, where
${ f\vert_x}_V$ denotes the linear part of the affine map  $f\vert_x:\bfA_1\vert_x\to \bfA_2\vert_{\underline{f}(x)}$.
We call $(f_\bfV,\underline{f})$ the {\bf linear part} of the affine bundle map $(f,\underline{f})$.
\end{rem}
\sk

\begin{ex}\label{ex:vectorbundasaffinebund}
Every vector bundle $(\bfV,M,\rho,V)$ can be regarded as an affine bundle modeled on itself:
Simply choose $(\bfA,M,\pi,A)= (\bfV,M,\rho,V)$ (as fibre bundles). The three conditions in Definition
 \ref{def:affinebundle} (i)  are easily verified.
Let $(f_\bfV,\underline{f})$ be a vector bundle endomorphism and $b:M\to\bfV$ be a section.
Then defining
\begin{flalign}
f:\bfA\to\bfA~,~~v\mapsto f(v) = f_\bfV(v) + b\big(\underline{f}(\rho(v))\big)~
\end{flalign}
gives us an affine bundle endomorphism $(f,\underline{f})$ with linear part $(f_\bfV,\underline{f})$.
\end{ex}
\begin{ex}\label{ex:sequenceaffinebund}
Consider a short exact sequence of vector bundles over $M$ and vector bundle maps (covering $\id_M$)
\begin{flalign}\label{eqn:vecbundsequence}
\xymatrix{
M\times 0 \ar[r] & \bfW_1 \ar[r]^-{f} & \bfW_2 \ar[r]^-{g} & \bfW_3 \ar[r] & M\times 0~.
}
\end{flalign}
We say that a vector bundle map $(a:W_3\to W_2,\id_M:M\to M)$ is a splitting  of this sequence, if
$g\circ a =\id_{\bfW_3}$. In order to characterize these splittings we introduce the homomorphism bundles, 
for all $i,j\in \{1,2,3\}$, $\big(\Hom(\bfW_i,\bfW_j),M,\rho_{ij},\Hom_\bbR(W_i,W_j)\big)$.
Notice that the maps $f$ and $g$ in (\ref{eqn:vecbundsequence}) can be identified with sections
of $\Hom(\bfW_1,\bfW_2)$ and $\Hom(\bfW_2,\bfW_3)$, respectively. Similarly, the splitting $a$ 
can be identified with a section of a subbundle of $\Hom(\bfW_3,\bfW_2)$, which we want to study now.
We define the submanifold $\bfA:= \big\{a\in \Hom(\bfW_3,\bfW_2) : g\circ a = \id_{\bfW_3\vert_{\rho_{32}(a)}}\big\}$
and consider the induced subbundle $\big(\bfA,M,\pi,A\big)\subseteq
 \big(\Hom(\bfW_3,\bfW_2),M,\rho_{32},\Hom_\bbR(W_3,W_2)\big)$.
The fibre bundle $\bfA$ is an affine bundle over $\big(\Hom(\bfW_3,\bfW_1),M,\rho_{31},\Hom_\bbR(W_3,W_1)\big)$.
This can be shown by using local vector bundle charts of the homomorphism bundles
and arguments as in Example \ref{ex:seq}.

A particular example of a short exact sequence of vector bundles (\ref{eqn:vecbundsequence}) is the Atiyah
sequence \cite{Atiyah,Lopez} associated to a principal bundle. Splittings of this sequence
are in one-to-one correspondence with principal connections, hence, the affine bundle
$\big(M,\bfA,\bfV\big)$ associated to the Atiyah sequence is of utmost importance in the study of
gauge theories. For a study of the Maxwell field on curved spacetimes using this affine bundle formulation
we refer to our forthcoming work \cite{BDSMaxwell}.
\end{ex}
\sk

The vector dual bundle is characterized by a standard construction:
\begin{defi}
Let  $\big(M,(\bfA,M,\pi,A),(\bfV,M,\rho,V)\big)$ be an affine bundle. The {\bf vector dual bundle} is the vector bundle
specified by $(\bfA^\dagger,M,\pi^\dagger,A^\dagger)$, where $A^\dagger$ is the vector dual of $A$, 
$\bfA^\dagger :=\bigcup_{x\in M} \bfA\vert_x^\dagger$ and $\pi^\dagger:\bfA^\dagger \to M$
is defined by, for all $\varphi\in\bfA\vert_x^\dagger$ and $x\in M$, $\pi^\dagger(\varphi) = x$.
If it is convenient we also write simply $\bfA^\dagger$ for the vector dual bundle.
\end{defi}

\sk

Let $\big(M_i,\bfA_i,\bfV_i\big)$, $i=1,2,3$, be affine bundles,
$(f,\underline{f})$ an affine bundle map between $\big(M_1,\bfA_1,\bfV_1\big)$
and $\big(M_2,\bfA_2,\bfV_2\big)$ and $(g,\underline{g})$ an affine bundle map between
$\big(M_2,\bfA_2,\bfV_2\big)$ and $\big(M_3,\bfA_3,\bfV_3\big)$. 
The composition $(g,\underline{g})\circ (f,\underline{f}):=
(g\circ f,\underline{g}\circ\underline{f})$ is again an affine bundle map. The linear part is
 $((g\circ f)_\bfV,\underline{g}\circ\underline{f})=
 (g_\bfV\circ f_\bfV,\underline{g}\circ\underline{f})= (g_\bfV,\underline{g})\circ(f_\bfV,\underline{f})$.
This allows us to define a category of affine bundles. We use a restricted class of morphisms, since this
is the structure we later require.
\begin{defi}
The category $\AffBund$ consists of the following objects and morphisms: 
\begin{itemize}
\item An object in $\AffBund$ is an affine bundle  $\big(M,(\bfA,M,\pi,A),(\bfV,M,\rho,V)\big)$.
\item A morphism between two objects  $\big(M_i,(\bfA_i,M_i,\pi_i,A_i),(\bfV_i,M_i,\rho_i,V_i)\big)$, $i=1,2$,  in $\AffBund$ 
is an affine bundle map $(f:\bfA_1\to \bfA_2,\underline{f}:M_1\to M_2)$, such that
$\underline{f}$ is an embedding, $\underline{f}[M_1] \subseteq M_2$ is open,
 and, for all $x\in M_1$, $f\vert_x:\bfA_1\vert_x\to\bfA_2\vert_{\underline{f}(x)}$ is an affine isomorphism.
\end{itemize}
The category $\VecBund$ consists of the following objects and morphisms:
\begin{itemize}
\item An object in $\VecBund$ is a vector bundle $(\bfV,M,\rho,V)$.
\item A morphism between two objects $(\bfV_i,M_i,\rho_i,V_i)$, $i=1,2$, in $\VecBund$
is a vector bundle map $(f:\bfV_1\to\bfV_2,\underline{f}:M_1\to M_2)$, such that
$\underline{f}$ is an embedding, $\underline{f}[M_1] \subseteq M_2$ is open,
 and, for all $x\in M_1$, $f\vert_x:\bfV_1\vert_x\to\bfV_2\vert_{\underline{f}(x)}$ is a linear isomorphism.
\end{itemize}
\end{defi}
\sk

\begin{rem}\label{rem:vecinv}
Let $(f,\underline{f})$ be a morphism in $\AffBund$.
From the property that, for all $x\in M_1$, $f\vert_x:\bfA_1\vert_x\to\bfA_2\vert_{\underline{f}(x)}$ is an affine isomorphism
it follows that, for all $x\in M_1$, $f_\bfV\vert_x={f\vert_x}_V:\bfV_1\vert_x\to\bfV_2\vert_{\underline{f}(x)}$
 is a linear isomorphism.
\end{rem}
\sk

The association of the underlying vector bundle is functorial:
\begin{lem}\label{lem:linearpartfunctor}
There is a covariant functor $\mathfrak{LinBund}:\AffBund\to\VecBund$.  It is specified on objects by 
$\mathfrak{LinBund}\big(M,\bfA,\bfV\big) =\bfV$ and on morphisms
by $\mathfrak{LinBund}(f,\underline{f}) = (f_\bfV,\underline{f})$.
\end{lem}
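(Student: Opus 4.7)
The proof is a bundle-level analogue of the functoriality statement already verified for $\mathfrak{Lin}:\Aff\to\Vec$, and most of the work has already been done in preceding remarks. My plan is to check in turn: (i) well-definedness on objects, (ii) well-definedness on morphisms, (iii) preservation of identities and composition.

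For (i), well-definedness on objects is immediate: the assignment $(M,\bfA,\bfV)\mapsto \bfV$ simply forgets the affine bundle $\bfA$ and retains the underlying vector bundle, which is a vector bundle by definition.

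For (ii), I would argue that $(f_\bfV,\underline{f})$ is indeed a morphism in $\VecBund$. By Remark \ref{rem:uniquelinmap}, every affine bundle map $(f,\underline{f})$ determines a unique vector bundle map $(f_\bfV,\underline{f})$ with fibrewise action given by $f_\bfV\vert_x = {f\vert_x}_V$. When $(f,\underline{f})$ is a morphism in $\AffBund$, the base map $\underline{f}:M_1\to M_2$ is by hypothesis an embedding with open image, so the same holds for the pair $(f_\bfV,\underline{f})$. Finally, the fibrewise linear isomorphism condition is precisely the content of Remark \ref{rem:vecinv}, since $f\vert_x:\bfA_1\vert_x\to\bfA_2\vert_{\underline{f}(x)}$ being an affine isomorphism forces its linear part ${f\vert_x}_V=f_\bfV\vert_x$ to be a linear isomorphism between the corresponding vector space fibres.

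For (iii), preservation of the identity is obvious: the linear part of $\id_{\bfA}$ is $\id_{\bfV}$, since on each fibre the identity affine map has identity linear part. For composition, given morphisms $(f,\underline{f}):\bfA_1\to\bfA_2$ and $(g,\underline{g}):\bfA_2\to\bfA_3$ in $\AffBund$, the computation just before Definition of $\AffBund$ in the excerpt already records that $(g\circ f)_\bfV = g_\bfV\circ f_\bfV$, so
\begin{flalign*}
\mathfrak{LinBund}\big((g,\underline{g})\circ (f,\underline{f})\big) = \big((g\circ f)_\bfV,\underline{g}\circ\underline{f}\big)=(g_\bfV,\underline{g})\circ(f_\bfV,\underline{f})=\mathfrak{LinBund}(g,\underline{g})\circ \mathfrak{LinBund}(f,\underline{f})~.
\end{flalign*}
There is no real obstacle here; the only subtlety worth highlighting is the use of Remark \ref{rem:vecinv}, which guarantees that the morphism restrictions built into the definition of $\AffBund$ correctly match those of $\VecBund$ — i.e.\ that the functor does not send $\AffBund$-morphisms to merely vector-bundle maps that fail to be embeddings or fibrewise isomorphisms.
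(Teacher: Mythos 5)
Your proof is correct and follows exactly the route the paper takes: its own proof is the one-line remark that the result ``follows from the composition property of affine bundle maps and Remark \ref{rem:vecinv}'', which is precisely the content you spell out in detail. No discrepancies to report.
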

\begin{proof}
Follows from the composition property of affine bundle maps and Remark \ref{rem:vecinv}.
\end{proof}

Also the association of the vector dual bundle is functorial:
\begin{lem}\label{lem:dualbundfunctor}
There is a covariant functor $\mathfrak{DualBund}:\AffBund\to\VecBund$. It is specified on objects by 
$\mathfrak{DualBund}\big(M,\bfA,\bfV\big) =\bfA^\dagger$ and on morphisms
by $\mathfrak{DualBund}(f,\underline{f}) = (f^\dagger:\bfA_1^\dagger\to\bfA_2^\dagger,\underline{f}:M_1\to M_2)$,
 which is the vector bundle map defined by, for all $x\in M_1$, 
 \begin{flalign}
 f^\dagger\vert_x: \bfA_1^\dagger\vert_x \to \bfA_2^\dagger\vert_{\underline{f}(x)}~,~~\varphi \mapsto f^\dagger\vert_x(\varphi)=\varphi\circ f\vert_x^{-1}~.
 \end{flalign}
\end{lem}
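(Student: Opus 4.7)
My plan is to check three things in turn: that $\mathfrak{DualBund}$ is well-defined on objects (i.e.\ that $\bfA^\dagger$ genuinely carries a vector bundle structure with typical fibre $A^\dagger$), that $(f^\dagger,\underline{f})$ is a $\VecBund$-morphism whenever $(f,\underline{f})$ is an $\AffBund$-morphism, and finally that identities and compositions are preserved.

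For the object part, each fibre $\bfA^\dagger\vert_x=\bfA\vert_x^\dagger$ is a vector space of dimension $\dim(V)+1$ by Lemma \ref{lem:vectordual}, matching $\dim(A^\dagger)$. To build local trivializations I would start from a local affine bundle chart $(U,\psi,\psi_\bfV)$ of $\bfA$ as in Definition \ref{def:affinebundle}. For every $y\in U$ the map $\psi\vert_y:\bfA\vert_y\to A$ is an affine isomorphism, so Lemma \ref{lem:dualfunctor} yields a linear isomorphism $(\psi\vert_y)^\dagger:\bfA\vert_y^\dagger\to A^\dagger$ sending $\varphi\mapsto\varphi\circ\psi\vert_y^{-1}$. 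Gluing these together gives a candidate bundle chart $(\pi^\dagger)^{-1}[U]\to U\times A^\dagger$, and smooth compatibility between any two such charts reduces fibrewise to dualizing an affine automorphism of $A$; this depends polynomially on the underlying affine-chart data (the translation parts and the linear parts) and is therefore smooth. This endows $\bfA^\dagger$ with the requisite vector bundle atlas.

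For the morphism part, let $(f,\underline{f})$ be a morphism in $\AffBund$. By definition each $f\vert_x$ is an affine isomorphism, so Lemma \ref{lem:dualfunctor} supplies, for each $x\in M_1$, a linear isomorphism $f^\dagger\vert_x:\bfA_1^\dagger\vert_x\to\bfA_2^\dagger\vert_{\underline{f}(x)}$, and the open-embedding condition on $\underline{f}$ is inherited. The substantive step I expect to be the main obstacle is verifying smoothness of the total-space map $f^\dagger$; I would check this locally by transporting the problem through the vector bundle charts on $\bfA_i^\dagger$ built above, where $f^\dagger\vert_x$ becomes precomposition with the smoothly $x$-dependent affine isomorphism $f\vert_x^{-1}$ acting on elements of $A^\dagger$, a manifestly smooth operation in the local coordinates of $f$.

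Finally, functoriality is a routine fibrewise translation of Lemma \ref{lem:dualfunctor}. For the identity, $\mathfrak{DualBund}(\id_\bfA,\id_M)\vert_x=\mathfrak{Dual}(\id_{\bfA\vert_x})=\id_{\bfA\vert_x^\dagger}$. For composable morphisms, the pointwise identity $(g\circ f)\vert_x^{-1}=f\vert_x^{-1}\circ g\vert_{\underline{f}(x)}^{-1}$ yields, after precomposition with any $\varphi\in\bfA_1^\dagger\vert_x$, the relation $(g\circ f)^\dagger\vert_x=g^\dagger\vert_{\underline{f}(x)}\circ f^\dagger\vert_x$, which is exactly $\bigl(\mathfrak{DualBund}(g,\underline{g})\circ\mathfrak{DualBund}(f,\underline{f})\bigr)\vert_x$.
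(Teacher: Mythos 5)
Your proposal is correct and follows the same route as the paper, which simply remarks that all functor properties are shown analogously to Lemma \ref{lem:dualfunctor}, i.e.\ by applying the fibrewise dualization $\varphi\mapsto\varphi\circ f\vert_x^{-1}$ at each point. You additionally spell out the vector bundle structure on $\bfA^\dagger$ and the smoothness checks that the paper leaves implicit in the definition of the vector dual bundle; these details are accurate and consistent with the paper's construction.
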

\begin{proof}
All functor properties are shown analogously to Lemma \ref{lem:dualfunctor}.
\end{proof}


\subsection{\label{sec:notation3}Sections of affine bundles}
Before studying sections of affine bundles we remind the reader of the following construction:
Given two fibre bundles $(E_1,M,\pi_1,F_1)$ and $(E_2,M,\pi_2,F_2)$ over the same 
base space $M$, we can consider their fibred product bundle $(E_1\times_ME_2,M,\Pi,F_1\times F_2)$.
The total space of this fibre bundle is defined by
 $E_1\times_M E_2 := \big\{(e_1,e_2)\in E_1\times E_2 : \pi_1(e_1) = \pi_2(e_2)\big\}\subset E_1\times E_2$
 and the projection by $\Pi:E_1\times_ME_2 \to M\,,~(e_1,e_2)\mapsto \Pi(e_1,e_2) = \pi_1(e_1) = \pi_2(e_2)$.
 This fibre bundle is locally trivializable as follows: Take any $x\in M$ and local bundle charts  $(U_i,\psi_i)$
  of $(E_i,M,\pi_i,F_i)$, $i=1,2$. We obtain a local bundle chart $(U_1\cap U_2,\psi_1\times_M\psi_2)$ of
  $(E_1\times_ME_2,M,\Pi,F_1\times F_2)$ by setting
\begin{flalign}
\nn\psi_1\times_M\psi_2 :\Pi^{-1}[U_1\cap U_2] &\to U_1\cap U_2\times F_1\times F_2~,~~\\
(e_1,e_2)&\mapsto\psi_1\times_M\psi_2(e_1,e_2)= \big(\mathrm{pr}_{1}(\psi_1(e_1)),\mathrm{pr}_2(\psi_1(e_1)),\mathrm{pr}_2(\psi_2(e_2))\big)~.
\end{flalign}
Let $(\bfV,M,\rho,V)$ be a vector bundle. Due to the abelian group structure on the fibres $\bfV\vert_x$, for all
$x\in M$, there is a canonical fibre bundle map  (covering $\id_M$)
\begin{flalign}
+:\bfV\times_M\bfV\to \bfV ~,~~(v,w)\mapsto v+w~.
\end{flalign}
Let now $\big(M,(\bfA,M,\pi,A),(\bfV,M,\rho,V)\big)$ be an affine bundle. Similar to above, we have 
the canonical fibre bundle maps (covering $\id_M$)
\begin{subequations}
\begin{flalign}
-:\bfA\times_M\bfA\to \bfV ~,~~(a,b)\mapsto a-b~,
\end{flalign}
and
\begin{flalign}
\Phi: \bfA\times_M\bfV \to \bfA~,~~(a,v)\mapsto a+v~.
\end{flalign}
\end{subequations}
The following diagram of fibre bundle maps (covering $\id_M$) commutes
\begin{flalign}\label{eqn:affbundopcd}
\xymatrix{
\ar[d]_-{\id_\bfA \times +}\bfA\times_M\bfV\times_M\bfV \ar[rr]^-{\Phi\times \id_\bfV} & & \ar[d]^-{\Phi}\bfA\times_M\bfV\\
\bfA\times_M\bfV \ar[rr]^-{\Phi} & & \bfA
}
\end{flalign}
Let $(\bfA^\dagger,M,\pi^\dagger,A^\dagger)$ be the vector dual bundle. There is a canonical
fibre  bundle map (covering $\id_M$)
\begin{flalign}\label{eqn:evmap}
\mathrm{ev}: \bfA^\dagger\times_M\bfA \to M\times \bbR~,~~(\varphi,a)\mapsto (\pi(a),\varphi(a))~.
\end{flalign}
We shall also denote the evaluation of $\varphi$ on $a$ simply by $\varphi(a)$.
\sk

Let  $\big(M,(\bfA,M,\pi,A),(\bfV,M,\rho,V)\big)$ be an affine bundle.
We denote by $\sect{M}{\bfA}$ the set of sections of the fibre bundle $(\bfA,M,\pi,A)$
and by $\sect{M}{\bfV}$ the $C^\infty(M)$-module of sections of
the underlying vector bundle $(\bfV,M,\rho,V)$. 
As a consequence of Remark \ref{rem:affinevecdiffeo} and \cite[Chapter I, Theorem 5.7]{Kobayashi},
the set of sections $\sect{M}{\bfA}$ is never empty.
We define the map $\Phi_\Gamma: \sect{M}{\bfA}\times \sect{M}{\bfV} \to\sect{M}{\bfA}$
by, for all $s\in\sect{M}{\bfA}$, $\sigma\in\sect{M}{\bfV}$ and $x\in M$, $\Phi_\Gamma(s,\sigma)(x) :=
(s + \sigma)(x) := s(x)+\sigma(x)$.
Notice that the following diagram commutes
\begin{flalign}
\xymatrix{
\bfA\times_M\bfV \ar[rr]^-{\Phi} & & \bfA\\
M \ar[u]^-{s\times_M\sigma} \ar[urr]_-{\qquad \Phi_\Gamma(s,\sigma)=:s+\sigma}& & 
}
\end{flalign}
where $s\times_M\sigma$ is the section specified by, for all $x\in M$, $(s\times_M\sigma)(x) := \big(s(x),\sigma(x)\big)$.
\begin{lem}
The triple
$\big(\sect{M}{\bfA},\sect{M}{\bfV},\Phi_\Gamma\big)$ is an (infinite-dimensional) affine space.
\end{lem}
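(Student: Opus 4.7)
The plan is to verify the three defining properties of an affine space from Definition \ref{affinespace}(i): the set $\sect{M}{\bfA}$ is nonempty (which is already invoked from Kobayashi), $\sect{M}{\bfV}$ is a vector space (standard, being even a $C^\infty(M)$-module), and $\Phi_\Gamma$ is a free and transitive action of the abelian group $(\sect{M}{\bfV},+)$ on $\sect{M}{\bfA}$.

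First, I would check that $\Phi_\Gamma$ is well-defined, i.e.~that $s+\sigma$ is actually a smooth section of $\bfA$ whenever $s\in\sect{M}{\bfA}$ and $\sigma\in\sect{M}{\bfV}$. This follows immediately because $s+\sigma = \Phi\circ(s\times_M\sigma)$, where $s\times_M\sigma:M\to\bfA\times_M\bfV$ is smooth (as the fibred product of smooth sections) and the fibre bundle map $\Phi:\bfA\times_M\bfV\to\bfA$ is smooth by construction. Projection onto $M$ yields $\pi\circ(s+\sigma)=\id_M$, so $s+\sigma$ is indeed a section.

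Next, I would verify the group action axioms pointwise. For the identity element $0\in\sect{M}{\bfV}$ (the zero section), we get $(s+0)(x) = s(x)+0_{\bfV\vert_x} = s(x)$ by the affine space structure on each fibre $\bfA\vert_x$. Associativity $(s+\sigma)+\tau = s+(\sigma+\tau)$ follows from evaluating at each $x\in M$ and using that $\bfA\vert_x$ is an affine space modeled on $\bfV\vert_x$; at the bundle level this is precisely the commutative diagram (\ref{eqn:affbundopcd}) applied to the section $s\times_M\sigma\times_M\tau$.

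Freeness and transitivity both reduce to the fibrewise affine structure. For freeness, if $s+\sigma = s$ then for each $x\in M$ we have $s(x)+\sigma(x) = s(x)$ in $\bfA\vert_x$, so freeness of the fibre action forces $\sigma(x)=0$, whence $\sigma=0$. For transitivity, given $s,s^\prime\in\sect{M}{\bfA}$, define $\sigma := s^\prime - s$ via the smooth fibre bundle map $-:\bfA\times_M\bfA\to\bfV$, more precisely $\sigma = (-)\circ(s^\prime\times_M s)$; this is a smooth section of $\bfV$, and by fibrewise transitivity it satisfies $s+\sigma = s^\prime$. The only mildly delicate point, which I would address by appealing to smoothness of the fibre bundle maps $+$, $-$ and $\Phi$ (together with the fact that the fibred product of smooth sections is smooth), is the regularity of the newly constructed sections; no infinite-dimensional analysis is actually required since all statements are pointwise.
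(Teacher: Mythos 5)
Your proof is correct and follows essentially the same route as the paper: both verify the group action axioms pointwise using the fibrewise affine structure, and both construct the section $\sigma = (-)\circ(s'\times_M s)$ via the smooth fibre bundle map $-:\bfA\times_M\bfA\to\bfV$ to get transitivity. Your additional remarks on well-definedness and smoothness of $s+\sigma$ are a harmless (and slightly more careful) elaboration of what the paper leaves implicit.
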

\begin{proof}
Clearly $\sect{M}{\bfA}$ is a set and $\sect{M}{\bfV}$ a vector space. The map $\Phi_\Gamma$ is an action
of the abelian group  $\big(\sect{M}{\bfV},+\big)$ on $\sect{M}{\bfA}$. We show that this action is transitive:
Let $s,s^\prime\in\sect{M}{\bfA}$ be arbitrary. We define an element $\sigma\in\sect{M}{\bfV}$ via the following commutative
diagram
\begin{flalign}
\xymatrix{
\bfA\times_M\bfA \ar[rr]^-{-} & &\bfV\\
M\ar[u]^-{s^\prime\times_M s} \ar[urr]_-{\sigma} &&
}
\end{flalign}
That is, for all $x\in M$, $\sigma(x) = s^\prime(x) -s(x)$. Hence, for all $x\in M$,
$\Phi_\Gamma(s,\sigma)(x) = s(x) +\sigma(x) = s^\prime(x)$.
The action is free, since given any $s\in \sect{M}{\bfA}$ and $\sigma\in \sect{M}{\bfV}$,
the condition $s = \Phi_\Gamma(s,\sigma)$ implies that $\sigma =0$.
\end{proof}
\sk

We can also consider the $C^\infty(M)$-module of sections $\sect{M}{\bfA^\dagger}$ of the vector dual bundle.
The canonical evaluation map
(\ref{eqn:evmap})  induces a point-wise evaluation map
on sections (denoted with a slight abuse of notation by the same symbol)
 \begin{flalign}
\mathrm{ev}:\sect{M}{\bfA^\dagger}\times \sect{M}{\bfA}\to C^\infty(M)~,~~(\varphi,s)\mapsto \varphi(s)~.
\end{flalign} Furthermore, due to Remark \ref{rem:uniquelinmap}, there exists
 for every $\varphi\in\sect{M}{\bfA^\dagger}$ a linear part $\varphi_{\bfV^\ast} \in \sect{M}{\bfV^\ast}$, with 
 $\bfV^\ast$ denoting the dual vector bundle of $\bfV$, such that for all $s\in \sect{M}{\bfA}$ and $\sigma\in \sect{M}{\bfV}$,
 \begin{flalign}\label{eqn:sectionaff}
 \varphi(s+\sigma) = \varphi(s)+ \varphi_{\bfV^\ast}(\sigma)~.
 \end{flalign}
 If we assume that the vector bundle $\bfV$ comes with a nondegenerate bilinear
 form $\ip{~}{~}$, the dual vector bundle $\bfV^\ast$ can be canonically identified with the vector bundle $\bfV$. 
 Correspondingly, we can also identify $\sect{M}{\bfV^\ast}$  with $\sect{M}{\bfV}$ and 
 for every $\varphi\in\sect{M}{\bfA^\dagger}$ there exists a $\varphi_\bfV\in \sect{M}{\bfV}$ (also called the linear part), such that
  for all $s\in \sect{M}{\bfA}$ and $\sigma\in \sect{M}{\bfV}$,
  \begin{flalign}
  \varphi(s+\sigma) = \varphi(s) + \ip{\varphi_\bfV}{\sigma}~.
  \end{flalign}

 \begin{rem}\label{rem:notseparating}
Let us assume that there exists a volume form $\vol$ on $M$, i.e.~that $M$ is oriented.
We make two comments which will be of importance later in this work.
First, consider $s,s^\prime\in \sect{M}{\bfA}$ such that the following equation holds, for 
all $\varphi\in\sectn{M}{\bfA^\dagger}$,
\begin{flalign}
\int_M \vol~\varphi(s) = \int_M\vol~\varphi(s^\prime)~.
\end{flalign}
It follows that $s=s^\prime$, i.e.~$\sectn{M}{\bfA^\dagger}$ is separating on $\sect{M}{\bfA}$, even when integrated.
Second, let $\varphi\in\sectn{M}{\bfA^\dagger}$ be such that the following equation holds true,
for all $s\in \sect{M}{\bfA}$,
\begin{flalign}\label{eqn:argumenttemp}
\int_M\vol~\varphi(s) =0~.
\end{flalign}
We fix any section $\widehat{s}\in \sect{M}{\bfA}$ and set $s-\widehat{s}=:\sigma$.
Then (\ref{eqn:argumenttemp}) equivalently gives the condition, for all $\sigma\in\sect{M}{\bfV}$,
\begin{flalign}
\int_M\vol~\varphi(\widehat{s}) + \int_M\vol~\varphi_{\bfV^\ast}(\sigma) =0~.
\end{flalign}
This implies that the integral $\int_M\vol~\varphi(\widehat{s}) =0$ has to vanish and  that 
the linear part $\varphi_{\bfV^\ast}=0$ is zero. As a consequence,
$\varphi = a\,\1$, where $a\in C^\infty_0(M)$ is a compactly supported function, which lies in the
kernel of the integral, i.e.~$\int_M\vol\,a =0$.
By $\1$ we denote the section of $\sect{M}{\bfA^\dagger}$ which associates to every point $x\in M$
the distinguished element (the constant affine map $\1_x:\bfA\vert_x\to\bbR$) 
in the fibre $\bfA^\dagger\vert_x$,
i.e.~$\1(x) :=\1_x$.
Thus, $\sect{M}{\bfA}$ is {\it not} separating on $\sectn{M}{\bfA^\dagger}$ when integrated, and
the degeneracy is characterized by the vector space of compactly supported functions with vanishing integral,
$\Ker(\int_M\vol~):=\big\{a\in C^\infty_0(M) : \int_M\vol~a=0\big\}$.
\end{rem}
\sk
  
To close this subsection we investigate the functoriality of assigning the
$C_0^\infty(M)$-module of compactly supported sections $\sectn{M}{\bfA^\dagger}$ to an affine bundle
$\big(M,(\bfA,M,\pi,A),(\bfV,M,\rho,V)\big)$.
\begin{defi}
The category $\Mod$ consists of the following objects and morphisms:
\begin{itemize}
\item An object in $\Mod$ is a pair $(\AA,\MM)$, where $\AA$ is an algebra and
$\MM$ is a right $\AA$-module.
\item A morphism between two objects $(\AA_1,\MM_1)$ and $(\AA_2,\MM_2)$ in $\Mod$
is a pair $(\xi,\Xi)$ consisting of an algebra homomorphism $\xi:\AA_1\to\AA_2$ and a linear 
map $\Xi:\MM_1\to \MM_2$, such that for all $s\in\MM_1$ and $a\in\AA_1$,
$\Xi(s\cdot a) = \Xi(s)\cdot\xi(a)$.
\end{itemize}
\end{defi}
\sk

We now define the functor $\DualConf:\AffBund\to \Mod$, which we interpret as associating
to an affine bundle the dual of the configuration space. 
\begin{lem}
There is a covariant functor $\DualConf:\AffBund\to\Mod$. It is specified on objects
by $\DualConf(M,\bfA,\bfV) = \big(C_0^\infty(M),\sectn{M}{\bfA^\dagger}\big)$ and on morphisms
by $\DualConf(f,\underline{f}) = \big(\underline{f}_\ast,{f^\dagger}_\ast\big)$, where 
$\underline{f}_\ast:C^\infty_0(M_1) \to C^\infty_0(M_2)$ is the
push-forward  of compactly supported functions, for all $a\in C^\infty_0(M_1)$ and $x\in M_2$,
\begin{subequations}
\begin{flalign}\label{eqn:pushforwardfunct}
 \big(\underline{f}_\ast(a)\big)(x):=\begin{cases}
a\big(\underline{f}^{-1}(x)\big) & ,~\text{if }x\in \underline{f}[M_1]~,\\
0 & ,~\text{else}~,
\end{cases} 
\end{flalign}
and ${f^\dagger}_\ast:\sectn{M_1}{\bfA^\dagger_1}\to \sectn{M_2}{\bfA^\dagger_2}$ 
is the push-forward of compactly supported sections, 
for all $\varphi\in\sectn{M_1}{\bfA^\dagger_1}$ and $x\in M_2$,
\begin{flalign}\label{eqn:pushforward}
 \big({f^\dagger}_\ast(\varphi)\big)(x):=\begin{cases}
f^\dagger\big(\varphi(\underline{f}^{-1}(x))\big) & ,~\text{if }x\in \underline{f}[M_1]~,\\
0 & ,~\text{else}~.
\end{cases}
\end{flalign}
\end{subequations}
\end{lem}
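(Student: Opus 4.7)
The plan is to verify in turn that $\DualConf$ lands in $\Mod$ on objects, that the assigned morphism is a well-defined $\Mod$-morphism, and finally that it preserves identities and compositions.

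\textbf{Well-definedness on objects.} For any object $(M,\bfA,\bfV)$, I would observe that $C^\infty_0(M)$ is an associative commutative $\bbR$-algebra under pointwise multiplication, and that $\sectn{M}{\bfA^\dagger}$ is a right $C^\infty_0(M)$-module under pointwise multiplication in the vector bundle $\bfA^\dagger$. Compact support is preserved because $\supp(\varphi\cdot a)\subseteq\supp(\varphi)\cap\supp(a)$, so the module operation is well-defined. Hence $(C_0^\infty(M),\sectn{M}{\bfA^\dagger})$ is an object of $\Mod$.

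\textbf{Well-definedness on morphisms.} The delicate point is smoothness of the pushforwards at points of $\partial \underline{f}[M_1]\subseteq M_2$. Here I would use that $\underline{f}$ is an embedding with $\underline{f}[M_1]\subseteq M_2$ open, hence a diffeomorphism onto its image. For $a\in C^\infty_0(M_1)$ with support $K:=\supp(a)$, the set $\underline{f}[K]$ is compact, thus closed in $M_2$ since $M_2$ is Hausdorff; outside this closed set both $\underline{f}_\ast(a)$ and its candidate derivatives vanish, so smoothness on $\underline{f}[M_1]$ (inherited from the diffeomorphism) extends to all of $M_2$. The same argument with $K:=\supp(\varphi)$ works for ${f^\dagger}_\ast(\varphi)$, recalling that $f^\dagger$ is a smooth vector bundle morphism by Lemma \ref{lem:dualbundfunctor}. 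Both pushforwards inherit compact support from $\underline{f}[K]$. Next, $\underline{f}_\ast$ is an algebra homomorphism: on $\underline{f}[M_1]$ the equality $\underline{f}_\ast(ab)=\underline{f}_\ast(a)\underline{f}_\ast(b)$ follows from pointwise multiplication composed with the bijection $\underline{f}^{-1}$, and both sides vanish on the complement. Linearity of ${f^\dagger}_\ast$ follows fibrewise from the linearity of each $f^\dagger\vert_x$ established in Lemma \ref{lem:dualbundfunctor}. The module compatibility $\,{f^\dagger}_\ast(\varphi\cdot a)={f^\dagger}_\ast(\varphi)\cdot\underline{f}_\ast(a)\,$ reduces, on $\underline{f}[M_1]$, to the identity $f^\dagger\bigl(\varphi(\underline{f}^{-1}(x))\,a(\underline{f}^{-1}(x))\bigr)=f^\dagger\bigl(\varphi(\underline{f}^{-1}(x))\bigr)\,a(\underline{f}^{-1}(x))$, which is again fibrewise linearity of $f^\dagger$, while on the complement both sides vanish.

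\textbf{Functoriality.} For the identity morphism $(\id_\bfA,\id_M)$ in $\AffBund$, the formulas (\ref{eqn:pushforwardfunct}) and (\ref{eqn:pushforward}) reduce to the identity, using that $\id_M^{-1}=\id_M$ and $\id^\dagger_\bfA=\id_{\bfA^\dagger}$ by Lemma \ref{lem:dualbundfunctor}. For composable morphisms $(f,\underline{f})\colon(M_1,\bfA_1,\bfV_1)\to(M_2,\bfA_2,\bfV_2)$ and $(g,\underline{g})\colon(M_2,\bfA_2,\bfV_2)\to(M_3,\bfA_3,\bfV_3)$, I would verify $(\underline{g}\circ\underline{f})_\ast=\underline{g}_\ast\circ\underline{f}_\ast$ and ${(g\circ f)^\dagger}_\ast={g^\dagger}_\ast\circ{f^\dagger}_\ast$ by a direct case distinction on whether a point $x\in M_3$ lies in $(\underline{g}\circ\underline{f})[M_1]$ or not, invoking the functoriality of $\mathfrak{DualBund}$ (Lemma \ref{lem:dualbundfunctor}) for the fibrewise identity $(g\circ f)^\dagger=g^\dagger\circ f^\dagger$.

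\textbf{Main obstacle.} The only nontrivial step is the smoothness of the pushforwards across $\partial\underline{f}[M_1]$; everything else is bookkeeping that reduces to pointwise algebra, fibrewise linearity of $f^\dagger$, and the already established functoriality of $\mathfrak{DualBund}$.
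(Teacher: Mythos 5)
Your proposal is correct and follows essentially the same route as the paper: the only computation the paper writes out is the module-compatibility identity ${f^\dagger}_\ast(\varphi\,a)={f^\dagger}_\ast(\varphi)\,\underline{f}_\ast(a)$, which reduces to fibre-wise linearity of $f^\dagger$ exactly as you argue, with the functor properties deferred to Lemma \ref{lem:dualbundfunctor}. You additionally spell out the smoothness of the push-forwards across $\partial\underline{f}[M_1]$ and the object-level checks, which the paper leaves implicit; these details are correct.
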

\begin{proof}
The pair $\big(\underline{f}_\ast,{f^\dagger}_\ast\big)$ is a morphism in $\Mod$, for all $\varphi\in\sectn{M_1}{\bfA^\dagger_1} $,
$a\in C^\infty_0(M_1)$ and $x\in M_2$,
\begin{flalign}
\big({f^\dagger}_\ast(\varphi\,a)\big)(x)=\begin{cases}
f^\dagger\big(\varphi(\underline{f}^{-1}(x))\,a(\underline{f}^{-1}(x))\big) & ,~\text{if }x\in \underline{f}[M_1]~,\\
0 & ,~\text{else}~.
\end{cases} = \big({f^\dagger}_\ast(\varphi)\,\underline{f}_\ast(a)\big)(x)~,
\end{flalign}
since $f^\dagger$ is a fibre-wise linear map. The functor properties are easily verified by using
Lemma \ref{lem:dualbundfunctor}.
\end{proof}


\subsection{\label{sec:notation4}Differential operators on vector bundles}
Let $(\bfV_i,M,\rho_i,V_i)$, $i=1,2$, be two  vector bundles over the same $n$-dimensional base space $M$. 
A {\bf differential operator} is a linear map $P:\sect{M}{\bfV_1} \to \sect{M}{\bfV_2}$,
which in local coordinates $(x_1,\dots, x_n)$ and a local vector bundle chart of $\bfV_1$ and $\bfV_2$ looks
like
\begin{flalign}
P= \sum\limits_{\vert\alpha\vert \leq k} P^\alpha(x)\,\frac{\partial^{\vert\alpha\vert}}{\partial x^\alpha}~.
\end{flalign}
Here $k\in\bbN_0$ is the order of $P$, $\alpha \in \bbN_0^{n}$ is a multi-index and 
$\vert\alpha\vert  = \alpha_1+\dots +\alpha_n$ is its length.
Furthermore, $P^\alpha$ are local functions  with values in the linear homomorphisms
from $V_1$ to $V_2$ (the typical fibres)
and $\frac{\partial^{\vert\alpha\vert}}{\partial x^\alpha} := 
\frac{\partial^{\vert\alpha\vert}}{\partial x_1^{\alpha_1}\cdots \partial x_n^{\alpha_n}}$.

Let $M$ be equipped with a volume form $\vol$. Any differential operator is formally 
adjoinable to a differential operator (of the same order)
$P^\ast: \sect{M}{\bfV_2^\ast} \to\sect{M}{\bfV_1^\ast}$ (see e.g.~\cite[Proposition 1.2.12 and Theorem 1.2.15]{Waldmann} 
for a proof). 
Here $\bfV_i^\ast$ denotes the dual vector bundle
 of $\bfV_i$, $i=1,2$. The formal adjoint operator is uniquely specified by the equation,
 for all $\sigma \in \sect{M}{\bfV_1}$ and $\varphi\in\sect{M}{\bfV_2^\ast}$ with compact overlapping support,
 \begin{flalign}
 \int_M\vol ~\varphi\big(P(\sigma)\big) = \int_M\vol~\big(P^\ast(\varphi)\big)(\sigma)~.
 \end{flalign}
 
 Let us now consider a vector bundle $(\bfV,M,\rho,V)$ which is endowed with a nondegenerate bilinear form
 $\ip{~}{~}$. We can canonically identify $\bfV^\ast$ with $\bfV$ and we say that 
 a differential operator $P:\sect{M}{\bfV}\to \sect{M}{\bfV}$ is {\bf formally self-adjoint} (with respect to $\ip{~}{~}$)
 if its formal adjoint agrees with $P$ under this identification. Explicitly, we say that $P$ is formally self-adjoint
 if, for all $\sigma,\sigma^\prime\in\sect{M}{\bfV}$ with compact overlapping support,
 \begin{flalign}\label{eqn:formadjbilinform}
 \int_M\vol~\ip{\sigma}{P(\sigma^\prime)} = \int_M\vol~\ip{P(\sigma)}{\sigma^\prime}~.
 \end{flalign}


\subsection{\label{sec:notation5}Globally hyperbolic spacetimes and Green-hyperbolic operators\\ on vector bundles}
We refer to \cite{Bar:2007zz,Bar:2011iu,Waldmann} for a more detailed exposition of the content of this subsection.

A {\bf Lorentzian manifold} is an oriented manifold $M$ equipped with a Lorentzian metric $g$
of signature $(-,+,\dots,+)$. 
We sometimes denote for ease of notation the Lorentzian manifold $(M,g)$ simply by $M$.
We denote the associated volume form by $\vol$.
A time-oriented Lorentzian manifold will be called a {\bf spacetime}.
For any subset $S\subseteq M$ we denote the causal future/past of
$S$ in $M$ by $J^\pm_M(S)$. 
 A subset $S\subseteq M$ is called {\bf causally compatible}, if
 $J^\pm_S(\{x\}) = J^\pm_M(\{x\})\cap S$, for all $x\in S$.
 We say that a closed subset $S\subseteq M$ is {\bf spacelike compact}, if there exists
 a compact subset $K\subseteq M$ such that $S\subseteq J_M(K):= J_M^+(K)\cup J_M^-(K)$.
 
 A {\bf Cauchy surface} is a subset $\Sigma\subset M$, which is met exactly once by any 
 inextensible timelike curve. We say that a spacetime is {\bf globally hyperbolic},
 if it contains a Cauchy surface. We shall require the following theorem
 proven by Bernal and S{\'a}nchez \cite{Bernal:2004gm,Bernal:2005qf}.
 \begin{theo}\label{theo:bernalsanchez}
 Let $(M,g)$ be a globally hyperbolic spacetime.  
 \begin{itemize}
 \item[(i)] Then there exists a manifold $\Sigma$, a smooth one-parameter family
 of Riemannian metrics $g_t$, $t\in\bbR$, on $\Sigma$ and a smooth positive function $\vartheta$ on $\bbR\times\Sigma$,
 such that $(M,g)$ is isometric to $(\bbR\times \Sigma,-\vartheta \,dt^2 \oplus g_t)$. Each
 $\{t\}\times \Sigma$ corresponds to a smooth Cauchy surface in $(M,g)$.
 \item[(ii)] Let also $\widetilde{\Sigma}\subset M$ be a smooth spacelike Cauchy surface in $(M,g)$.
 Then there exists a smooth splitting $(M,g)\simeq (\bbR\times \Sigma,-\vartheta\,dt^2\oplus g_t)$
 as in (i) such that $\widetilde{\Sigma}$ corresponds to $\{0\}\times \Sigma$.
 \end{itemize}
 \end{theo}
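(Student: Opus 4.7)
The plan is to reduce the statement to the existence of a smooth \emph{Cauchy time function}, i.e.\ a smooth function $\tau:M\to\bbR$ whose gradient is past-directed timelike everywhere and whose level sets are smooth spacelike Cauchy surfaces. Granting such a $\tau$, part (i) is then a rather mechanical splitting argument, and part (ii) is obtained by adjusting $\tau$ so that the given Cauchy surface becomes a level set.

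First I would establish existence of a Cauchy time function. The classical approach, due to Geroch, uses an auxiliary finite measure $\mu$ on $M$ and sets $\tau_0(x) = \log\bigl(\mu(J_M^-(\{x\}))/\mu(J_M^+(\{x\}))\bigr)$; this yields a continuous Cauchy time function but the level sets are only Lipschitz. The real work is to upgrade this to a smooth function with everywhere timelike differential and spacelike smooth level sets. I would follow the Bernal--S\'anchez strategy of regularising $\tau_0$ by a carefully chosen smooth approximation, using a locally finite covering by causally convex globally hyperbolic open sets and partitions of unity, while keeping strict monotonicity along every future-directed causal curve. This step is where the main difficulty lies: the smoothing must preserve both the timelike character of $d\tau$ and the property that each level set meets every inextensible timelike curve exactly once; crude mollification destroys at least one of these.

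Given such a smooth Cauchy time function $\tau$, I would fix $\Sigma:=\tau^{-1}(\{0\})$ and define a diffeomorphism $\Psi:\bbR\times \Sigma\to M$ by flowing along the normalised gradient $-\nabla\tau/g(\nabla\tau,\nabla\tau)$, reparametrised so that $\tau\circ\Psi(t,y)=t$. Global hyperbolicity ensures that this flow is complete and that $\Psi$ is a diffeomorphism. Pulling back $g$ under $\Psi$ gives a metric of the form $-\vartheta\,dt^2\oplus g_t$: the vanishing of the mixed $dt$-terms follows because $\partial_t=\Psi_\ast(\partial_t)$ is orthogonal to the level sets of $\tau$ (by construction it is proportional to $\nabla\tau$), the function $\vartheta>0$ comes from $-g(\partial_t,\partial_t)$, and the $g_t$ are Riemannian since the level sets are spacelike. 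Each $\{t\}\times\Sigma$ is diffeomorphic to the smooth Cauchy surface $\tau^{-1}(\{t\})$, so is itself Cauchy in $(M,g)$.

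For part (ii) I would take the given spacelike Cauchy surface $\widetilde{\Sigma}$ and construct a second smooth Cauchy time function $\widetilde{\tau}$ which satisfies $\widetilde{\tau}^{-1}(\{0\}) = \widetilde{\Sigma}$. The idea is to build $\widetilde{\tau}$ by patching together two smooth Cauchy time functions defined on the open globally hyperbolic subsets $I^+_M(\widetilde{\Sigma})$ and $I^-_M(\widetilde{\Sigma})$, with appropriate boundary behaviour near $\widetilde{\Sigma}$ guaranteeing smoothness across it and the correct signs of the differential; here again the technical heart is a smoothing argument, essentially a refined version of the one used for (i). Once $\widetilde{\tau}$ is in hand, applying the construction of (i) to $\widetilde{\tau}$ produces a splitting in which $\widetilde{\Sigma}$ sits at $t=0$, as required.
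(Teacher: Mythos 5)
This theorem is not proved in the paper at all: it is quoted as a known result of Bernal and S\'anchez, with references to their original articles, and is used downstream only as a black box (e.g.\ in the proof of the classical time-slice axiom, where the existence of Cauchy surfaces to the future and past of a given one is invoked). So there is no in-paper argument to compare yours against; the relevant comparison is with the cited original proof, and your outline does follow exactly that strategy: Geroch's measure-theoretic continuous time function, a smoothing step, and then the orthogonal splitting along the normalised gradient of the resulting temporal function, with part (ii) obtained by building a temporal function adapted to the prescribed Cauchy surface.

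As a proof, however, your text has a genuine gap, and you flag it yourself: essentially all of the difficulty of the theorem is concentrated in the smoothing step, i.e.\ in producing from Geroch's continuous $\tau_0$ a smooth function with everywhere past-directed timelike gradient whose level sets are still Cauchy. Everything you write before and after that step is routine; the step itself is precisely the content of the Bernal--S\'anchez papers (a long sequence of delicate local constructions with ``time step functions'' and careful control of the causal structure near each level set), and writing ``I would follow the Bernal--S\'anchez strategy of regularising $\tau_0$'' names the theorem rather than proves it. The same remark applies to part (ii), where the construction of a smooth Cauchy temporal function having the prescribed spacelike Cauchy surface $\widetilde{\Sigma}$ as a level set is the hard analytic content of the second cited paper. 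One further small point worth making explicit in the splitting argument: completeness of the reparametrised gradient flow (so that $\Psi$ is defined on all of $\bbR\times\Sigma$ and is onto $M$) rests on the fact that $\tau$ is unbounded in both directions along every inextensible timelike curve, which you should deduce from the level sets being Cauchy. Within the context of this paper the honest course is the one the authors take, namely to cite the result; if you intend your text as a proof, the smoothing lemma must actually be carried out.
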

\sk
\begin{defi}
Let $P:\sect{M}{\bfV}\to \sect{M}{\bfV}$ be a differential operator on a vector bundle $\bfV$ over a spacetime
$M$. A {\bf retarded/advanced Green's operator} for $P$ is a linear map $G^\pm: \sectn{M}{\bfV} \to\sect{M}{\bfV}$
satisfying
\begin{itemize}
\item[(i)] $P\circ G^\pm =\id_{\sectn{M}{\bfV}}$,
\item[(ii)] $G^\pm\circ P\big\vert_{\sectn{M}{\bfV}} = \id_{\sectn{M}{\bfV}}$,
\item[(iii)] $\supp\big(G^\pm(h)\big) \subseteq J^\pm_M\big(\supp(h)\big)$, for any $h\in\sectn{M}{\bfV}$.
\end{itemize}
\end{defi}
\sk
\begin{defi}\label{eqn:greenhypop}
Let $P:\sect{M}{\bfV}\to \sect{M}{\bfV}$ be a differential operator on a vector bundle $\bfV$ over a globally hyperbolic
spacetime $M$. We say that $P$ is a {\bf Green-hyperbolic operator}, if there exist retarded/advanced Green's operators
for $P$ and its formal adjoint $P^\ast$.
\end{defi}
\sk

By Remark 3.7.~in \cite{Bar:2011iu} the Green's operators are necessarily unique\footnote{
The proof of this statement as presented in \cite{Bar:2011iu}
requires also the existence of Green's operators for $P^\ast$. This is why we have modified
our Definition \ref{eqn:greenhypop} accordingly. We are grateful to Ko Sanders for pointing us out this issue.
}. Examples of Green-hyperbolic operators are wave operators (also called normally hyperbolic operators)
and operators of Dirac-type.

We finally review two important statements on properties of Green's operators. See Lemma 3.3 and Theorem 3.5 in \cite{Bar:2011iu}
for the proofs.
\begin{lem}\label{lem:greenadjoint}
Let $M$ be a globally hyperbolic spacetime and $\bfV$ a vector bundle over $M$ endowed with 
a nondegenerate bilinear form $\ip{~}{~}$. Let further $G^\pm$ be the
 retarded/advanced Green's operators for a Green-hyperbolic operator $P:\sect{M}{\bfV}\to \sect{M}{\bfV}$.
 Then the retarded/advanced Green's operators ${G^\ast}^\pm$ for the formal adjoint $P^\ast:\sect{M}{\bfV}\to \sect{M}{\bfV}$
 defined with respect to $\ip{~}{~}$ satisfy, 
 for all $h,k\in\sectn{M}{\bfV}$,
  \begin{flalign}
  \int_M\vol~\ip{{G^\ast}^\pm ( h )}{k}= \int_M\vol~\ip{h}{G^\mp(k)}~.
  \end{flalign}
\end{lem}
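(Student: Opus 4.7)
The plan is to prove the identity by the standard two-step manipulation: first turn $k$ into $P(G^\mp(k))$, then move $P$ across the bilinear form using the formal adjoint relation, and finally use $P^\ast\circ {G^\ast}^\pm = \id$ on compactly supported sections to recover $h$. Concretely, starting from $\int_M\vol~\ip{{G^\ast}^\pm(h)}{k}$ and invoking property (i) of the Green's operator $G^\mp$ for $P$, I rewrite
\begin{flalign*}
\int_M\vol~\ip{{G^\ast}^\pm(h)}{k} \,=\, \int_M\vol~\ip{{G^\ast}^\pm(h)}{P\big(G^\mp(k)\big)}~.
\end{flalign*}

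The key technical step is to verify that the two sections ${G^\ast}^\pm(h)$ and $G^\mp(k)$ have compact overlapping support, so that the formal adjoint relation (\ref{eqn:formadjbilinform}) for $P$ with respect to $\ip{~}{~}$ can be applied. By property (iii) of the Green's operators,
\begin{flalign*}
\supp\big({G^\ast}^\pm(h)\big)\cap\supp\big(G^\mp(k)\big)\subseteq J^\pm_M\big(\supp(h)\big)\cap J^\mp_M\big(\supp(k)\big)~,
\end{flalign*}
and since $M$ is globally hyperbolic and $\supp(h),\supp(k)$ are compact, this intersection is compact. This is the only point where global hyperbolicity enters, and it is the one step that is not purely formal; I expect it to be the main (minor) obstacle.

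Granted compact overlapping support, the formal adjoint identity yields
\begin{flalign*}
\int_M\vol~\ip{{G^\ast}^\pm(h)}{P\big(G^\mp(k)\big)} \,=\, \int_M\vol~\ip{P^\ast\big({G^\ast}^\pm(h)\big)}{G^\mp(k)}~,
\end{flalign*}
and since $h\in\sectn{M}{\bfV}$, property (i) of the Green's operator ${G^\ast}^\pm$ for $P^\ast$ gives $P^\ast({G^\ast}^\pm(h))=h$. Combining these identities produces the claimed equality $\int_M\vol~\ip{{G^\ast}^\pm(h)}{k} = \int_M\vol~\ip{h}{G^\mp(k)}$. Both sign choices $\pm$ are handled simultaneously, with the mixing of ``retarded on one side, advanced on the other'' arising precisely because compactness of the overlap requires one factor to be future compact and the other past compact.
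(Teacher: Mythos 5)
Your proof is correct and is essentially the standard argument: the paper does not prove this lemma itself but cites [BG11, Lemma 3.3], whose proof is exactly your computation (rewrite $k=P(G^\mp(k))$, integrate by parts using the compactness of $J^\pm_M(\supp(h))\cap J^\mp_M(\supp(k))$ guaranteed by global hyperbolicity, then apply $P^\ast\circ{G^\ast}^\pm=\id$). You correctly identified the support argument as the only non-formal step and handled it properly.
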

\sk
\begin{theo}\label{theo:complex}
Let $M$ be a globally hyperbolic spacetime and $\bfV$ a vector bundle over $M$.
Let further $P:\sect{M}{\bfV}\to \sect{M}{\bfV}$ be a Green-hyperbolic operator with retarded/advanced Green's operators
$G^\pm:\sectn{M}{\bfV}\to\sect{M}{\bfV}$. Setting
$G:=G^+-G^- : \sectn{M}{\bfV}\to\sectsc{M}{\bfV}$, where $\sectsc{M}{\bfV}$ denotes the space of
 sections of spacelike compact support,
the following sequence of linear maps is a complex, which is exact everywhere:
\begin{flalign}
\{0\} \stackrel{~}{\longrightarrow} \sectn{M}{\bfV} \stackrel{P}{\longrightarrow} \sectn{M}{\bfV} 
\stackrel{G}{\longrightarrow}\sectsc{M}{\bfV}\stackrel{P}{\longrightarrow} \sectsc{M}{\bfV}~
\end{flalign}
\end{theo}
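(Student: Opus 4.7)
The plan is to verify that the sequence consists of well-defined maps, that it is a complex, and finally to prove exactness at each of the three non-trivial positions in turn. First I would check that $G$ indeed takes values in $\sectsc{M}{\bfV}$: for any $h\in\sectn{M}{\bfV}$ property (iii) of the retarded/advanced Green's operators gives $\supp G(h)\subseteq J_M^+(\supp h)\cup J_M^-(\supp h)=J_M(\supp h)$, which is spacelike compact since $\supp h$ is compact. The complex property at the two middle compositions is then immediate from properties (i) and (ii): on $\sectn{M}{\bfV}$ one has $P\circ G=P\circ G^+-P\circ G^-=\id-\id=0$ and, analogously, $G\circ P=0$.

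Next I would establish exactness at the two leftmost positions. Injectivity of $P|_{\sectn{M}{\bfV}}$ follows at once from (ii): if $P(h)=0$ then $h=G^+(P(h))=0$. For exactness at the middle $\sectn{M}{\bfV}$, suppose $h\in\sectn{M}{\bfV}$ satisfies $G(h)=0$, i.e.\ $G^+(h)=G^-(h)=:s$. From (iii) we obtain $\supp s\subseteq J_M^+(\supp h)\cap J_M^-(\supp h)$, which in a globally hyperbolic spacetime is compact (the causal diamond of a compact set is compact). Hence $s\in\sectn{M}{\bfV}$ and $P(s)=P(G^+(h))=h$ by (i), showing $h\in\Imm(P|_{\sectn{M}{\bfV}})$.

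The main obstacle is exactness at the rightmost position, for which I would use the Cauchy surface splitting of Theorem \ref{theo:bernalsanchez}. Given $s\in\sectsc{M}{\bfV}$ with $P(s)=0$, pick two smooth spacelike Cauchy surfaces $\Sigma_-,\Sigma_+$ with $\Sigma_-$ strictly in the past of $\Sigma_+$, and a smooth cut-off $\chi\in C^\infty(M)$ equal to $0$ on $J_M^-(\Sigma_-)$ and to $1$ on $J_M^+(\Sigma_+)$. Set $h:=P(\chi s)$; since $P(s)=0$ we also have $h=-P((1-\chi)s)$. The support of $h$ lies in the region where $\chi$ is non-constant intersected with $\supp s$, i.e.\ in the intersection of a spacelike compact set with the strip between $\Sigma_-$ and $\Sigma_+$, which is compact. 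Hence $h\in\sectn{M}{\bfV}$.

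Finally, to see $G(h)=s$, I would observe that $\chi s$ has past-compact support while $(1-\chi)s$ has future-compact support. Extending the retarded/advanced Green's operators to past/future-compactly supported sections in the standard way (or, equivalently, invoking uniqueness of solutions with past/future-compact support bounded to the past/future by a Cauchy surface), one finds $G^+(h)=\chi s$ and $G^-(h)=-(1-\chi)s$: each difference $\chi s-G^+(h)$, respectively $-(1-\chi)s-G^-(h)$, is annihilated by $P$ and has support bounded to the past, respectively to the future, by a Cauchy surface, so vanishes by uniqueness. Taking the difference gives $G(h)=\chi s+(1-\chi)s=s$, completing exactness at the last node. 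This final step is the delicate one because it is the only place where the global structure of the spacetime (the existence of Cauchy surfaces and the extension of $G^\pm$ beyond compactly supported arguments) is essentially used; everything else reduces to manipulations of the defining properties (i)--(iii).
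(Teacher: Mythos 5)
The paper does not prove this theorem itself; it cites Theorem 3.5 of B\"ar--Ginoux \cite{Bar:2011iu}, and your argument reproduces the structure of that proof. The complex property, the injectivity of $P$ on $\sectn{M}{\bfV}$, and the exactness at the middle node (via compactness of $J^+_M(\supp h)\cap J^-_M(\supp h)$) are all correct and are exactly the standard steps. The construction at the last node --- splitting $s=\chi s+(1-\chi)s$, checking that $h:=P(\chi s)=-P((1-\chi)s)$ has compact support, and aiming for $G^+(h)=\chi s$, $G^-(h)=-(1-\chi)s$ --- is also the right one.

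The genuine gap is the justification of that last identification. You appeal to ``uniqueness of solutions with past/future-compact support'' or to an extension of $G^\pm$ to such sections ``in the standard way''. At the level of generality of the statement this is not available off the shelf: $P$ is only assumed Green-hyperbolic, not normally hyperbolic, so there is no well-posed Cauchy problem to invoke, and the extension of $G^\pm$ beyond $\sectn{M}{\bfV}$ is itself a nontrivial construction. The way this step is actually closed in \cite{Bar:2011iu} is by duality with the Green's operators ${G^\ast}^{\pm}$ of the formal adjoint $P^\ast$: for every $\varphi\in\Gamma^\infty_0(M,\bfV^\ast)$ one computes
\begin{flalign}
\int_M\vol~\varphi\big(G^+(h)\big)=\int_M\vol~\big({G^\ast}^-(\varphi)\big)(h)
=\int_M\vol~\big({G^\ast}^-(\varphi)\big)\big(P(\chi s)\big)
=\int_M\vol~\big(P^\ast {G^\ast}^-(\varphi)\big)(\chi s)=\int_M\vol~\varphi(\chi s)~,\nn
\end{flalign}
where the integration by parts is legitimate because $\supp\big({G^\ast}^-(\varphi)\big)\subseteq J^-_M(\supp\varphi)$ meets the past-compact, spacelike-compact set $\supp(\chi s)$ in a compact set; since $\Gamma^\infty_0(M,\bfV^\ast)$ separates sections, $G^+(h)=\chi s$ follows (and analogously for $G^-$). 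Note that your proof never uses the Green's operators of $P^\ast$ at all, whereas their existence is precisely what the paper's Definition \ref{eqn:greenhypop} was amended to require (see the footnote there); without them even the uniqueness of $G^\pm$ is unclear. So the architecture of your proof is correct, but the ``delicate'' step you flag needs the adjoint duality spelled out rather than an appeal to Cauchy-problem uniqueness.
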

\sk


\section{\label{sec:affinediffop}Affine differential operators}
Let $\big(M,(\bfA_1,M,\pi_1,A_1),(\bfV_1,M,\rho_1,V_1)\big)$ be an affine bundle
and $(\bfV_2,M,\rho_2,V_2)$ a vector bundle over
the same oriented  manifold $M$. We denote the volume form on $M$ by $\vol$.
\begin{defi}
\begin{itemize}
\item[(i)] An {\bf affine differential operator} is an affine map
$P:\sect{M}{\bfA_1}\to \sect{M}{\bfV_2}$, such that its linear part
$P_\bfV:\sect{M}{\bfV_1}\to \sect{M}{\bfV_2}$ is a  differential operator.
\item[(ii)] We call an affine differential operator $P:\sect{M}{\bfA_1}\to \sect{M}{\bfV_2}$ {\bf formally adjoinable}
if there exists a differential operator $P^\ast: \sect{M}{\bfV^\ast_2} \to \sect{M}{\bfA_1^\dagger}$, such that
for all $\varphi\in \sectn{M}{\bfV^\ast_2}$  and $s\in\sect{M}{\bfA_1}$,
\begin{flalign}\label{eqn:adjoin}
\int_M\vol~\varphi\big(P(s)\big)= \int_M\vol~\big(P^\ast(\varphi)\big)(s)~.
\end{flalign}
\end{itemize}
\end{defi}
\sk

\begin{rem}
Notice that we demand the target space of $P$ to be the vector space of sections of a vector bundle.
The reason is that all $P$ in our work are used for specifying an equation of motion $P(s) =0$ on $s\in\sect{M}{\bfA_1}$,
which requires the zero section on the right hand side.
\end{rem}
\sk
\begin{ex}\label{ex:inhomlin}
Let $(\bfV_i,M,\rho_i,V_i)$, $i=1,2$, be two vector bundles over the same manifold $M$ and let
$P_\bfV:\sect{M}{\bfV_1}\to \sect{M}{\bfV_2}$ be a differential operator. 
Due to Example \ref{ex:vectorbundasaffinebund} we can regard $(\bfV_1,M,\rho_1,V_1)$  as an affine bundle, denoted
by $\big(M,(\bfA_1,M,\pi_1,A_1),(\bfV_1,M,\rho_1,V_1)\big)$. Given any section $J\in \sect{M}{\bfV_2}$, we can define the map
\begin{flalign}
P: \sect{M}{\bfA_1} \to \sect{M}{\bfV_2} ~,~~s\mapsto P(s) = P_\bfV(s) + J~,
\end{flalign}
which is an affine differential operator with linear part $P_\bfV$. If we interpret $J$ as a source term
for the linear differential operator $P_\bfV$, we observe that inhomogeneous differential operators between vector bundles
can be regarded as affine differential operators. 
For a more detailed discussion of field theories in this class of models see Section \ref{sec:inhommat}.
\end{ex}
\begin{ex}
As explained in Example \ref{ex:sequenceaffinebund}, the splittings of the Atiyah sequence of a principal bundle 
are described in terms of an affine bundle (the bundle of connections). For principal $U(1)$-bundles 
(i.e.~electromagnetism) one can prove that Maxwell's equation is an affine differential operator
from the bundle of connections to its underlying vector bundle. Details can be found in our future work \cite{BDSMaxwell}.
\end{ex}
\begin{theo}\label{theo:adjoin}
Every affine differential operator $P:\sect{M}{\bfA_1}\to \sect{M}{\bfV_2}$ is formally adjoinable.
However, the formal adjoint differential operator is not unique. Given two formal adjoints 
 $P^\ast,\widetilde{P}^\ast:
\sect{M}{\bfV^\ast_2} \to \sect{M}{\bfA^\dagger_1}$ of $P$, their difference  is given by $\widetilde{P}^\ast - P^\ast = \1\,Q$.
Here $Q:\sect{M}{\bfV^\ast_2} \to C^\infty(M)$ is a differential operator, such that,
for all $\varphi\in \sectn{M}{\bfV^\ast_2}$, $\int_M\vol~Q(\varphi)=0$, and $\1\in\sect{M}{\bfA_1^\dagger}$
is the canonical section mapping any point $x\in M$ to the distinguished affine map $\1_x\in \bfA_1^\dagger\vert_x$ in the fibre.
\end{theo}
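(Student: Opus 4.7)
The plan is to proceed in two stages, existence followed by uniqueness, both times leveraging the decomposition of an affine map into a constant part plus its linear part together with the adjoint theory for the linear differential operator $P_\bfV$ from Section \ref{sec:notation4}.

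For existence, I would first fix once and for all a reference section $\widehat{s}\in\sect{M}{\bfA_1}$; this is possible because $\sect{M}{\bfA_1}$ is nonempty. Every $s\in\sect{M}{\bfA_1}$ then decomposes uniquely as $s=\widehat{s}+\sigma$ with $\sigma:=s-\widehat{s}\in\sect{M}{\bfV_1}$, and the affine property of $P$ gives $P(s)=P(\widehat{s})+P_\bfV(\sigma)$. Since $P_\bfV$ is a differential operator between vector bundles, it admits a formal adjoint $P_\bfV^\ast:\sect{M}{\bfV_2^\ast}\to\sect{M}{\bfV_1^\ast}$. I then define, for each $\varphi\in\sect{M}{\bfV_2^\ast}$, a section $P^\ast(\varphi)\in\sect{M}{\bfA_1^\dagger}$ by its pointwise affine action
\begin{flalign*}
(P^\ast(\varphi))(s)\;:=\;\varphi\big(P(\widehat{s})\big)\,+\,\big(P_\bfV^\ast(\varphi)\big)(s-\widehat{s})~,
\end{flalign*}
i.e.~the unique affine section with linear part $P_\bfV^\ast(\varphi)$ and value $\varphi(P(\widehat{s}))\in C^\infty(M)$ at $\widehat{s}$. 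Smoothness and the differential-operator property follow from the corresponding properties of $P_\bfV^\ast$ and of the zeroth-order map $\varphi\mapsto \varphi(P(\widehat{s}))$, and the adjoint relation (\ref{eqn:adjoin}) then drops out of the computation
\begin{flalign*}
\int_M\vol~\varphi\big(P(s)\big)\;=\;\int_M\vol~\varphi\big(P(\widehat{s})\big)\,+\,\int_M\vol~\varphi\big(P_\bfV(\sigma)\big)\;=\;\int_M\vol~(P^\ast(\varphi))(s)~,
\end{flalign*}
where the middle step uses the definition of the formal adjoint $P_\bfV^\ast$ on the linear subtheory (legitimate because $\varphi$ is compactly supported).

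For the non-uniqueness statement, let $P^\ast,\widetilde{P}^\ast$ be two formal adjoints and set $R:=\widetilde{P}^\ast-P^\ast$, which is itself a differential operator. Subtracting the two adjoint identities gives $\int_M\vol~(R(\varphi))(s)=0$ for all compactly supported $\varphi\in\sectn{M}{\bfV_2^\ast}$ and all $s\in\sect{M}{\bfA_1}$. For any such $\varphi$, the section $R(\varphi)\in\sectn{M}{\bfA_1^\dagger}$ thus satisfies exactly the hypothesis of the second paragraph of Remark \ref{rem:notseparating}, so its linear part vanishes and $R(\varphi)=a(\varphi)\,\1$ for a unique $a(\varphi)\in C_0^\infty(M)$ with $\int_M\vol~a(\varphi)=0$.

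The remaining task is to upgrade this pointwise/compactly-supported description to a differential operator $Q:\sect{M}{\bfV_2^\ast}\to C^\infty(M)$ defined on all sections. For this I invoke locality of $R$: given any $\varphi\in\sect{M}{\bfV_2^\ast}$ and any $x\in M$, choose a compactly supported $\widetilde{\varphi}$ agreeing with $\varphi$ in a neighbourhood of $x$; then $R(\varphi)(x)=R(\widetilde{\varphi})(x)=a(\widetilde{\varphi})(x)\,\1_x$, and I set $Q(\varphi)(x):=a(\widetilde{\varphi})(x)$. Independence of the choice of $\widetilde{\varphi}$, smoothness and the differential-operator property of $Q$ then follow from the corresponding properties of $R$, while the identity $R(\varphi)=Q(\varphi)\,\1$ holds globally and the integral condition $\int_M\vol~Q(\varphi)=0$ for compactly supported $\varphi$ is inherited from the integral condition on $a$. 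The main subtlety I expect is exactly this last extension step: one has to check carefully that the pointwise prescription via $\1$ produces a genuine differential operator $Q$ and that the characterization $R(\varphi)=Q(\varphi)\,\1$ is not restricted to compactly supported arguments, which is where the locality of the difference $\widetilde{P}^\ast-P^\ast$ is essential.
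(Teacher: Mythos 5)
Your proof is correct and follows essentially the same route as the paper: the same definition of $P^\ast(\varphi)$ via a fixed reference section $\widehat{s}$ and the linear formal adjoint $P_\bfV^\ast$, the same verification of (\ref{eqn:adjoin}), and the same appeal to Remark \ref{rem:notseparating} for the non-uniqueness. The only difference is that you spell out, via locality of the difference operator, the extension of $\widetilde{P}^\ast(\varphi)-P^\ast(\varphi)=Q(\varphi)\,\1$ from compactly supported to arbitrary $\varphi$ together with the differential-operator property of $Q$ --- a step the paper passes over more quickly --- so this is a welcome refinement rather than a deviation.
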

\begin{proof}
Fix an arbitrary $\widehat{s}\in\sect{M}{\bfA_1}$. We consider the left hand side of (\ref{eqn:adjoin})
and obtain for the integrand $\varphi\big(P(s)\big) = \varphi\big(P(\widehat{s})\big) 
+ \varphi\big(P_\bfV(s-\widehat{s})\big)$.
The differential operator $P_\bfV$ is formally adjoinable to a differential operator $P^\ast_\bfV:\sect{M}{\bfV_2^\ast}\to
\sect{M}{\bfV_1^\ast}$ and the left hand side of (\ref{eqn:adjoin}) becomes
\begin{flalign}
 \int_M\vol~\varphi\big(P(s)\big)&= \int_M\vol~\Big(\varphi\big(P(\widehat{s})\big)  + \big(P_\bfV^\ast(\varphi)\big)(s-\widehat{s})\Big)~.
\end{flalign}
We define, for all $\varphi\in\sect{M}{\bfV^\ast_2}$, the map $P^\ast(\varphi):\sect{M}{\bfA_1}\to C^\infty(M)$ by, 
for all $s\in\sect{M}{\bfA_1}$,
\begin{flalign}\label{eqn:temp}
\big(P^\ast(\varphi)\big)(s):= \varphi\big(P(\widehat{s})\big)+ \big(P^\ast_\bfV(\varphi)\big)(s-\widehat{s})~.
\end{flalign}
Notice that, for all $\varphi\in\sect{M}{\bfV^\ast_2}$, the map $P^\ast(\varphi)$ is an affine map with linear part 
$P_\bfV^\ast(\varphi)$.

To show that $P^\ast$ is a differential operator, we fix an arbitrary point $x\in M$ and consider an open
neighbourhood  $U\subset M$ of $x$, such that
on $U$ there is an affine bundle chart of $\bfA_1$ and a vector bundle chart of $\bfV_2$. 
For the local sections we then have the isomorphisms $\sect{U}{\bfA_1} \simeq C^\infty(U,A_1)$ and
$\sect{U}{\bfV_2} \simeq C^\infty(U,V_2)$.
Analogously, using the induced vector bundle charts on $\bfA^\dagger_1$ and $\bfV^\ast_2$
we have the isomorphisms $\sect{U}{\bfA_1^\dagger}\simeq C^\infty(U,A_1^\dagger)$ and $\sect{U}{\bfV_2^\ast}
\simeq C^\infty(U,V_2^\ast)$.
A $C^\infty(U)$-module basis of $C^\infty(U,A^\dagger_1)$
 is given by, for all $x\in U$,  $e_0(x):= \1_x$ and $e_b(x)$ defined by, for all $a\in A_1$,
$\big(e_b(x)\big)(a) := e_b^\ast\big(a-\widehat{s}(x)\big)$, where $e_b^\ast$ denotes a basis of $V_1^\ast$
and the index $b$ runs from $1$ to the dimension of $V_1^\ast$.
Expressing (\ref{eqn:temp}) in terms of this basis and using the basis expansion $P^\ast_\bfV(\varphi)\vert_{U}^{} = 
\big(P^\ast_\bfV(\varphi)\big)^b\,e_b^\ast $ (sum over $b$ understood)
we obtain
\begin{flalign}\label{eqn:temp2}
P^\ast(\varphi)\vert_U^{} = \varphi\big(P(\widehat{s})\big) \,e_0 ~ + ~ \big(P^\ast_\bfV(\varphi)\big)^b\,e_b~.
\end{flalign}
Using also a $C^\infty(U)$-module basis $\{\tilde e^\ast_\beta\}$ of $C^\infty(U,V_2^\ast)$,
 we can write $\varphi\vert_U^{}= \varphi^\beta\,\tilde e^\ast_\beta $ (sum over $\beta$ understood)
 and (\ref{eqn:temp2}) reads
\begin{flalign}
P^\ast(\varphi)\vert_U^{} = \tilde e_\beta^\ast \big(P(\widehat{s})\big) \,\varphi^\beta\,e_0 ~+~(P^\ast_\bfV)^b_\beta\, \varphi^\beta~e_b~.
\end{flalign}
This shows that $P^\ast$ is a  differential operator, since the coefficients in the matrix $(P^\ast_\bfV)^b_\beta$
and in the vector $\tilde e_\beta^\ast \big(P(\widehat{s})\big) $ are differential operators acting on $\varphi^\beta$.

According to Remark \ref{rem:notseparating}, there are potential issues in uniquely defining
the operator $P^\ast$ via the integral equation (\ref{eqn:adjoin}). Let us assume that there
is a second differential operator $\widetilde{P}^\ast: \sect{M}{\bfV^\ast_2} \to \sect{M}{\bfA_1^\dagger}$
satisfying (\ref{eqn:adjoin}). As a consequence of Remark \ref{rem:notseparating}, there
exists for each $\varphi\in \sectn{M}{\bfV^\ast_2}$ an element $a_\varphi\in\Ker\big(\int_M\vol~\big)$,
such that $\widetilde{P}^\ast(\varphi) - P^\ast(\varphi) = a_\varphi\,\1$.
Since the left hand side is a differential operator acting on $\varphi$, the right hand side also has to be a differential
operator acting on $\varphi$, which we denote by $Q:\sect{M}{\bfV^\ast_2} \to C^\infty(M)$. We obtain
 $\widetilde{P}^\ast(\varphi) - P^\ast(\varphi) = Q(\varphi)\,\1$. The integral condition
on $a_\varphi$ implies that $\int_M\vol~Q(\varphi)=0$, for all $\varphi\in \sectn{M}{\bfV^\ast_2}$.
\end{proof}
\begin{rem}
Examples of differential operators $Q$ characterizing the non-uniqueness of the formal adjoint 
are given by differential operators of divergence form.
However, notice that the linear part of $P^\ast(\varphi)\in \sect{M}{\bfA_1^\dagger}$ is unique, 
for all $\varphi\in\sect{M}{\bfV^\ast_2}$.
\end{rem}


\section{\label{sec:classaff}Classical affine matter field theory}
We consider an affine bundle $\big(M,(\bfA,M,\pi,A),(\bfV,M,\rho,V)\big)$ of which the underlying vector bundle is equipped 
with a nondegenerate bilinear form $\ip{~}{~}$.
Using $\ip{~}{~}$ we can identify $\bfV^\ast$ with $\bfV$.
We assume $M$ to be a globally hyperbolic spacetime.
\begin{defi}
An {\bf affine Green-hyperbolic  operator} is an affine differential operator
$P:\sect{M}{\bfA}\to \sect{M}{\bfV}$, such that its linear part 
$P_\bfV:\sect{M}{\bfV}\to \sect{M}{\bfV}$ is a Green-hyperbolic operator.
\end{defi}
\sk

A formal adjoint of an affine Green-hyperbolic operator is 
a differential operator $P^\ast:\sect{M}{\bfV^\ast} \to \sect{M}{\bfA^\dagger}$.
Identifying $\bfV^\ast$ with $\bfV$ by using $\ip{~}{~}$, we can also regard
it as a differential operator (denoted with a slight abuse of notation by the same symbol) 
$P^\ast:\sect{M}{\bfV} \to \sect{M}{\bfA^\dagger}$ satisfying, for all $h\in\sectn{M}{\bfV}$
and $s\in\sect{M}{\bfA}$,
\begin{flalign}\label{eqn:adjointgreenhypop}
\int_M\vol~\ip{h}{P(s)}= \int_M\vol~\big(P^\ast(h)\big)(s)~.
\end{flalign}
As in Theorem \ref{theo:adjoin} the formal adjoint of $P$ is not uniquely specified by (\ref{eqn:adjointgreenhypop}).
Given two differential operators $P^\ast,\widetilde{P}^\ast:\sect{M}{\bfV} \to \sect{M}{\bfA^\dagger}$ 
satisfying (\ref{eqn:adjointgreenhypop}),
then their difference is given by $\widetilde{P}^\ast-P^\ast = \1 \,Q$, where $Q:\sect{M}{\bfV}\to C^\infty(M)$
is a differential operator, such that $\int_M\vol~Q(h)=0$, for all $h\in \sectn{M}{\bfV}$.
We denote by 
\begin{flalign}
\mathrm{FormAdj}_P := \big\{P^\ast: \sect{M}{\bfV} \to \sect{M}{\bfA^\dagger} : P^\ast \text{ is formal adjoint of } P\big\}
\end{flalign}
the set of all formal adjoints of $P$.
\sk

A simple class of local observables on $\sect{M}{\bfA}$ is characterized by the $C^\infty(M)$-module
$\sectn{M}{\bfA^\dagger}$. Indeed, given any $\varphi\in \sectn{M}{\bfA^\dagger}$ we define the corresponding
observable by the local affine  functional
\begin{flalign}\label{eqn:laf}
F_\varphi:\sect{M}{\bfA} \to \bbR\,,~s\mapsto F_\varphi(s) = \int_M \vol~\varphi(s)~.
\end{flalign}
According to Remark \ref{rem:notseparating} there are trivial observables: 
Let $\varphi\in \sectn{M}{\bfA^\dagger}$ be of the form $\varphi = a\,\1$, with $a\in\Ker\big(\int_M\vol~\big)$.
Then $F_\varphi(s)=0$, for all $s\in\sect{M}{\bfA}$. We denote by
\begin{flalign}
\mathrm{Triv} := \Big\{a\,\1\in \sectn{M}{\bfA^\dagger} : a\in C^\infty_0(M) \text{ satisfies } \int_M\vol~a=0\Big\}\subset \sectn{M}{\bfA^\dagger}
\end{flalign}
the vector space of trivial observables. Hence, we can label the nontrivial observables by elements $\varphi$ of the vector space
$\sectn{M}{\bfA^\dagger}/\mathrm{Triv}$, where we do not use brackets to denote equivalence classes for notational 
simplicity. The linear part $\varphi_\bfV$ does not depend on the choice of representative in the equivalence class
 $\varphi\in \sectn{M}{\bfA^\dagger}/\mathrm{Triv}$.

The vector space of observables $\{F_\varphi: \varphi\in\sectn{M}{\bfA^\dagger}/\mathrm{Triv}\}$ is separating,
i.e.~from $F_\varphi(s)=F_{\varphi}(s^\prime)$, for all $\varphi\in\sectn{M}{\bfA^\dagger}/\mathrm{Triv}$, it follows that
$s=s^\prime$.
Notice that an element $s\in\sect{M}{\bfA}$ is a solution of $P(s)=0$, where $P$ is
an affine Green-hyperbolic operator, if and only if $F_\varphi(s)=0$ for all
$\varphi$ given by  $\varphi=P^\ast(h)$, with $h\in\sectn{M}{\bfV}$. These 
$\varphi\in \sectn{M}{\bfA^\dagger}/\mathrm{Triv}$ do not depend on the
 choice $P^\ast\in\mathrm{FormAdj}_P$ of a formal adjoint of $P$:
Given $P^\ast,\widetilde{P}^\ast\in\mathrm{FormAdj}_P$, then the difference $\widetilde{P}^\ast(h)-P^\ast(h)
=Q(h)\,\1\in\mathrm{Triv}$ is equivalent to zero, since we are working in $\sectn{M}{\bfA^\dagger}/\mathrm{Triv}$.
We denote this vector subspace by $P^\ast\big [ \sectn{M}{\bfV} \big] \subset \sectn{M}{\bfA^\dagger}/\mathrm{Triv}$.
\sk

We can now associate the classical phase space to this theory. The functoriality of this construction
is investigated in the next section.
\begin{propo}\label{propo:sympl}
Let $P:\sect{M}{\bfA}\to \sect{M}{\bfV}$ be an affine Green-hyperbolic operator such that the linear part 
$P_\bfV:\sect{M}{\bfV}\to \sect{M}{\bfV}$ is formally self-adjoint with respect to
$\ip{~}{~}$. Denote the retarded/advanced Green's operator for $P_\bfV$ by
$G^\pm_{\bfV}: \sectn{M}{\bfV}\to \sect{M}{\bfV}$ and set $G_\bfV := G^+_{\bfV}-G^-_{\bfV}$.

Then the vector space $\EE := \big(\sectn{M}{\bfA^\dagger}/\mathrm{Triv}\big)/P^\ast\big[\sectn{M}{\bfV} \big] $ can be equipped
with the bilinear map
\begin{flalign}\label{eqn:tau}
\tau : \EE\times \EE\to \bbR~,~~([\varphi],[\psi])\mapsto \tau([\varphi],[\psi])=\int_M\vol~\ip{\varphi_\bfV}{G_\bfV (\psi_\bfV)}~,
\end{flalign}
where $\varphi_\bfV,\psi_\bfV\in\sectn{M}{\bfV}$ denote the linear parts of $\varphi,\psi\in \sectn{M}{\bfA^\dagger}/\mathrm{Triv}$
specified by $\varphi(s+\sigma) = \varphi(s) +\ip{ \varphi_\bfV}{\sigma}$,
 for all $s\in\sect{M}{\bfA}$ and $\sigma\in \sect{M}{\bfV}$
(analogously for $\psi$).
\end{propo}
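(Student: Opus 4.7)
The plan is to verify four things in turn: that the formula is well-defined as a number, that it descends through the quotient by $\mathrm{Triv}$, that it descends through the quotient by $P^\ast[\sectn{M}{\bfV}]$ in each slot, and that the resulting map is bilinear.

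First I would check that the integrand is well-defined and has compact support. For any representatives $\varphi,\psi\in\sectn{M}{\bfA^\dagger}$, the linear parts $\varphi_\bfV,\psi_\bfV$ lie in $\sectn{M}{\bfV}$ by equation (\ref{eqn:sectionaff}) together with the observation that the compact support of $\varphi$ forces that of its linear part. Then $G_\bfV(\psi_\bfV)\in \sectsc{M}{\bfV}$ by the properties of the causal propagator, and $\ip{\varphi_\bfV}{G_\bfV(\psi_\bfV)}\in C^\infty(M)$ is supported in $\supp(\varphi_\bfV)$, hence compact, so the integral in (\ref{eqn:tau}) converges.

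Next I would address well-definedness under the quotient by $\mathrm{Triv}$. If $\varphi$ is replaced by $\varphi + a\,\1$ with $a\in\Ker(\int_M\vol~)$, then by the discussion preceding the proposition the linear part is unchanged; symmetrically for $\psi$. So the integrand is literally the same.

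The main step is well-definedness modulo $P^\ast[\sectn{M}{\bfV}]$, which I expect to be the real work. Take $h\in\sectn{M}{\bfV}$ and consider replacing $\varphi$ by $\varphi+P^\ast(h)$. From the local expression (\ref{eqn:temp2}) in the proof of Theorem \ref{theo:adjoin}, the linear part of $P^\ast(h)\in\sect{M}{\bfA^\dagger}$ equals $P_\bfV^\ast(h)\in\sect{M}{\bfV^\ast}\simeq \sect{M}{\bfV}$, and by the assumed formal self-adjointness of $P_\bfV$ this is just $P_\bfV(h)$. So I must show
\begin{flalign*}
\int_M\vol~\ip{P_\bfV(h)}{G_\bfV(\psi_\bfV)} = 0~.
\end{flalign*}
Since $h$ is compactly supported and $G_\bfV(\psi_\bfV)$ has spacelike compact support, their overlap is compact, so the formal self-adjointness identity (\ref{eqn:formadjbilinform}) applies and moves $P_\bfV$ onto $G_\bfV(\psi_\bfV)$. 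But $P_\bfV\circ G_\bfV = P_\bfV\circ G_\bfV^+ - P_\bfV\circ G_\bfV^- = \id-\id = 0$ on $\sectn{M}{\bfV}$, so the integral vanishes. Symmetrically, replacing $\psi$ by $\psi+P^\ast(k)$ for $k\in\sectn{M}{\bfV}$ changes the linear part of $\psi$ by $P_\bfV(k)$ and produces $\int_M\vol~\ip{\varphi_\bfV}{G_\bfV(P_\bfV(k))}$, which vanishes immediately because $G_\bfV\circ P_\bfV=0$ on $\sectn{M}{\bfV}$ by properties (i)--(ii) of retarded/advanced Green's operators.

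Finally, bilinearity is automatic: assignment of linear parts is $\bbR$-linear by uniqueness of the linear part, $G_\bfV$ is linear, the bilinear form $\ip{~}{~}$ is bilinear, and integration is linear. The only subtlety I anticipate is bookkeeping around the quotient and the dual identification, but both have been prepared by Theorem \ref{theo:adjoin} and Lemma \ref{lem:greenadjoint}, so no further input beyond the above is needed.
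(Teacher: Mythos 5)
Your proof is correct and follows essentially the same route as the paper: the key points in both are that the linear part is insensitive to the $\mathrm{Triv}$-ambiguity, that the linear part of $P^\ast(h)$ equals $P_\bfV(h)$ by formal self-adjointness, and that the resulting integrals vanish by Theorem \ref{theo:complex} together with the adjointness properties of Lemma \ref{lem:greenadjoint}. The only (immaterial) difference is that for the first slot you move $P_\bfV$ across the pairing via (\ref{eqn:formadjbilinform}) and use $P_\bfV\circ G_\bfV=0$, whereas the paper moves $G_\bfV$ across via its formal skew-adjointness and uses $G_\bfV\circ P_\bfV=0$; you also spell out the support and bilinearity checks that the paper leaves implicit.
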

\begin{proof}
Remember that the linear part of $\varphi\in \sectn{M}{\bfA^\dagger}/\mathrm{Triv}$ does not depend on the choice of
representative.
Thus, it remains to show that $\tau(\varphi,\psi) = \tau(\psi,\varphi)=0$,
for all $\varphi\in \sectn{M}{\bfA^\dagger}/\mathrm{Triv}$,
 if $\psi=P^\ast(h)$ with $h \in \sectn{M}{\bfV} $. From the proof of Theorem \ref{theo:adjoin} we know that
the linear part of $\psi$ is given by $\psi_\bfV = P^\ast_\bfV(h) = P_\bfV(h)$, where in the last step we have
also used that $P_\bfV$ is formally self-adjoint. The proof follows from the fact that
$G_\bfV$ is formally skew-adjoint and $G_\bfV\circ P_\bfV =0$ on $\sectn{M}{\bfV}$, see 
Lemma \ref{lem:greenadjoint} and Theorem \ref{theo:complex}.
\end{proof}
\begin{rem}
The form of $\tau$ given in (\ref{eqn:tau}) can also be motivated by considering Peierls brackets on the 
vector space of local affine functionals $\{ F_\varphi: \varphi\in \sectn{M}{\bfA^\dagger}/\mathrm{Triv}\}$, see (\ref{eqn:laf}).
We obtain for the derivative  of $F_\varphi$ along $\sigma\in\sect{M}{\bfV}$, for all $s\in\sect{M}{\bfA}$,
\begin{flalign}
\delta_\sigma F_\varphi(s) := \lim\limits_{\epsilon \searrow 0} \frac{1}{\epsilon} \big(F_\varphi(s+\epsilon\sigma)-F_\varphi(s)\big)
= \int_M \vol~\ip{\varphi_\bfV}{\sigma}~,
\end{flalign}
with $\varphi_\bfV\in\sectn{M}{\bfV}$ being the linear part of $\varphi$. Notice that in the derivatives only
the linear part enters. Defining the Peierls bracket via the evaluation
of the causal propagator $G_\bfV$ of $P_\bfV$ we  obtain
\begin{flalign}
\big\{F_\varphi,F_\psi\big\} = \int_M\vol~\ip{\varphi_\bfV}{G_\bfV(\psi_\bfV)}~\oone~~,
\end{flalign}
which agrees with (\ref{eqn:tau}). In this expression $\oone:\sect{M}{\bfA}\to \bbR$ denotes the constant functional. 
This provides an explanation why it is natural that in the map $\tau$ of
Proposition \ref{propo:sympl} only the linear part enters.
\end{rem}
\sk

The bilinear map $\tau$ of Proposition \ref{propo:sympl} is degenerate. We denote by
$\mathcal{L}\subseteq \EE$ (the left null space of $\tau$) the vector subspace of all elements $[\varphi]$ satisfying
$\tau([\varphi],[\psi]) =0$, for all $[\psi]\in\EE$. Similarly, we denote by $\mathcal{R}\subseteq \EE$
 (the right null space of $\tau$) the vector subspace of all elements 
 $[\varphi]$ satisfying $\tau([\psi],[\varphi])=0$, for all $[\psi]\in \EE$.
We can give a characterization of the vector subspaces $\mathcal{L}$ and $\mathcal{R}$.
\begin{propo}\label{propo:degen}
An element $[\varphi]\in \EE$ is in $\mathcal{R}$, if and only if there exists a representative 
 $\varphi\in\sectn{M}{\bfA^\dagger}$ of $[\varphi]$ which is of the form $\varphi = a\,\1$, with $a\in C^\infty_0(M)$.
It further holds true that $\mathcal{L} =\mathcal{R}=:\mathcal{N}$ (the null space of $\tau$). 
 \end{propo}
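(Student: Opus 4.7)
The plan is to characterize $\mathcal{R}$ directly via the exact complex of Theorem \ref{theo:complex}, then deduce $\mathcal{L}=\mathcal{R}$ from a $\pm$-symmetry of $\tau$ that follows from the self-adjointness of $P_\bfV$ and Lemma \ref{lem:greenadjoint}.

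The starting observation I need is that the linear-part map $\sectn{M}{\bfA^\dagger}\to\sectn{M}{\bfV},~\psi\mapsto\psi_\bfV$, is surjective. To see this, fix any $\widehat{s}\in\sect{M}{\bfA}$ (such sections exist by the remark following the definition) and, given $\sigma\in\sectn{M}{\bfV}$, define fiberwise $\psi(x)(a):=\ip{\sigma(x)}{a-\widehat{s}(x)}$. This is affine on $\bfA\vert_x$ with linear part $\sigma(x)$ under the identification $\bfV^\ast\simeq\bfV$ provided by $\ip{~}{~}$, and since $\ip{~}{~}$ is nondegenerate the support of $\psi$ coincides with that of $\sigma$, hence is compact.

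Now $[\varphi]\in\mathcal{R}$ reads $\int_M\vol~\ip{\psi_\bfV}{G_\bfV(\varphi_\bfV)}=0$ for every $[\psi]\in\EE$. By the surjectivity just established, the integrated pairing vanishes against every element of $\sectn{M}{\bfV}$, and nondegeneracy of $\ip{~}{~}$ then forces $G_\bfV(\varphi_\bfV)=0$. Exactness of the complex in Theorem \ref{theo:complex} yields $h\in\sectn{M}{\bfV}$ with $\varphi_\bfV=P_\bfV(h)=P_\bfV^\ast(h)$, the last equality using formal self-adjointness. Pick any $P^\ast\in\mathrm{FormAdj}_P$; the proof of Theorem \ref{theo:adjoin} shows that the linear part of $P^\ast(h)\in\sectn{M}{\bfA^\dagger}$ is precisely $P_\bfV^\ast(h)=\varphi_\bfV$. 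Therefore $\varphi-P^\ast(h)$ has vanishing linear part at every point, and a fiberwise affine map with zero linear part is necessarily a scalar multiple of the distinguished element $\1_x$. Thus $\varphi-P^\ast(h)=a\,\1$ for some $a\in C^\infty_0(M)$, and quotienting by $P^\ast[\sectn{M}{\bfV}]$ in $\EE$ gives $[\varphi]=[a\,\1]$. The converse direction is immediate: if $\varphi=a\,\1$ then $\varphi_\bfV=0$, so $\tau([\psi],[\varphi])=0$ for every $[\psi]\in\EE$.

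For $\mathcal{L}=\mathcal{R}$, I apply Lemma \ref{lem:greenadjoint} with $P^\ast=P_\bfV$ to conclude that $G_\bfV$ is formally skew-adjoint with respect to $\ip{~}{~}$. Combined with the definite symmetry of $\ip{~}{~}$ (symmetric in the bosonic case, antisymmetric in the fermionic case, both of which occur in the framework of this paper), this makes $\tau$ either antisymmetric or symmetric, so in either case its left and right null spaces coincide.

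The delicate step is the second one: one has to chase representatives carefully through two successive quotients ($\mathrm{Triv}$ and $P^\ast[\sectn{M}{\bfV}]$), checking that the assertion "$\varphi-P^\ast(h)$ has zero linear part" is well-defined on classes and that the resulting fiberwise-constant section genuinely lies in $\sectn{M}{\bfA^\dagger}$ with compact support. The surjectivity of the linear-part map is what makes the nondegeneracy argument work; everything else is a clean combination of Theorem \ref{theo:complex}, Lemma \ref{lem:greenadjoint}, and the structure of the formal adjoint constructed in Theorem \ref{theo:adjoin}.
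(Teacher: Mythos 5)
Your characterization of $\mathcal{R}$ follows essentially the same route as the paper: you test against the affine functionals $\psi = \ip{h}{\,\cdot\,-\widehat{s}}$ to establish surjectivity of the linear-part map, deduce $G_\bfV(\varphi_\bfV)=0$ from nondegeneracy, invoke the exact complex of Theorem \ref{theo:complex} to write $\varphi_\bfV = P_\bfV(h)$, and subtract $P^\ast(h)$ to obtain a representative with vanishing linear part, which must be of the form $a\,\1$. This part is correct and matches the paper's argument step for step.

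The one place where you deviate is the final claim $\mathcal{L}=\mathcal{R}$, and there your argument has a gap in generality. You derive it from the (anti)symmetry of $\ip{~}{~}$, asserting that $\ip{~}{~}$ is symmetric or antisymmetric ``in the framework of this paper.'' But Proposition \ref{propo:degen} lives in Section \ref{sec:classaff}, where the only hypothesis on $\ip{~}{~}$ is nondegeneracy; the symmetric/antisymmetric dichotomy is imposed only later, when the subcategories $\GlobHypAffGreen^\mathrm{bos}$ and $\GlobHypAffGreen^\mathrm{ferm}$ are introduced for quantization. So your argument proves $\mathcal{L}=\mathcal{R}$ only for those special cases, not for a general nondegenerate bilinear form. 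The paper's route avoids this: formal skew-adjointness of $G_\bfV$ (Lemma \ref{lem:greenadjoint} applied to the self-adjoint $P_\bfV$) gives $\int_M\vol~\ip{\varphi_\bfV}{G_\bfV(\psi_\bfV)} = -\int_M\vol~\ip{G_\bfV(\varphi_\bfV)}{\psi_\bfV}$ with no symmetry assumption, so $[\varphi]\in\mathcal{L}$ is again equivalent to $G_\bfV(\varphi_\bfV)=0$ by nondegeneracy in the first slot, i.e.\ to the same condition that characterizes $\mathcal{R}$. Replacing your symmetry argument by this observation closes the gap.
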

\begin{proof}
If $\varphi = a\,\1$ with some $a\in C^\infty_0(M)$, then the linear part $\varphi_\bfV =0$ and thus
$\tau\big([\psi],[\varphi]\big) =0$, for all $[\psi]\in\EE$.

To show the other direction, let $[\varphi]\in\mathcal{R}$, i.e.~$\tau\big([\psi],[\varphi]\big) =0$, for all $[\psi]\in\EE$. 
It follows from (\ref{eqn:tau}) that
\begin{flalign}\label{eqn:degtemp}
\int_M\vol~\ip{\psi_\bfV}{G_\bfV(\varphi_\bfV)} =0~,
\end{flalign}
for all linear parts $\psi_\bfV\in \sectn{M}{\bfV}$. For each element $h\in\sectn{M}{\bfV}$
we can define, via fixing an arbitrary section $\widehat{s}\in\sect{M}{\bfA}$, an element 
$\psi\in\sectn{M}{\bfA^\dagger}$ by setting $\psi(s):= \ip{h}{s-\widehat{s}}$, for all $s\in\sect{M}{\bfA}$.
The linear part of this $\psi$ is then $h$, and thus (\ref{eqn:degtemp}) implies that
$G_\bfV(\varphi_\bfV)=0$. As a consequence of $P_\bfV$ being a Green-hyperbolic operator (cf.~Theorem \ref{theo:complex}),
 $\varphi_\bfV \in P_\bfV\big[\sectn{M}{\bfV}\big]$, i.e.~there exists an $h\in\sectn{M}{\bfV}$ 
 such that $\varphi_\bfV = P_\bfV(h)$.
Remember that the linear part of $P^\ast(h)$ is (independently on the choice of $P^\ast\in\mathrm{FormAdj}_P$)
given by $P^\ast(h)_\bfV = P_\bfV(h)$. Thus, subtracting $\varphi - P^\ast(h) =:\varphi^\prime$
we get a representative $\varphi^\prime\in\sectn{M}{\bfA^\dagger}$ of the same equivalence class $[\varphi]$ which has a 
trivial linear part, $\varphi^\prime_\bfV =0$. Since the linear part is trivial, $\varphi^\prime$ has to be of the
form $\varphi^\prime = a\,\1$, with some $a\in C^\infty_0(M)$.

Since $G_\bfV$ is formally skew-adjoint by Lemma \ref{lem:greenadjoint} we have $\mathcal{L}=\mathcal{R}=:\mathcal{N}$.
\end{proof}
\begin{cor}\label{cor:nullspace}
The linear map
\begin{flalign}
\iota: C^\infty_0(M)/\Ker\left(\int_M\vol~\right) \to \mathcal{N}~,~~a\mapsto [a\,\1]
\end{flalign}
is a vector space isomorphism.
\end{cor}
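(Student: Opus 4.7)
The plan is to verify the three properties of $\iota$ in turn: well-definedness, linearity together with surjectivity, and injectivity. The first two are essentially bookkeeping given the preceding results; the substance lies in injectivity, where the exactness of the Green-hyperbolic complex will be the key input.

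First I would check well-definedness. If $a\in\Ker(\int_M\vol~)$, then by definition $a\,\1\in\mathrm{Triv}$, hence $a\,\1$ represents the zero class in $\sectn{M}{\bfA^\dagger}/\mathrm{Triv}$ and \emph{a fortiori} in $\EE$. Moreover, $[a\,\1]\in\mathcal{N}$ directly from Proposition \ref{propo:degen}, which ensures the target of $\iota$ is correct. Linearity of $\iota$ is immediate from the linearity of $a\mapsto a\,\1$ and of the quotient maps. Surjectivity is also an immediate consequence of Proposition \ref{propo:degen}: any $[\varphi]\in\mathcal{N}$ admits a representative of the form $a\,\1$ with $a\in C_0^\infty(M)$, so $\iota([a])=[\varphi]$.

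The nontrivial step is injectivity, and this is where I expect the main (though mild) obstacle. Suppose $\iota([a])=0$ in $\EE$. Unwinding the double quotient, there must exist $h\in\sectn{M}{\bfV}$ and $b\in\Ker(\int_M\vol~)$ such that
\begin{flalign*}
a\,\1 \;=\; P^\ast(h) \,+\, b\,\1 \qquad\text{in } \sectn{M}{\bfA^\dagger}~.
\end{flalign*}
The idea is to extract information by passing to the linear parts in $\sectn{M}{\bfV^\ast}\cong \sectn{M}{\bfV}$. The section $a\,\1$ (and similarly $b\,\1$) has vanishing linear part by construction of $\1$, while by the proof of Theorem \ref{theo:adjoin} the linear part of $P^\ast(h)$ equals $P_\bfV^\ast(h)$, which under the identification coming from $\ip{~}{~}$ and the assumed formal self-adjointness of $P_\bfV$ is just $P_\bfV(h)$. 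Hence the linear parts give $P_\bfV(h)=0$ with $h\in\sectn{M}{\bfV}$.

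At this point I invoke Theorem \ref{theo:complex}: the Green-hyperbolic complex is exact at the leftmost $\sectn{M}{\bfV}$, which means precisely that $P_\bfV$ is injective on compactly supported sections. Therefore $h=0$, and the displayed equation collapses to $a\,\1=b\,\1$, i.e.\ $a=b$ since $\1$ is a nowhere-vanishing section of $\bfA^\dagger$. Consequently $a\in\Ker(\int_M\vol~)$, so $[a]=0$ in $C_0^\infty(M)/\Ker(\int_M\vol~)$, proving injectivity. The corollary then follows.
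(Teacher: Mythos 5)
Your proposal is correct and follows essentially the same route as the paper: surjectivity from Proposition \ref{propo:degen}, and injectivity by passing to linear parts, extracting $P_\bfV(h)=0$, and invoking the exactness of the complex in Theorem \ref{theo:complex} to get $h=0$. The only cosmetic difference is that you unwind the double quotient explicitly with an auxiliary $b\in\Ker\big(\int_M\vol~\big)$, whereas the paper works modulo $\mathrm{Triv}$ throughout; both are equivalent.
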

\begin{proof}
The map $\iota$ is well-defined and linear. By Proposition \ref{propo:degen} it is surjective.
We show that it is injective: Assume that $[a\,\1] = [a^\prime\,\1]\in\EE$. Then there is an $h\in\sectn{M}{\bfV}$, such that
$a\,\1 - a^\prime\,\1 = P^\ast(h)$ in $\sectn{M}{\bfA^\dagger}/\mathrm{Triv}$. (Since we work on
$\sectn{M}{\bfA^\dagger}/\mathrm{Triv}$, the choice of $P^\ast\in \mathrm{FormAdj}_P$ does not matter.)
Taking the linear part of this equation we find $P_\bfV(h)=0$, which due to the Green-hyperbolicity of $P_\bfV$ implies 
$h=0$ (cf.~Theorem \ref{theo:complex}).
Thus, $a\,\1 = a^\prime\,\1$ in $\sectn{M}{\bfA^\dagger}/\mathrm{Triv}$, which implies 
$a-a^\prime \in\Ker\big(\int_M\vol~\big)$.
\end{proof}
\sk


\section{\label{sec:category}Categorical formulation}
We formulate in the language of category theory the association of the phase space $(\EE,\tau)$ in 
Proposition \ref{propo:sympl}.  Let us define the following category:
\begin{defi}\label{def:geometriccategory}
The category $\GlobHypAffGreen$ consists of the following objects and morphisms:
\begin{itemize}
\item An object in $\GlobHypAffGreen$ is a triple $(M,\bfA,P)$, where
\begin{itemize}
\item $M$ is a globally hyperbolic spacetime,
\item $\bfA$ is an affine bundle over $M$, whose underlying vector bundle $\bfV$ is endowed with
a nondegenerate bilinear form $\ip{~}{~}$,
\item $P:\sect{M}{\bfA} \to \sect{M}{\bfV}$ is an affine Green-hyperbolic operator, whose linear part $P_\bfV$
is formally self-adjoint with respect to $\ip{~}{~}$.
\end{itemize}
\item A morphism between two objects $(M_1,\bfA_1,P_1)$ and $(M_2,\bfA_2,P_2)$ in $\GlobHypAffGreen$ 
is a morphism $(f,\underline{f})$ in the category $\AffBund$, such that 
$(f_\bfV,\underline{f})$ (see Lemma \ref{lem:linearpartfunctor}) preserves the bilinear forms and
\begin{itemize}
\item $\underline{f}:M_1\to M_2$ is an orientation and time-orientation preserving isometric embedding with
$\underline{f}[M_1]\subseteq M_2$ causally compatible and open,
\item the following diagram commutes
\begin{flalign}\label{eqn:Pcompatible}
\xymatrix{
\sect{M_2}{\bfA_2} \ar[rr]^-{P_2} \ar[d]_-{f^\ast}& & \sect{M_2}{\bfV_2} \ar[d]^-{{f_\bfV}^\ast}\\
\sect{M_1}{\bfA_1} \ar[rr]^-{P_1} & & \sect{M_1}{\bfV_1}
}
\end{flalign}
where the pull-backs are defined by $f^\ast(s) := f^{-1} \circ s \circ \underline{f}$, for all
$s\in \sect{M_2}{\bfA_2}$, and ${f_\bfV}^\ast(\sigma) := f_\bfV^{-1}\circ \sigma\circ \underline{f}$,
for all $\sigma\in\sect{M_2}{\bfV_2}$.
\end{itemize}
\end{itemize}
\end{defi}
\sk

Given a morphism $(f,\underline{f})$ between two objects $(M_1,\bfA_1,P_1)$ and $(M_2,\bfA_2,P_2)$ in 
the category $\GlobHypAffGreen$, we also define the push-forward
${f_\bfV}_\ast:\sectn{M_1}{\bfV_1} \to\sectn{M_2}{\bfV_2}$ of compactly supported sections,
\begin{flalign}
{f_\bfV}_\ast(\sigma)(x):=\begin{cases}
f_\bfV\big(\sigma(\underline{f}^{-1}(x))\big) & ,~\text{if } x\in \underline{f}[M_1]~,\\
0 & ,~\text{else}~.
\end{cases}
\end{flalign}
We obtain the following
\begin{lem}\label{lem:Greenres}
Let $(f,\underline{f})$ be a morphism between two objects 
$(M_1,\bfA_1,P_1)$ and $(M_2,\bfA_2,P_2)$ in $\GlobHypAffGreen$.
Let further $G^\pm_{1_\bfV}$ and $G^\pm_{2_\bfV}$ be the Green's operators
for the linear parts $P_{1_\bfV}$ and $P_{2_\bfV}$, respectively. Then
$G_{1_\bfV}^\pm = {f_\bfV}^\ast \circ G_{2_\bfV}^\pm \circ {f_\bfV}_\ast$.
\end{lem}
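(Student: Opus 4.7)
The strategy is to define
\begin{flalign}
\widetilde{G}^\pm \,:=\, {f_\bfV}^\ast \circ G_{2_\bfV}^\pm \circ {f_\bfV}_\ast \,:\, \sectn{M_1}{\bfV_1}\to\sect{M_1}{\bfV_1}~,
\end{flalign}
verify that $\widetilde{G}^\pm$ satisfies the three defining properties of a retarded/advanced Green's operator for $P_{1_\bfV}$, and then invoke uniqueness of such operators (Remark~3.7 of \cite{Bar:2011iu}, applicable because $P_{1_\bfV}$ is Green-hyperbolic) to conclude $\widetilde{G}^\pm = G_{1_\bfV}^\pm$.

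First I would extract from the commuting square~\eqref{eqn:Pcompatible} the two intertwining identities ${f_\bfV}^\ast\circ P_{2_\bfV}= P_{1_\bfV}\circ {f_\bfV}^\ast$ on $\sect{M_2}{\bfV_2}$ and ${f_\bfV}_\ast\circ P_{1_\bfV}= P_{2_\bfV}\circ {f_\bfV}_\ast$ on $\sectn{M_1}{\bfV_1}$. The first is obtained by taking the linear part of~\eqref{eqn:Pcompatible}: the pull-back $f^\ast$ is an affine map of section spaces whose linear part is precisely ${f_\bfV}^\ast$, while $P_i$ has linear part $P_{i_\bfV}$ by definition of an affine differential operator. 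The second identity follows from the first by applying ${f_\bfV}_\ast$ and exploiting the elementary relations ${f_\bfV}^\ast\circ {f_\bfV}_\ast = \id$ on $\sectn{M_1}{\bfV_1}$ and ${f_\bfV}_\ast\circ {f_\bfV}^\ast = \id$ on sections of $\bfV_2$ supported in $\underline{f}[M_1]$; that $P_{2_\bfV}({f_\bfV}_\ast h)$ remains supported in the open subset $\underline{f}[M_1]\subseteq M_2$ for every $h\in \sectn{M_1}{\bfV_1}$ uses locality of the differential operator $P_{2_\bfV}$.

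With these intertwiners in hand, axioms~(i) and~(ii) become immediate diagram chases: for any $h\in \sectn{M_1}{\bfV_1}$ one has $P_{1_\bfV}\,\widetilde{G}^\pm(h) = {f_\bfV}^\ast P_{2_\bfV}\, G_{2_\bfV}^\pm\, {f_\bfV}_\ast(h) = {f_\bfV}^\ast {f_\bfV}_\ast(h) = h$ and symmetrically $\widetilde{G}^\pm P_{1_\bfV}(h) = h$. The support axiom~(iii) is where the geometric hypotheses on the morphism enter most visibly: from $\supp({f_\bfV}_\ast h)= \underline{f}[\supp h]$ the support property of $G_{2_\bfV}^\pm$ gives $\supp\bigl(G_{2_\bfV}^\pm {f_\bfV}_\ast h\bigr)\subseteq J^\pm_{M_2}\bigl(\underline{f}[\supp h]\bigr)$, and pulling back one obtains $\supp\bigl(\widetilde{G}^\pm h\bigr)\subseteq \underline{f}^{-1}\bigl[J^\pm_{M_2}(\underline{f}[\supp h])\cap \underline{f}[M_1]\bigr]$, which by the causal compatibility of $\underline{f}[M_1]\subseteq M_2$ coincides with $J^\pm_{M_1}(\supp h)$. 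The main obstacle is keeping the affine/linear bookkeeping in order while extracting the linear parts of~\eqref{eqn:Pcompatible} and carefully justifying the locality/openness argument producing the push-forward intertwiner; once these steps are tidy, the remainder is a short diagram chase followed by the uniqueness statement.
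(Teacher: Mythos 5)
Your proposal is correct and follows essentially the same route as the paper: define $\widetilde{G}^\pm={f_\bfV}^\ast\circ G^\pm_{2_\bfV}\circ{f_\bfV}_\ast$, check the support property via causal compatibility of $\underline{f}[M_1]$, verify the two composition identities using the linear part of the commuting square (\ref{eqn:Pcompatible}) together with ${f_\bfV}^\ast\circ{f_\bfV}_\ast=\id$ and ${f_\bfV}_\ast\circ{f_\bfV}^\ast=\id$ on sections supported in $\underline{f}[M_1]$, and conclude by uniqueness of Green's operators. The only cosmetic difference is that you package the manipulation as an explicit push-forward intertwiner ${f_\bfV}_\ast\circ P_{1_\bfV}=P_{2_\bfV}\circ{f_\bfV}_\ast$, whereas the paper inserts ${f_\bfV}^\ast\circ{f_\bfV}_\ast$ inline; the content is identical.
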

\begin{proof}
The first part of the proof coincides with the one of Lemma 3.2 in \cite{Bar:2011iu}. For completeness, we review the steps.
Let us define $\widetilde{G}^\pm_{1_\bfV}:= {f_\bfV}^\ast\circ G_{2_\bfV}^\pm \circ {f_\bfV}_\ast$
and fix an arbitrary $h\in\sectn{M_1}{\bfV_1}$. We obtain
\begin{flalign}
\nn\supp\big(\widetilde{G}^\pm_{1_\bfV}(h)\big)&= \underline{f}^{-1}\big[\supp\big(G^\pm_{2_\bfV}({f_\bfV}_\ast(h))\big)\big]\\
\nn &\subseteq \underline{f}^{-1}\big[ J^\pm_{M_2}\big(\supp({f_\bfV}_\ast(h))\big)\big]\\
\nn & = \underline{f}^{-1}\big[ J^\pm_{M_2}\big(\underline{f}[\supp(h)]\big)\big]\\
 &= J^\pm_{M_1}\big(\supp(h)\big)~,
\end{flalign}
where in the last line we have used that, by definition, $\underline{f}[M_1] \subseteq M_2$ is causally compatible.

For proving that $\widetilde{G}^\pm_{1_\bfV} = G^\pm_{1_\bfV}$, it remains to show that
$P_{1_\bfV}\circ \widetilde{G}^\pm_{1_\bfV} = \id$ and $\widetilde{G}^\pm_{1_\bfV}\circ P_{1_\bfV} =\id$ on
$\sectn{M_1}{\bfV_1}$. The desired result is then a consequence of the uniqueness of the Green's operators for $P_{1_\bfV}$.
The commutative diagram (\ref{eqn:Pcompatible}) induces the commutative diagram
\begin{flalign}\label{eqn:PVcompatible}
\xymatrix{
\sect{M_2}{\bfV_2} \ar[rr]^-{P_{2_\bfV}} \ar[d]_-{{f_\bfV}^\ast}& & \sect{M_2}{\bfV_2} \ar[d]^-{{f_\bfV}^\ast}\\
\sect{M_1}{\bfV_1} \ar[rr]^-{P_{1_\bfV}} & & \sect{M_1}{\bfV_1}
}~
\end{flalign}
since  the linear part of ${f}^\ast$ is ${f_\bfV}^\ast$.
Hence, we have on $\sectn{M_1}{\bfV_1}$
\begin{subequations}
\begin{flalign}
P_{1_\bfV}\circ \widetilde{G}^\pm_{1_\bfV}  = P_{1_\bfV}\circ {f_\bfV}^\ast \circ G^\pm_{2_\bfV} \circ {f_\bfV}_\ast = {f_\bfV}^\ast\circ P_{2_\bfV}\circ  G^\pm_{2_\bfV} \circ {f_\bfV}_\ast =\id~
\end{flalign}
and 
\begin{flalign}
\nn \widetilde{G}^\pm_{1_\bfV} \circ P_{1_\bfV} &= {f_\bfV}^\ast \circ G^\pm_{2_\bfV} \circ {f_\bfV}_\ast  \circ P_{1_\bfV} = {f_\bfV}^\ast \circ G^\pm_{2_\bfV} \circ {f_\bfV}_\ast  \circ P_{1_\bfV}\circ {f_\bfV}^\ast\circ {f_\bfV}_\ast\\
&
={f_\bfV}^\ast \circ G^\pm_{2_\bfV} \circ {f_\bfV}_\ast  \circ{f_\bfV}^\ast\circ P_{2_\bfV}\circ  {f_\bfV}_\ast={f_\bfV}^\ast \circ G^\pm_{2_\bfV} \circ P_{2_\bfV}\circ {f_\bfV}_\ast = \id~.
\end{flalign}
\end{subequations}
Here we have also used the identities ${f_\bfV}^\ast \circ {f_\bfV}_\ast =\id$ on
all of $\sectn{M_1}{\bfV_1}$ and
${f_\bfV}_\ast\circ {f_\bfV}^\ast= \id$ on the subset $\sectn{\underline{f}[M_1]}{\bfV_2}\subseteq \sectn{M_2}{\bfV_2}$.
\end{proof}
\begin{defi}
The category $\VecBiLin$ consists of the following objects and morphisms:
\begin{itemize}
\item An object in $\VecBiLin$ is a tuple $(\EE,\tau)$, where $\EE$ is a
(possibly infinite-dimensional) vector space and $\tau:\EE\times \EE \to \bbR$ is a bilinear map.
\item A morphism between two objects $(\EE_1,\tau_1)$ and $(\EE_2,\tau_2)$ in $\VecBiLin$
is an {\it injective} linear map $L:\EE_1\to\EE_2$, which preserves the bilinear maps, 
i.e.~$\tau_2(L(v),L(w)) = \tau_1(v,w)$, for all $v,w\in \EE_1$.
\end{itemize}
\end{defi}
\sk

The association of the phase space $(\EE,\tau)$ of Proposition \ref{propo:sympl} is functorial.
We call this functor the phase space functor $\PhaseSpace$.
\begin{theo}\label{theo:PhaseSpace}
There is a covariant functor $\PhaseSpace: \GlobHypAffGreen \to \VecBiLin$. 
It associates to any object $(M,\bfA,P)$ in $\GlobHypAffGreen$ the object 
$\PhaseSpace(M,\bfA,P):= (\EE,\tau)$ in $\VecBiLin$,
 which has been constructed in Proposition \ref{propo:sympl}.
Given a morphism $(f,\underline{f})$ between two objects $(M_1,\bfA_1,P_1)$ and $(M_2,\bfA_2,P_2)$ in $\GlobHypAffGreen$,
the functor associates a morphism  in $\VecBiLin$ as follows
\begin{flalign}\label{eqn:sympmorphism}
\PhaseSpace(f,\underline{f}):\PhaseSpace(M_1,\bfA_1,P_1) \to \PhaseSpace(M_2,\bfA_2,P_2) ~,~~[\varphi]\mapsto [{f^\dagger}_\ast(\varphi)]~,
\end{flalign}
where ${f^\dagger}_\ast: \sectn{M_1}{\bfA^\dagger_1} \to \sectn{M_2}{\bfA^\dagger_2}$
 is the push-forward given in (\ref{eqn:pushforward}).
\end{theo}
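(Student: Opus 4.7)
The object part of the functor is exactly Proposition \ref{propo:sympl}, so the work lies in the morphism part. My plan is to define the tentative map on representatives by $\varphi \mapsto {f^\dagger}_\ast(\varphi)$ and then verify, in this order, that it descends to a linear map $\EE_1 \to \EE_2$, that it is injective, that it intertwines $\tau_1$ and $\tau_2$, and finally that it respects identities and composition.

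The descent requires two inclusions. First, ${f^\dagger}_\ast$ sends $\mathrm{Triv}_1$ into $\mathrm{Triv}_2$: a fibrewise computation shows that $f^\dagger$ fixes the canonical constant-one affine map, so ${f^\dagger}_\ast(a\,\1_1) = \underline{f}_\ast(a)\,\1_2$, and the vanishing of the integral is preserved because $\underline{f}$ is an orientation-preserving isometric embedding. Second, for each $h \in \sectn{M_1}{\bfV_1}$ the image ${f^\dagger}_\ast(P_1^\ast(h))$ must be shown to lie in $P_2^\ast\big[\sectn{M_2}{\bfV_2}\big] + \mathrm{Triv}_2$; the natural candidate is $P_2^\ast({f_\bfV}_\ast(h))$, and I would test both sections against arbitrary $s \in \sect{M_2}{\bfA_2}$ under $\int_{M_2}\volt$. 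A short computation using the defining relation (\ref{eqn:adjointgreenhypop}) of the formal adjoints, the intertwining square (\ref{eqn:Pcompatible}), the preservation of $\ip{~}{~}$ by $f_\bfV$, and the isometry of $\underline{f}$ shows that both integrals coincide. The second statement of Remark \ref{rem:notseparating} then identifies their difference with an element of $\mathrm{Triv}_2$.

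Injectivity is the step I expect to be the main obstacle. Assuming $[{f^\dagger}_\ast(\varphi)] = 0$ in $\EE_2$, I first pass to linear parts to obtain ${f_\bfV}_\ast(\varphi_\bfV) = P_{2_\bfV}(k)$ for some $k \in \sectn{M_2}{\bfV_2}$. Applying $G_{2_\bfV}$ and then ${f_\bfV}^\ast$ and invoking Lemma \ref{lem:Greenres} yields $G_{1_\bfV}(\varphi_\bfV) = 0$; exactness of the complex in Theorem \ref{theo:complex} then produces $h \in \sectn{M_1}{\bfV_1}$ with $\varphi_\bfV = P_{1_\bfV}(h)$. Since the linear part of $P_1^\ast(h)$ is $P_{1_\bfV}(h)$ by Theorem \ref{theo:adjoin} together with formal self-adjointness, the difference $\varphi - P_1^\ast(h)$ has trivial linear part and hence equals $a\,\1_1$ for some $a \in C_0^\infty(M_1)$. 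Pushing the decomposition $\varphi = P_1^\ast(h) + a\,\1_1$ forward and comparing with the assumed form of ${f^\dagger}_\ast(\varphi)$, a $P_2^\ast$-term cancels via the injectivity of $P_{2_\bfV}$ on compactly supported sections (again Theorem \ref{theo:complex}), and what remains forces $\int_{M_1}\volo\,a = 0$ by the isometry of $\underline{f}$. Hence $a\,\1_1 \in \mathrm{Triv}_1$ and $[\varphi] = 0$ in $\EE_1$.

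Preservation of $\tau$ is then a direct computation from (\ref{eqn:tau}): the linear part of ${f^\dagger}_\ast(\varphi)$ is ${f_\bfV}_\ast(\varphi_\bfV)$, which is supported in $\underline{f}[M_1]$, so after restricting the integration domain and pulling back along $\underline{f}$ while using the $\ip{~}{~}$-isometry of $f_\bfV$, Lemma \ref{lem:Greenres} collapses ${f_\bfV}^\ast \circ G_{2_\bfV} \circ {f_\bfV}_\ast$ to $G_{1_\bfV}$ and one recovers $\tau_1([\varphi],[\psi])$. Linearity is immediate from the construction, and the functor axioms are clear: $\PhaseSpace$ sends identities to identities, and compatibility with composition follows from $(g \circ f)^\dagger = g^\dagger \circ f^\dagger$ (Lemma \ref{lem:dualbundfunctor}) combined with the corresponding naturality of the push-forward of compactly supported sections.
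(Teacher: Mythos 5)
Your proposal is correct, and everything except the injectivity step proceeds essentially as in the paper: the two descent inclusions, the $\tau$-preservation via Lemma \ref{lem:Greenres}, and the functoriality are argued the same way. Where you genuinely diverge is injectivity. The paper works forward from the hypothesis ${f^\dagger}_\ast(\varphi)=P_2^\ast(h)$: it recovers $h=G^\pm_{2_\bfV}\big({f_\bfV}_\ast(\varphi_\bfV)\big)$, shows ${f_\bfV}^\ast(h)=G^\pm_{1_\bfV}(\varphi_\bfV)$ is compactly supported, splits $h=\chi+{f_\bfV}_\ast\big({f_\bfV}^\ast(h)\big)$ with $\chi$ vanishing on $\underline{f}[M_1]$, and kills the $\chi$-contribution by locality of $P_2^\ast$, producing the explicit preimage $k={f_\bfV}^\ast(h)$. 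You instead derive $G_{1_\bfV}(\varphi_\bfV)=0$ from Lemma \ref{lem:Greenres} and $G\circ P=0$, invoke exactness of the complex of Theorem \ref{theo:complex} on $M_1$ to get $\varphi_\bfV=P_{1_\bfV}(h)$ (mirroring the argument of Proposition \ref{propo:degen}), and then dispose of the residual $a\,\1_1$ by pushing the decomposition forward, cancelling the $P_2^\ast$-terms via injectivity of $P_{2_\bfV}$ on $\sectn{M_2}{\bfV_2}$, and integrating. Both routes are valid. The paper's is more constructive (it exhibits the preimage directly), while yours is slightly cleaner conceptually and, notably, tracks the $\mathrm{Triv}$ ambiguity explicitly all the way through --- the paper's injectivity paragraph suppresses the $a\,\1$ residual by tacitly working modulo $\mathrm{Triv}$, and a pull-back of an element of $\mathrm{Triv}_2$ need not lie in $\mathrm{Triv}_1$, so your final comparison-and-integration step is a genuinely useful piece of bookkeeping that the paper leaves to the reader.
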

\begin{proof}
We have to show that (\ref{eqn:sympmorphism}) is well-defined, which consists of two steps:
First, we have to prove that ${f^\dagger}_\ast$ maps $\mathrm{Triv}_1$ into $\mathrm{Triv}_2$
and thus gives rise to a map (denoted with a slight abuse of notation by the same symbol)
${f^\dagger}_\ast: \sectn{M_1}{\bfA^\dagger_1} /\mathrm{Triv}_1 \to \sectn{M_2}{\bfA^\dagger_2}/\mathrm{Triv}_2$.
Second, we have to show that  this 
${f^\dagger}_\ast$ maps $P_1^\ast[\sectn{M_1}{\bfV_1}]$ to $P_2^\ast[\sectn{M_2}{\bfV_2}]$.

For the first point, let $\varphi \in\mathrm{Triv}_1$ be arbitrary, i.e.~$\varphi = a\,\1_1$ with some
 $a\in\Ker\big(\int_{M_1}\volo~\big)$.
The push-forward gives ${f^\dagger}_\ast(\varphi) = \underline{f}_\ast(a)\,\1_2$,
where $\underline{f}_\ast(a)$ denotes the push-forward  of compactly supported functions 
(see (\ref{eqn:pushforwardfunct})). We obtain that
 ${f^\dagger}_\ast(\varphi)\in\mathrm{Triv}_2$, since
 \begin{flalign}
 \int_{M_2}\volt~\underline{f}_\ast(a) = \int_{\underline{f}[M_1]}\volt~ \underline{f}_\ast(a) = \int_{M_1}\volo~a =0~.
 \end{flalign}

To show the second point, let $h\in\sectn{M_1}{\bfV_1}$ and $s\in\sect{M_2}{\bfA_2}$ be arbitrary. We have
\begin{flalign}
\nn \int_{M_2}\volt~\Big({f^\dagger}_\ast\big(P_1^\ast(h)\big)\Big)(s)& ~\,= \int_{M_1} \volo~\big(P_1^\ast(h)\big)(f^\ast(s))
=\int_{M_1} \volo~\ip{h}{P_1\big(f^\ast(s)\big)}_1^{~}\\
\nn & \stackrel{\text{(\ref{eqn:Pcompatible})}}{=}\int_{M_1} \volo~\ip{h}{{f_\bfV}^\ast\big(P_2(s)\big)}_1^{~}
= \int_{M_2} \volt~\ip{{f_\bfV}_\ast(h)}{P_2(s)}_2^{~} \\
&~\,= \int_{M_2}\volt~\Big(P_2^\ast\big({f_\bfV}_\ast(h)\big)\Big)(s)~.
\end{flalign}
Thus, we have the compatibility condition ${f^\dagger}_\ast \circ P_1^\ast = P_2^\ast\circ {f_\bfV}_\ast$,
if we regard both maps from  $\sectn{M_1}{\bfV_1}$ to $\sectn{M_2}{\bfA^\dagger_2}/\mathrm{Triv}_2$. 
This implies that the map  (\ref{eqn:sympmorphism}) is well-defined.

We check if the map (\ref{eqn:sympmorphism})  preserves the bilinear maps $\tau$.
Let $[\varphi],[\psi]\in \PhaseSpace(M_1,\bfA_1,P_1)$ be arbitrary. We obtain from (\ref{eqn:tau})
\begin{flalign}
 \tau_2([{f^\dagger}_\ast(\varphi)],[{f^\dagger}_\ast(\psi)]) = 
\int_{M_2}\volt~\ip{{f^\dagger}_\ast(\varphi)_\bfV}{G_{2_\bfV}({f^\dagger}_\ast(\psi)_\bfV)}_2^{}~.
\end{flalign}
The linear part of the push-forward ${f^\dagger}_\ast(\varphi)$ is given by the push-forward
 ${f_\bfV}_\ast(\varphi_\bfV)$ of the linear part of $\varphi$ (the same applies to $\psi$).
Thus,
\begin{flalign}
\nn \tau_2([{f^\dagger}_\ast(\varphi)],[{f^\dagger}_\ast(\psi)])  &=  \int_{M_2}\volt~\ip{{f_\bfV}_\ast(\varphi_\bfV)}{G_{2_\bfV}({f_\bfV}_\ast(\psi_\bfV))}_2^{}\\
\nn &= \int_{M_1}\volo~\ip{\varphi_\bfV}{{f_\bfV}^\ast\big(G_{2_\bfV}({f_\bfV}_\ast(\psi_\bfV))\big)}_1^{}\\
&= \int_{M_1}\volo~\ip{\varphi_\bfV}{G_{1_\bfV}(\psi_\bfV)}_1^{}= \tau_1([\varphi],[\psi])~,
\end{flalign}
where we have used Lemma \ref{lem:Greenres} in the last line.

It remains to be proven the injectivity of the map (\ref{eqn:sympmorphism}), i.e.~we have to show that
${f^\dagger}_\ast(\varphi) = P_2^\ast(h)$, for some $h\in\sectn{M_2}{\bfV_2}$, only if
$\varphi = P_1^\ast(k)$, for some $k\in \sectn{M_1}{\bfV_1}$. 
To study the properties of $h$ let us consider the linear part 
${f_\bfV}_\ast(\varphi_\bfV) = P_{2_\bfV}(h)$ of the equation ${f^\dagger}_\ast(\varphi) = P_2^\ast(h)$.
Since ${f_\bfV}_\ast(\varphi_\bfV)$ and $h$ are by definition of compact support in $M_2$,
and $P_{2_\bfV}$ is Green-hyperbolic, we can use the Green's operators $G^\pm_{2_\bfV}$ for $P_{2_\bfV}$
to obtain $h=G^+_{2_\bfV}\big({f_\bfV}_\ast(\varphi_\bfV)\big)=G^-_{2_\bfV}\big({f_\bfV}_\ast(\varphi_\bfV)\big)$.
Applying the pull-back ${f_\bfV}^\ast$ to $h$ we find ${f_\bfV}^\ast(h) =
{f_\bfV}^\ast\big(G^\pm_{2_\bfV}\big({f_\bfV}_\ast(\varphi_\bfV)\big)\big) = G^\pm_{1_\bfV}(\varphi_\bfV)$,
 where in the last equality we have used Lemma \ref{lem:Greenres}. Since $\varphi_\bfV$ is of compact support in $M_1$
 and $M_1$ is globally hyperbolic, also the support of ${f_\bfV}^\ast(h)$ is compact in $M_1$.
 We thus can apply ${f_\bfV}_\ast$ and write
 $h= h-{f_\bfV}_\ast \big({f_\bfV}^\ast(h)\big)+{f_\bfV}_\ast \big({f_\bfV}^\ast(h)\big) =: 
 \chi + {f_\bfV}_\ast \big({f_\bfV}^\ast(h)\big)$. Notice the support property $\chi(x) =0$,
  for all $x\in \underline{f}[M_1]$.
 We obtain for our initial $\varphi$ the following expression
 \begin{flalign}
 \varphi = {f^\dagger}^\ast\big(P_2^\ast(h)\big) = {f^\dagger}^\ast\Big(P_2^\ast(\chi) + P_2^\ast\big({f_\bfV}_\ast \big({f_\bfV}^\ast(h)\big)\Big) = {f^\dagger}^\ast\big(P_2^{\ast}(\chi)\big)
 + P_1^\ast\big({f_\bfV}^\ast(h)\big)~.
 \end{flalign}
 Because of the support property of $\chi$ and the fact that $P_2^\ast$ is a differential operator, hence preserving the support,
 we obtain that ${f^\dagger}^\ast\big(P_2^\ast(\chi)\big) \equiv 0$ and
  thus $\varphi = P_1^\ast\big({f_\bfV}^\ast(h)\big)$.
 Injectivity follows by setting $k={f_\bfV}^\ast(h)$ .

The functor properties follow from the functoriality of assigning to a morphism
 $(f,\underline{f})$ in $\GlobHypAffGreen$ the
push-forward ${f^\dagger}_\ast$ defined in (\ref{eqn:pushforward}).
\end{proof}
\sk

We conclude this subsection by proving two important theorems on properties of the functor
$\PhaseSpace$.
\begin{theo}\label{theo:causalclass}
The covariant functor $\PhaseSpace: \GlobHypAffGreen \to \VecBiLin$ satisfies the classical causality property:\vspace{2mm}

Let $(M_i,\bfA_i,P_i)$, $i=1,2,3$, be objects in $\GlobHypAffGreen$
and $(f_j,\underline{f_j})$ morphisms from $(M_j,\bfA_j,P_j)$ to
$(M_3,\bfA_3,P_3)$, $j=1,2$, such that $\underline{f_1}[M_1]$ and $\underline{f_2}[M_2]$ are causally disjoint 
 in $M_3$. Then  $\tau_3$ acts trivially among the vector subspaces
 $\PhaseSpace(f_1,\underline{f_1})\big[\PhaseSpace(M_1,\bfA_1,P_1)\big]$
and $\PhaseSpace(f_2,\underline{f_2})\big[\PhaseSpace(M_2,\bfA_2,P_2)\big]$ of $\PhaseSpace(M_3,\bfA_3,P_3)$,
i.e.~for all $[\varphi]\in \PhaseSpace(M_1,\bfA_1,P_1)$ and $[\psi]\in 
\PhaseSpace(M_2,\bfA_2,P_2)$,
\begin{subequations}
\begin{flalign}
\tau_3\big(\PhaseSpace(f_1,\underline{f_1})([\varphi]),\PhaseSpace(f_2,\underline{f_2})([\psi])\big) =0~,
\end{flalign}
and
\begin{flalign} \label{eqn:classcaustemp}
\tau_3\big(\PhaseSpace(f_2,\underline{f_2})([\psi]),\PhaseSpace(f_1,\underline{f_1})([\varphi])\big)=0~.
\end{flalign}
\end{subequations}
\end{theo}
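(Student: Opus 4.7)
The plan is to reduce both equalities to a standard support argument using the causal propagator $G_{3_\bfV}$ of the linear part $P_{3_\bfV}$. First I would unfold the definition of $\tau_3$ from Proposition \ref{propo:sympl}. Applied to the pushed-forward classes $\PhaseSpace(f_j,\underline{f_j})([\varphi])=[{f_1^\dagger}_\ast(\varphi)]$ and $\PhaseSpace(f_2,\underline{f_2})([\psi])=[{f_2^\dagger}_\ast(\psi)]$, and using the fact (already noted in the proof of Theorem \ref{theo:PhaseSpace}) that the linear part of a push-forward ${f_j^\dagger}_\ast$ is the vector-bundle push-forward ${f_{j_\bfV}}_\ast$ of the linear part, the left-hand side becomes
\begin{flalign}
\tau_3\big([{f_1^\dagger}_\ast(\varphi)],[{f_2^\dagger}_\ast(\psi)]\big)=\int_{M_3}\volthree~\big\langle {f_{1_\bfV}}_\ast(\varphi_\bfV),\,G_{3_\bfV}\big({f_{2_\bfV}}_\ast(\psi_\bfV)\big)\big\rangle_3^{}~.
\end{flalign}

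Next I would control the supports of the two entries of the bilinear form. The push-forward ${f_{1_\bfV}}_\ast(\varphi_\bfV)$ is supported in $\underline{f_1}[M_1]\subseteq M_3$, by definition of the push-forward of a compactly supported section. For the second factor, $G_{3_\bfV}=G^+_{3_\bfV}-G^-_{3_\bfV}$ and each Green's operator satisfies $\supp\big(G^\pm_{3_\bfV}(k)\big)\subseteq J^\pm_{M_3}\big(\supp(k)\big)$, so
\begin{flalign}
\supp\big(G_{3_\bfV}({f_{2_\bfV}}_\ast(\psi_\bfV))\big)\subseteq J_{M_3}\big(\supp({f_{2_\bfV}}_\ast(\psi_\bfV))\big)\subseteq J_{M_3}\big(\underline{f_2}[M_2]\big)~.
\end{flalign}
Since by hypothesis $\underline{f_1}[M_1]$ and $\underline{f_2}[M_2]$ are causally disjoint in $M_3$, i.e.~$\underline{f_1}[M_1]\cap J_{M_3}(\underline{f_2}[M_2])=\emptyset$, the two supports above are disjoint, hence the integrand vanishes identically and the first equality follows.

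For the second equality (\ref{eqn:classcaustemp}), I would invoke formal skew-adjointness of the causal propagator: because $P_{3_\bfV}$ is formally self-adjoint with respect to $\ip{~}{~}_3$, Lemma \ref{lem:greenadjoint} gives $\int_{M_3}\volthree~\ip{G_{3_\bfV}(h)}{k}_3=-\int_{M_3}\volthree~\ip{h}{G_{3_\bfV}(k)}_3$ for all $h,k\in\sectn{M_3}{\bfV_3}$. Consequently $\tau_3\big([\beta],[\alpha]\big)=-\tau_3\big([\alpha],[\beta]\big)$ for any two classes, and (\ref{eqn:classcaustemp}) follows from the first equality. Alternatively, the support argument is symmetric in the indices $1$ and $2$ (since causal disjointness is symmetric), so one can derive (\ref{eqn:classcaustemp}) by repeating the previous step with the roles of the two morphisms exchanged. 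I do not anticipate any real obstacle here: the only subtle point is to remember that the linear part of the push-forward by an affine bundle morphism equals the push-forward of the linear part, which has already been established in the proof of Theorem \ref{theo:PhaseSpace}; everything else reduces to the standard causality argument for Green-hyperbolic operators on vector bundles.
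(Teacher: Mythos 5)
Your main argument is exactly the paper's: unfold $\tau_3$, use that the linear part of ${f_j^\dagger}_\ast$ is ${f_{j_\bfV}}_\ast$ applied to the linear part, and conclude from $\supp\big({f_{1_\bfV}}_\ast(\varphi_\bfV)\big)\subseteq \underline{f_1}[M_1]$ together with $\supp\big(G_{3_\bfV}({f_{2_\bfV}}_\ast(\psi_\bfV))\big)\subseteq J_{M_3}\big(\underline{f_2}[M_2]\big)$ and causal disjointness that the integrand vanishes. The paper likewise disposes of (\ref{eqn:classcaustemp}) ``by the same argument,'' i.e.\ your second alternative of exchanging the roles of the two morphisms.

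One caveat on your first suggested route to (\ref{eqn:classcaustemp}): in the full category $\GlobHypAffGreen$ the fibre metric $\ip{~}{~}_3$ is only assumed nondegenerate, neither symmetric nor antisymmetric. Lemma \ref{lem:greenadjoint} does give $\int_{M_3}\volthree~\ip{G_{3_\bfV}(h)}{k}_3=-\int_{M_3}\volthree~\ip{h}{G_{3_\bfV}(k)}_3$, but to convert $\tau_3([\beta],[\alpha])=\int_{M_3}\volthree~\ip{\beta_\bfV}{G_{3_\bfV}(\alpha_\bfV)}_3$ into $\pm\,\tau_3([\alpha],[\beta])$ you must additionally swap the two slots of $\ip{~}{~}_3$, which requires (anti)symmetry of the fibre metric and is only available in the subcategories $\GlobHypAffGreen^\mathrm{bos/ferm}$. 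So the skew-adjointness shortcut does not prove (\ref{eqn:classcaustemp}) at the stated level of generality; the symmetric support argument, which you also give, is the one that works and is what the paper uses.
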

\begin{proof}
From (\ref{eqn:sympmorphism}) and (\ref{eqn:tau}) we find
\begin{flalign}
\tau_3\big(\PhaseSpace(f_1,\underline{f_1})([\varphi]),\PhaseSpace(f_2,\underline{f_2})([\psi])\big) 
=\int_{M_3} \volthree~ \ip{{f_1^\dagger}_\ast(\varphi)_\bfV}{G_{3_\bfV}({f_2^\dagger}_\ast(\psi)_\bfV)}_3=0~,
\end{flalign}
since $\supp\big({f_1^\dagger}_\ast(\varphi)\big)\subseteq \underline{f_1}[M_1]$
and $\supp\big({f_2^\dagger}_\ast(\psi)\big)\subseteq \underline{f_2}[M_2]$
are causally disjoint, and thus 
$\supp\big({f_1^\dagger}_\ast(\varphi)\big) \cap \supp \big(G_{3_\bfV}({f_2^\dagger}_\ast(\psi))\big)=\emptyset$.
The proof of (\ref{eqn:classcaustemp}) follows by the same argument.
\end{proof}
\begin{theo}\label{theo:timesliceclass}
The covariant functor $\PhaseSpace: \GlobHypAffGreen \to \VecBiLin$ satisfies the classical time-slice axiom:\vspace{2mm}

Let $(M_j,\bfA_j,P_j)$, $j=1,2$, be objects in $\GlobHypAffGreen$ and $(f,\underline{f})$ a morphism
from $(M_1,\bfA_1,P_1)$ to $(M_2,\bfA_2,P_2)$  such that $\underline{f}[M_1]\subseteq M_2$  contains a Cauchy
surface of $M_2$. Then 
\begin{flalign}\label{eqn:timesliceiso}
\PhaseSpace(f,\underline{f}): \PhaseSpace(M_1,\bfA_1,P_1) \to \PhaseSpace(M_2,\bfA_2,P_2)
\end{flalign}
is an isomorphism.
\end{theo}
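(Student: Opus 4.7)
By Theorem \ref{theo:PhaseSpace}, $\PhaseSpace(f,\underline{f})$ is already an injective linear map preserving the bilinear forms; hence it suffices to prove surjectivity. My strategy is to show that every class $[\psi]\in\PhaseSpace(M_2,\bfA_2,P_2)$ admits a representative $\psi'\in\sectn{M_2}{\bfA_2^\dagger}$ whose support is contained in $\underline{f}[M_1]$. Then the fibrewise inverse of $f^\dagger$ along $\underline{f}$ produces an element $\varphi\in\sectn{M_1}{\bfA_1^\dagger}$ satisfying ${f^\dagger}_\ast(\varphi)=\psi'$, so that $\PhaseSpace(f,\underline{f})([\varphi])=[\psi']=[\psi]$.

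Geometrically, I would invoke Theorem \ref{theo:bernalsanchez} and the openness of $\underline{f}[M_1]$ to choose Cauchy surfaces $\Sigma_-,\Sigma_+$ of $M_2$, both contained in $\underline{f}[M_1]$, with $\Sigma_+$ strictly to the future of $\Sigma_-$ and such that the closed slab $S:=J^+_{M_2}(\Sigma_-)\cap J^-_{M_2}(\Sigma_+)$ lies entirely in $\underline{f}[M_1]$. Fix a cutoff $\chi\in C^\infty(M_2)$ with $\chi=1$ on $J^-_{M_2}(\Sigma_-)$ and $\chi=0$ on $J^+_{M_2}(\Sigma_+)$, so that $\dd\chi$ is supported in $S$. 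Given any representative $\psi\in\sectn{M_2}{\bfA_2^\dagger}$, define
\begin{flalign*}
h := \chi\,G^+_{2_\bfV}(\psi_\bfV) + (1-\chi)\,G^-_{2_\bfV}(\psi_\bfV)\in\sectn{M_2}{\bfV_2}~,
\end{flalign*}
whose compact support follows from global hyperbolicity and the support properties of $G^\pm_{2_\bfV}$. A direct computation exploiting $P_{2_\bfV}\circ G^\pm_{2_\bfV}=\id$ gives $\psi_\bfV-P_{2_\bfV}(h)=-[P_{2_\bfV},\chi]\,G_{2_\bfV}(\psi_\bfV)$, which is compactly supported in $S$. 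By the formula derived in the proof of Theorem \ref{theo:adjoin}, the linear part of $\widetilde{\psi}:=\psi-P_2^\ast(h)$ coincides with $\psi_\bfV-P_{2_\bfV}(h)$ and is therefore supported in $S$.

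The genuinely affine step is now to kill the constant contribution of $\widetilde{\psi}$ outside $S$. For $x\notin S$, vanishing of the linear part forces $\widetilde{\psi}(x)=c(x)\,\1_x$, where the function $c(x):=\widetilde{\psi}(x)(\widehat{s}(x))$, defined via any fixed reference section $\widehat{s}\in\sect{M_2}{\bfA_2}$, belongs to $C^\infty_0(M_2)$. Since the interior of $S$ is nonempty, I may pick $b\in C^\infty_0(S^\mathrm{int})$ with $\int_{M_2}\volt\,b=\int_{M_2}\volt\,c$ and set $a:=-c+b\in C^\infty_0(M_2)$; by construction $a=-c$ on $M_2\setminus S$ and $\int_{M_2}\volt\,a=0$, so $a\,\1_2\in\mathrm{Triv}_2$. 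Setting $\psi':=\widetilde{\psi}+a\,\1_2$, a direct check yields $\supp(\psi')\subseteq S\subseteq\underline{f}[M_1]$, while $\psi-\psi'=P_2^\ast(h)-a\,\1_2\in P_2^\ast\big[\sectn{M_2}{\bfV_2}\big]+\mathrm{Triv}_2$ shows $[\psi]=[\psi']$ in $\EE_2$, producing the sought preimage.

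The main technical obstacle is the geometric lemma that produces the slab $S$ bounded by two Cauchy surfaces of $M_2$ entirely inside $\underline{f}[M_1]$; once it is available, the construction of $h$ is the standard time-slice argument of the linear theory, analogous to the one used in \cite{Bar:2011iu}. The correction step by $a\,\1_2\in\mathrm{Triv}_2$ is specific to the affine case and exploits precisely the freedom built into the quotient by $\mathrm{Triv}_2$ to absorb the pointwise constant part of $\widetilde{\psi}$ that lies outside $S$.
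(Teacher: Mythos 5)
Your proof is correct and follows essentially the same strategy as the paper's: reduce surjectivity to finding a representative supported in $\underline{f}[M_1]$, use the standard linear time-slice construction (your symmetric cutoff $h=\chi\,G^+_{2_\bfV}(\psi_\bfV)+(1-\chi)\,G^-_{2_\bfV}(\psi_\bfV)$ is an equivalent variant of the paper's $h=-\chi\,G^-_{2_\bfV}(\varphi_\bfV)$ after a without-loss-of-generality reduction) to localize the linear part in the slab, and then exploit the quotient by $\mathrm{Triv}_2$ to absorb the residual constant part via a bump function of matching integral. The only cosmetic difference is that you cancel the constant part pointwise via $a=-c+b$, whereas the paper splits $\varphi'$ against a reference section $\widehat{s}$ and adjusts the height of the bump; these are the same computation.
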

\begin{proof}
Injectivity of $\PhaseSpace(f,\underline{f})$ is given for any morphism $(f,\underline{f})$ by Theorem \ref{theo:PhaseSpace}.
It thus remains to prove surjectivity.

Let us take an arbitrary $[\varphi] \in \PhaseSpace(M_2,\bfA_2,P_2) $ and consider a
representative $\varphi\in\sectn{M_2}{\bfA^\dagger_2}$. If $\supp(\varphi)\subset \underline{f}[M_1]$,
we obtain by using the pull-back and push-forward maps
$\varphi = {f^\dagger}_\ast\big({f^\dagger}^\ast(\varphi)\big)$, with 
${f^\dagger}^\ast(\varphi)\in\sectn{M_1}{\bfA_1^\dagger}$. This gives 
$[\varphi] = \PhaseSpace(f,\underline{f})\big([{f^\dagger}^\ast(\varphi)]\big)$.
Thus, surjectivity would follow, if we could prove that for all $[\varphi] \in \PhaseSpace(M_2,\bfA_2,P_2) $ 
there is a representative $\varphi^\prime\in\sectn{M_2}{\bfA^\dagger_2}$, such that
$\supp(\varphi^\prime)\subset \underline{f}[M_1]$.

Given an arbitrary representative $\varphi\in\sectn{M_2}{\bfA^\dagger_2}$ we are thus looking
for  $h\in \sectn{M_2}{\bfV_2}$ and  $a\in\Ker\big(\int_{M_2}\volt~\big)$, such that
\begin{flalign}\label{eqn:temptimeslice}
\varphi^\prime = \varphi + P_2^\ast(h) + a\,\1_2
\end{flalign}
 has the required support property. By construction,
$[\varphi^\prime] = [\varphi]$ in $\PhaseSpace(M_2,\bfA_2,P_2)$. The choice of $P^\ast\in\mathrm{FormAdj}_P$
does not matter, since different choices just lead to a redefinition of $a$.
Notice that (\ref{eqn:temptimeslice}) implies the following equation for the linear parts
\begin{flalign}\label{eqn:temptimeslice2}
\varphi^\prime_\bfV = \varphi_\bfV + P_{2_\bfV}(h)~.
\end{flalign}
This is the same equation as it appears in linear matter field theories and thus an $h\in\sectn{M_2}{\bfV_2}$ can be found
such that $\varphi^\prime_\bfV$ has the required support property. For completeness we recall the proof.
 Let $\Sigma_2\subset \underline{f}[M_1] \subseteq M_2$
 be the Cauchy surface assumed in the hypotheses of this theorem. We denote by $\Sigma_1\subset M_1$
 the subset given by $\Sigma_1 := \underline{f}^{-1}[\Sigma_2]$. Then $\Sigma_1$ is a Cauchy surface
 in $M_1$ and by Theorem \ref{theo:bernalsanchez} there exist two more Cauchy surfaces $\Sigma_1^\pm\neq \Sigma_1$
 of $M_1$ in the future/past of $\Sigma_1$. Since $\underline{f}[M_1]\subseteq M_2$ is causally compatible,
 the subsets $\Sigma_2^\pm := \underline{f}[\Sigma_1^\pm]$ are Cauchy surfaces in $M_2$ and
 $J^-_{M_2}(\Sigma_2^+) \cap J^+_{M_2}(\Sigma_2^-) \subseteq \underline{f}[M_1] \subseteq M_2$.
The statement now follows if we can show that for any $\varphi_\bfV\in \sectn{M_2}{\bfV_2}$ 
there is an $h\in \sectn{M_2}{\bfV_2}$
such that $\varphi_\bfV^\prime$ defined in (\ref{eqn:temptimeslice2}) has support in 
$J^-_{M_2}(\Sigma_2^+) \cap J^+_{M_2}(\Sigma_2^-) $. We can assume without loss of generality that
$\varphi_\bfV$ has support in $J_{M_2}^+(\Sigma_2^-)$. Choosing any $\chi\in C^\infty(M_2)$, such that
$\chi\equiv 0$ on $J^-_{M_2}(\Sigma_2^-)$ and $\chi \equiv 1$ on $J^+_{M_2}(\Sigma_2^+)$
we define $h:= - \chi\, G_{2_\bfV}^-(\varphi_\bfV)$. Then $\varphi_\bfV^\prime$ defined in (\ref{eqn:temptimeslice2})
has support in $J^-_{M_2}(\Sigma_2^+) \cap J^+_{M_2}(\Sigma_2^-) $, which proves the statement.

 Let us now continue  investigating (\ref{eqn:temptimeslice}). By fixing 
 any section $\widehat{s}\in\sect{M_2}{\bfA_2}$ we can split $\varphi^\prime$
 into a constant part and the induced linear part, $\varphi^\prime = \varphi^\prime(\widehat{s})\,\1_2 + 
  \varphi^\prime_\bfV(\,\cdot\, -\widehat{s})$. Then (\ref{eqn:temptimeslice}) is equivalent to (\ref{eqn:temptimeslice2}) and 
 the following equation on $C^\infty_0(M_2)$
 \begin{flalign}
 \varphi^\prime(\widehat{s}) = \varphi(\widehat{s}) +\big( P_2^\ast(h)\big)(\widehat{s}) + a~.
 \end{flalign}
 Notice that we have already specified $h$ via  (\ref{eqn:temptimeslice2}), thus we now  use 
 $a\in \Ker\big(\int_{M_2}\volt~\big)$ in order to obtain a $\varphi^\prime(\widehat{s})$ of the
 required support. We make the ansatz
 \begin{flalign}
 a= -\varphi(\widehat{s}) - \big(P_2^\ast(h)\big)(\widehat{s}) + a^\prime~,
 \end{flalign}
 where $a^\prime \in C^\infty_0(\underline{f}[M_1])$ is a (positive or negative definite) 
 bump function localized in $\underline{f}[M_1]\subseteq M_2$.
 In order to have $a\in\Ker\big(\int_{M_2}\volt~\big)$, we can always adjust the height of the bump function, since all
 individual integrals exist. Hence, we also find $\varphi^\prime(\widehat{s})=a^\prime$ with the required support property.
 In total, $\varphi^\prime = a^\prime \,\1_2 + \varphi_\bfV^\prime(\,\cdot\, -\widehat{s})$ has compact support in
 $\underline{f}[M_1]\subseteq M_2$ and from the argument in the paragraph above we obtain that (\ref{eqn:timesliceiso})
 is bijective.
\end{proof}


\section{\label{sec:quant}Bosonic and fermionic quantization}
For discussing the quantization of affine matter field theories we have to 
specialize to certain subcategories of $\GlobHypAffGreen$, which describe bosonic and, respectively, fermionic
field theories.

\begin{defi}
We define the subcategory $\GlobHypAffGreen^\mathrm{bos}$  of $\GlobHypAffGreen$ as follows:
\begin{itemize}
\item An object $(M,\bfA,P)$ in $\GlobHypAffGreen$ is an object in $\GlobHypAffGreen^\mathrm{bos}$,
if the nondegenerate bilinear form $\ip{~}{~}$ on the vector bundle $\bfV$ underlying $\bfA$ is {\bf symmetric}.
\item Given two objects $(M_1,\bfA_1,P_1)$ and $(M_2,\bfA_2,P_2)$ in $\GlobHypAffGreen^\mathrm{bos}$
we take as morphisms in $\GlobHypAffGreen^\mathrm{bos}$ all morphisms available in $\GlobHypAffGreen$.
\end{itemize}
We define the subcategory $\GlobHypAffGreen^\mathrm{ferm}$ of $\GlobHypAffGreen$ analogously to above
 by replacing symmetric with {\bf antisymmetric}.
\end{defi}
\sk

We denote by $\VecBiLin^{\mathrm{asym}}$ the following category:
An object in $\VecBiLin^{\mathrm{asym}}$  is given by a tuple $(\EE,\tau)$, where $\EE$ is 
a (possibly infinite-dimensional) vector space
and $\tau$ is an antisymmetric bilinear map $\tau:\EE\times\EE\to \bbR$.
A morphism between two objects $(\EE_1,\tau_1)$ and $(\EE_2,\tau_2)$ in $\VecBiLin^{\mathrm{asym}}$ is an injective linear map
$L:\EE_1\to\EE_2$ which preserves the bilinear maps $\tau$, i.e.~$\tau_2(Lv,Lw) = \tau_1(v,w)$, for all
$v,w\in\EE_1$. We further denote by $\VecBiLin^{\mathrm{sym}}$ the following category:
An object in $\VecBiLin^{\mathrm{sym}}$ is given by a tuple $(\EE,\tau)$, where $\EE$ is a (possibly infinite-dimensional)
vector space and $\tau$ is a symmetric bilinear map $\tau:\EE\times \EE\to \bbR$.
A morphism between two objects $(\EE_1,\tau_1)$ and $(\EE_2,\tau_2)$ in $\VecBiLin^{\mathrm{sym}}$ is an injective linear map
$L:\EE_1\to\EE_2$ which preserves the bilinear maps $\tau$.
Notice that we can regard $\VecBiLin^{\mathrm{asym}}$ and $\VecBiLin^{\mathrm{sym}}$ as subcategories
of  $\VecBiLin$.

\begin{lem}
The covariant functor $\PhaseSpace: \GlobHypAffGreen \to \VecBiLin$ constructed in Theorem \ref{theo:PhaseSpace} 
reduces to covariant functors $\PhaseSpace:\GlobHypAffGreen^\mathrm{bos} 
\to \VecBiLin^{\mathrm{asym}} $
and $\PhaseSpace:\GlobHypAffGreen^\mathrm{ferm} \to \VecBiLin^{\mathrm{sym}}$.
\end{lem}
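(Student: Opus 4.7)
The plan is to check that the bilinear map $\tau$ constructed in Proposition \ref{propo:sympl} is antisymmetric when the fibre pairing $\ip{~}{~}$ is symmetric and symmetric when $\ip{~}{~}$ is antisymmetric; nothing needs to be verified for morphisms, since both subcategories inherit the same morphism class from $\GlobHypAffGreen$, and $\PhaseSpace(f,\underline{f})$ is already an injective linear map that preserves $\tau$ by Theorem \ref{theo:PhaseSpace}. So the whole lemma reduces to an (anti)symmetry statement about the formula
\begin{flalign*}
\tau([\varphi],[\psi])=\int_M\vol~\ip{\varphi_\bfV}{G_\bfV(\psi_\bfV)}~.
\end{flalign*}

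First I would extract from Lemma \ref{lem:greenadjoint} the formal skew-adjointness of the causal propagator $G_\bfV=G_\bfV^+-G_\bfV^-$ with respect to $\ip{~}{~}$. Since by assumption $P_\bfV$ is formally self-adjoint with respect to $\ip{~}{~}$, the Green's operators of $P_\bfV^\ast$ coincide with those of $P_\bfV$, and hence Lemma \ref{lem:greenadjoint} reads
\begin{flalign*}
\int_M\vol~\ip{G_\bfV^\pm(h)}{k}=\int_M\vol~\ip{h}{G_\bfV^\mp(k)}~,
\end{flalign*}
for all $h,k\in\sectn{M}{\bfV}$. Subtracting the $+$ and $-$ versions gives
\begin{flalign*}
\int_M\vol~\ip{G_\bfV(h)}{k}=-\int_M\vol~\ip{h}{G_\bfV(k)}~,
\end{flalign*}
i.e.~$G_\bfV$ is formally skew-adjoint with respect to $\ip{~}{~}$, irrespective of whether $\ip{~}{~}$ is symmetric or antisymmetric.

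Second, I would combine this skew-adjointness with the (anti)symmetry of $\ip{~}{~}$. In the bosonic case $\ip{~}{~}$ is symmetric, so
\begin{flalign*}
\tau([\psi],[\varphi])=\int_M\vol~\ip{\psi_\bfV}{G_\bfV(\varphi_\bfV)}=\int_M\vol~\ip{G_\bfV(\varphi_\bfV)}{\psi_\bfV}=-\int_M\vol~\ip{\varphi_\bfV}{G_\bfV(\psi_\bfV)}=-\tau([\varphi],[\psi])~,
\end{flalign*}
so $\tau$ lands in $\VecBiLin^{\mathrm{asym}}$. In the fermionic case $\ip{~}{~}$ is antisymmetric, so the symmetry exchange introduces an extra minus sign which cancels the one from the skew-adjointness of $G_\bfV$, yielding $\tau([\psi],[\varphi])=\tau([\varphi],[\psi])$, i.e.~$\tau$ lands in $\VecBiLin^{\mathrm{sym}}$.

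Finally I would note that the morphism part is automatic: a morphism $(f,\underline{f})$ in $\GlobHypAffGreen^\mathrm{bos}$ or $\GlobHypAffGreen^\mathrm{ferm}$ is by definition a morphism in $\GlobHypAffGreen$, and its image $\PhaseSpace(f,\underline{f})$ is already an injective linear map preserving $\tau$ by Theorem \ref{theo:PhaseSpace}; since the target objects are now in $\VecBiLin^{\mathrm{asym}}$ or $\VecBiLin^{\mathrm{sym}}$ respectively, the same map qualifies as a morphism in the appropriate subcategory. The functor axioms transfer unchanged from Theorem \ref{theo:PhaseSpace}. I do not expect any genuine obstacle here; the only subtle point is verifying that Lemma \ref{lem:greenadjoint} really does apply without presupposing symmetry of $\ip{~}{~}$, which is the case because its proof only uses formal self-adjointness of $P_\bfV$ and nondegeneracy of $\ip{~}{~}$.
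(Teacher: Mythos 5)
Your proof is correct and follows the same route as the paper: the paper's own argument simply states that antisymmetry (resp.\ symmetry) of $\tau$ follows from the symmetry (resp.\ antisymmetry) of $\ip{~}{~}$ together with the formal skew-adjointness of $G_\bfV$, which is exactly the computation you spell out in detail via Lemma \ref{lem:greenadjoint} and the formal self-adjointness of $P_\bfV$. Your additional remark that the morphism part is inherited unchanged is also implicit in the paper's treatment.
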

\begin{proof}
We have to show that given any object $(M,\bfA,P)$ in $\GlobHypAffGreen^\mathrm{bos}$
the map $\tau$ given in (\ref{eqn:tau}) is antisymmetric. This is a consequence of the symmetry of $\ip{~}{~}$
and the fact that $G_\bfV$ is formally skew-adjoint. Analogously, one finds that 
for any object $(M,\bfA,P)$ in $\GlobHypAffGreen^\mathrm{ferm}$ the map $\tau$ in (\ref{eqn:tau})
is symmetric, since $\ip{~}{~}$ is antisymmetric.
\end{proof}

Using the covariant functors $\CCR: \VecBiLin^{\mathrm{asym}} \to \astAlg$  and $\CAR: \VecBiLin^{\mathrm{sym}} \to \astAlg$ 
discussed in the Appendix \ref{app:CCRandCAR}, we obtain the covariant functors
\begin{subequations}
\begin{flalign}
\QFT^\mathrm{bos} &:= \CCR\circ \PhaseSpace:  \GlobHypAffGreen^\mathrm{bos} \to \astAlg~,\\
\QFT^\mathrm{ferm} &:= \CAR\circ \PhaseSpace:  \GlobHypAffGreen^\mathrm{ferm} \to \astAlg~.
\end{flalign}
\end{subequations}
\begin{theo}\label{theo:boslqft}
The covariant functor $\QFT^\mathrm{bos} :  \GlobHypAffGreen^\mathrm{bos} \to \astAlg$ is a bosonic
locally covariant quantum field theory, i.e.~the following properties hold true:
\begin{itemize}
\item[(i)] Let $(M_i,\bfA_i,P_i)$, $i=1,2,3$, be objects in $\GlobHypAffGreen^\mathrm{bos}$
and $(f_j,\underline{f_j})$ be morphisms from $(M_j,\bfA_j,P_j)$ to $(M_3,\bfA_3,P_3)$,  $j=1,2$,
such that $\underline{f_1}[M_1]$ and $\underline{f_2}[M_2]$ are causally disjoint in $M_3$.
Then $\QFT^{\mathrm{bos}}(f_1,\underline{f_1})\big[\QFT^\mathrm{bos}(M_1,\bfA_1,P_1)\big]$
and $\QFT^{\mathrm{bos}}(f_2,\underline{f_2})\big[\QFT^\mathrm{bos}(M_2,\bfA_2,P_2)\big]$
commute as subalgebras of $\QFT^\mathrm{bos}(M_3,\bfA_3,P_3)$.
\item[(ii)] Let $(M_j,\bfA_j,P_j)$, $j=1,2$, be objects in $\GlobHypAffGreen^\mathrm{bos}$ and $(f,\underline{f})$ a morphism
from $(M_1,\bfA_1,P_1)$ to $(M_2,\bfA_2,P_2)$  such that $\underline{f}[M_1]\subseteq M_2$  contains a Cauchy
surface of $M_2$. Then 
\begin{flalign}\label{eqn:quantumtimesliceisobos}
\QFT^\mathrm{bos}(f,\underline{f}): \QFT^\mathrm{bos}(M_1,\bfA_1,P_1) \to \QFT^\mathrm{bos}(M_2,\bfA_2,P_2)
\end{flalign}
is an isomorphism.
\end{itemize}
\end{theo}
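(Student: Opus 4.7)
The plan is to reduce both claims to the corresponding classical statements already established for the phase space functor, namely Theorem \ref{theo:causalclass} and Theorem \ref{theo:timesliceclass}, and then to transport them through the CCR functor using its general properties reviewed in Appendix \ref{app:CCRandCAR}. Since $\QFT^\mathrm{bos}=\CCR\circ\PhaseSpace$ is a composition of covariant functors, most of the work has already been done at the classical level.

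For part (i), I would argue as follows. By Theorem \ref{theo:causalclass}, the hypothesis that $\underline{f_1}[M_1]$ and $\underline{f_2}[M_2]$ are causally disjoint in $M_3$ implies that $\tau_3$ vanishes when one argument lies in $L_1[\EE_1]$ and the other in $L_2[\EE_2]$, where I abbreviate $L_j:=\PhaseSpace(f_j,\underline{f_j})$ and $(\EE_j,\tau_j):=\PhaseSpace(M_j,\bfA_j,P_j)$. Under the CCR functor, each element $[\varphi]\in \EE_j$ produces a generator of $\QFT^\mathrm{bos}(M_j,\bfA_j,P_j)$, and these generators satisfy the canonical commutation relation $[\,\cdot\,,\,\cdot\,]\propto i\,\tau_3(\cdot,\cdot)\,\oone$ after being transported to $\QFT^\mathrm{bos}(M_3,\bfA_3,P_3)$ via $\QFT^\mathrm{bos}(f_j,\underline{f_j})=\CCR(L_j)$. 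Vanishing of $\tau_3$ between the two relevant subspaces therefore implies that the generators of $\QFT^{\mathrm{bos}}(f_1,\underline{f_1})\big[\QFT^\mathrm{bos}(M_1,\bfA_1,P_1)\big]$ commute with those of $\QFT^{\mathrm{bos}}(f_2,\underline{f_2})\big[\QFT^\mathrm{bos}(M_2,\bfA_2,P_2)\big]$. Commutativity then extends from generators to the subalgebras they generate by the standard inductive argument using the derivation property of the commutator (i.e.\ $[ab,c]=a[b,c]+[a,c]b$), together with compatibility with the $\ast$-operation.

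For part (ii), by Theorem \ref{theo:timesliceclass} the classical map $\PhaseSpace(f,\underline{f})$ is an isomorphism in $\VecBiLin^{\mathrm{asym}}$, that is, an invertible bijective linear map preserving $\tau$. Since $\CCR$ is a covariant functor, it sends isomorphisms to isomorphisms in $\astAlg$. Thus $\QFT^\mathrm{bos}(f,\underline{f})=\CCR\big(\PhaseSpace(f,\underline{f})\big)$ is an isomorphism, which is the content of (\ref{eqn:quantumtimesliceisobos}).

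I do not expect any substantial obstacle: the real analytic content lies in Theorem \ref{theo:causalclass} and Theorem \ref{theo:timesliceclass}, and the remaining step is a purely categorical/algebraic consequence of the construction of $\CCR$ summarized in Appendix \ref{app:CCRandCAR}. The only point requiring a little care is the passage from commutation on generators to commutation on subalgebras in part (i); this is handled by invoking the defining relations of the CCR algebra (or the corresponding universal property) and observing that products and $\ast$-conjugates of pairwise commuting elements remain pairwise commuting.
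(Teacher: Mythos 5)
Your proposal is correct and follows essentially the same route as the paper: both parts are reduced to the classical statements (Theorem \ref{theo:causalclass} and Theorem \ref{theo:timesliceclass}) and then transported through the $\CCR$ functor, with commutativity of the generators (governed by the vanishing of $\tau_3$) extending to the generated subalgebras for part (i), and functoriality of $\CCR$ turning the classical isomorphism into an algebra isomorphism for part (ii). Your added remark about the derivation property of the commutator just makes explicit the step the paper leaves implicit in ``general elements are commuting as well.''
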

\begin{proof}
Proof of (i):  Notice that the subalgebra
$\QFT^{\mathrm{bos}}(f_j,\underline{f_j})\big[\QFT^\mathrm{bos}(M_j,\bfA_j,P_j)\big]$
is isomorphic to the subalgebra of $\QFT^\mathrm{bos}(M_3,\bfA_3,P_3)$ generated by
 $\PhaseSpace(f_j,\underline{f_j})\big[\PhaseSpace(M_j,\bfA_j,P_j)\big]$, for $j=1,2$.
Then by Theorem \ref{theo:causalclass} all
generators of $\QFT^{\mathrm{bos}}(f_1,\underline{f_1})\big[\QFT^\mathrm{bos}(M_1,\bfA_1,P_1)\big]$
are commuting with all generators of $\QFT^{\mathrm{bos}}(f_2,\underline{f_2})\big[\QFT^\mathrm{bos}(M_2,\bfA_2,P_2)\big]$, 
which implies that general elements are commuting as well.

Proof of (ii): By Theorem \ref{theo:timesliceclass} we know that $\PhaseSpace(M_1,\bfA_1,P_1)$
and $\PhaseSpace(M_2,\bfA_2,P_2)$ are isomorphic by $\PhaseSpace(f,\underline{f})$. Since $\CCR$ is a covariant functor (in particular
it is compatible with the composition of morphisms),
the induced morphism in $\astAlg$ given by $\QFT^\mathrm{bos}(f,\underline{f}) = \CCR\big(\PhaseSpace(f,\underline{f})\big)$
is actually an isomorphism.
\end{proof}
\sk

\begin{theo}
The covariant functor $\QFT^\mathrm{ferm} :  \GlobHypAffGreen^\mathrm{ferm} \to \astAlg$ is a fermionic
locally covariant quantum field theory, i.e.~the following properties hold true:
\begin{itemize}
\item[(i)]  Let $(M_i,\bfA_i,P_i)$, $i=1,2,3$, be objects in $\GlobHypAffGreen^\mathrm{ferm}$
and $(f_j,\underline{f_j})$ be morphisms from $(M_j,\bfA_j,P_j)$ to $(M_3,\bfA_3,P_3)$,  $j=1,2$,
such that $\underline{f_1}[M_1]$ and $\underline{f_2}[M_2]$ are causally disjoint in $M_3$.
Then $\QFT^{\mathrm{ferm}}(f_1,\underline{f_1})\big[\QFT^\mathrm{ferm}(M_1,\bfA_1,P_1)\big]$
and $\QFT^{\mathrm{ferm}}(f_2,\underline{f_2})\big[\QFT^\mathrm{ferm}(M_2,\bfA_2,P_2)\big]$ 
 super-commute\footnote{
This means that the $\mathbb{Z}_2$-odd parts of the algebra anti-commute and the $\mathbb{Z}_2$-even parts
commute.
} as subalgebras of $\QFT^\mathrm{ferm}(M_3,\bfA_3,P_3)$.
\item[(ii)] Let $(M_j,\bfA_j,P_j)$, $j=1,2$, be objects in $\GlobHypAffGreen^\mathrm{ferm}$ and $(f,\underline{f})$ a morphism
from $(M_1,\bfA_1,P_1)$ to $(M_2,\bfA_2,P_2)$  such that $\underline{f}[M_1]\subseteq M_2$  contains a Cauchy
surface of $M_2$. Then 
\begin{flalign}\label{eqn:quantumtimesliceisoferm}
\QFT^\mathrm{ferm}(f,\underline{f}): \QFT^\mathrm{ferm}(M_1,\bfA_1,P_1) \to \QFT^\mathrm{ferm}(M_2,\bfA_2,P_2)
\end{flalign}
is an isomorphism.
\end{itemize}
\end{theo}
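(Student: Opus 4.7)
The plan is to mirror the proof of Theorem \ref{theo:boslqft} for the bosonic case, replacing $\CCR$ by $\CAR$ and commutativity by super-commutativity. Both parts follow by pushing the corresponding classical statements of Theorem \ref{theo:causalclass} and Theorem \ref{theo:timesliceclass} through the covariant functor $\CAR:\VecBiLin^{\mathrm{sym}}\to\astAlg$ reviewed in Appendix \ref{app:CCRandCAR}.

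For part (ii), I would argue as in the bosonic case: by Theorem \ref{theo:timesliceclass}, the map $\PhaseSpace(f,\underline{f}):\PhaseSpace(M_1,\bfA_1,P_1)\to\PhaseSpace(M_2,\bfA_2,P_2)$ is an isomorphism in $\VecBiLin^{\mathrm{sym}}$. Since $\CAR$ is a covariant functor, it sends isomorphisms to isomorphisms, and therefore $\QFT^\mathrm{ferm}(f,\underline{f}) = \CAR\bigl(\PhaseSpace(f,\underline{f})\bigr)$ is an isomorphism in $\astAlg$, which is exactly (\ref{eqn:quantumtimesliceisoferm}).

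For part (i), the first step is to observe that, by functoriality of $\CAR$, the subalgebra $\QFT^\mathrm{ferm}(f_j,\underline{f_j})\bigl[\QFT^\mathrm{ferm}(M_j,\bfA_j,P_j)\bigr]$ of $\QFT^\mathrm{ferm}(M_3,\bfA_3,P_3)$ coincides with the unital $\ast$-subalgebra generated by the image $\PhaseSpace(f_j,\underline{f_j})\bigl[\PhaseSpace(M_j,\bfA_j,P_j)\bigr]$ inside $\QFT^\mathrm{ferm}(M_3,\bfA_3,P_3)$, for $j=1,2$. The defining CAR relations are of the form of an anticommutator proportional to $\tau_3$, so the generators are $\mathbb{Z}_2$-odd. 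By Theorem \ref{theo:causalclass}, $\tau_3$ vanishes on any pair of vectors coming one from each of the two images. Therefore, each odd generator from the first image anticommutes with each odd generator from the second image, which is precisely the super-commutativity statement on generators.

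The main obstacle is to promote this super-commutativity from generators to arbitrary elements of the two subalgebras. This is done by a standard induction on word length in the generators, using the graded Leibniz rule
\begin{equation*}
[ab,c\}=a\,[b,c\}+(-1)^{\vert b\vert\,\vert c\vert}\,[a,c\}\,b~,
\end{equation*}
which propagates the anticommutation on odd generators to super-commutativity of arbitrary homogeneous elements, and by linearity to the whole subalgebras. The $\ast$-involution and the unit cause no additional difficulty, since they are even and the unit is central. This completes the argument, and no further steps beyond those in the bosonic proof are required.
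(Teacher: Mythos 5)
Your proposal is correct and follows essentially the same route as the paper, which simply notes that the theorem is proven by the same steps as the bosonic case (Theorem \ref{theo:boslqft}): part (ii) via functoriality of $\CAR$ applied to the classical time-slice isomorphism, and part (i) via the vanishing of $\tau_3$ between the two images, which forces the odd generators to anticommute and hence the subalgebras to super-commute. Your explicit induction on word length with the graded Leibniz rule merely fills in a detail the paper leaves implicit.
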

\begin{proof}
This theorem is proven by following similar steps as in the proof of 
Theorem \ref{theo:boslqft}.
\end{proof}


\section{\label{sec:linfunctor}Linearization functor}
We study in this section a canonical association of linear matter field theories to affine ones.
A suitable category for describing the former is given in \cite[Definition 3.1]{Bar:2011iu}, which we
are going to review now (compare with Definition \ref{def:geometriccategory} of the present paper).
We will slightly adapt this definition to be consistent with
 the conventions used throughout our work.
\begin{defi}
The category $\mathsf{GlobHypLinGreen}$ consists of the following objects and morphisms:
\begin{itemize}
\item An object in $\mathsf{GlobHypLinGreen}$ is a triple $(M,\bfV,P)$, where
\begin{itemize}
\item $M$ is a globally hyperbolic spacetime,
\item $\bfV$ is a vector bundle over $M$ endowed with
a nondegenerate bilinear form $\ip{~}{~}$,
\item $P:\sect{M}{\bfV} \to \sect{M}{\bfV}$ is a formally self-adjoint Green-hyperbolic operator.
\end{itemize}
\item A morphism between two objects $(M_1,\bfV_1,P_1)$ and $(M_2,\bfV_2,P_2)$ in $\mathsf{GlobHypLinGreen}$ 
is a morphism $(f,\underline{f})$ in the category $\VecBund$ (preserving the bilinear forms), such that
\begin{itemize}
\item $\underline{f}:M_1\to M_2$ is an orientation and time-orientation preserving isometric embedding with
$\underline{f}[M_1]\subseteq M_2$ causally compatible and open,
\item the following diagram commutes
\begin{flalign}\label{eqn:Plincompatible}
\xymatrix{
\sect{M_2}{\bfV_2} \ar[rr]^-{P_2} \ar[d]_-{{f}^\ast}& & \sect{M_2}{\bfV_2} \ar[d]^-{{f}^\ast}\\
\sect{M_1}{\bfV_1} \ar[rr]^-{P_1} & & \sect{M_1}{\bfV_1}
}
\end{flalign}
where the pull-back is defined by ${f}^\ast(\sigma) := f^{-1}\circ \sigma\circ \underline{f}$,
for all $\sigma\in\sect{M_2}{\bfV_2}$.
\end{itemize}
\end{itemize}
\end{defi}
\sk
It follows from the work in \cite{Bar:2011iu} that there exists, similarly to Theorem \ref{theo:PhaseSpace},
a covariant phase space functor $\PhaseSpace_{\mathrm{lin}} $ from $\mathsf{GlobHypLinGreen}$ to $\VecBiLin$.
It associates to any object $(M,\bfV,P)$ in  $\mathsf{GlobHypLinGreen}$  the object
$\PhaseSpace_{\mathrm{lin}}(M,\bfV,P)= (\EE_\lin,\tau_\lin)$ in $\VecBiLin$, where 
$\EE_\lin := \Gamma^\infty_0(M,\bfV)/P\big[\Gamma^\infty_0(M,\bfV)\big]$
and
 \begin{flalign}\label{eqn:taulin}
\tau_\lin: \EE_\lin\times \EE_\lin\to \bbR~,~~([h],[k])\mapsto \tau_\lin([h],[k]) = \int_M \vol~\ip{h}{G(k)}~.
\end{flalign}
Notice that in the linear case the (left and right) null space of $\tau_\lin$ is trivial, 
i.e.~$0=\mathcal{L}_\lin = \mathcal{R}_\lin =: \mathcal{N}_\lin$. 
To any morphism  $(f,\underline{f})$ between two objects $(M_1,\bfV_1,P_1)$
 and $(M_2,\bfV_2,P_2)$ in $\mathsf{GlobHypLinGreen}$,
the functor associates the following morphism  in $\VecBiLin$
\begin{flalign}\label{eqn:sympmorphismlin}
\PhaseSpace_\lin(f,\underline{f}):\PhaseSpace_\lin(M_1,\bfV_1,P_1) \to \PhaseSpace_\lin(M_2,\bfV_2,P_2) ~,~~[h]\mapsto
 [f_\ast(h)]~,
\end{flalign}
where $f_\ast: \sectn{M_1}{\bfV_1} \to \sectn{M_2}{\bfV_2}$
 is the push-forward corresponding to the vector bundle map $(f,\underline{f})$.
\sk\sk

The linear part functor of Lemma \ref{lem:linearpartfunctor} provides a covariant functor from the category
$\GlobHypAffGreen$ to the category
$\mathsf{GlobHypLinGreen}$ (that with a slight abuse of notation we denote by the same symbol),
 which we interpret as a {\it geometric} linearization of the affine theory.
\begin{propo}\label{propo:linearmatterfunctor}
There is a covariant functor $\mathfrak{LinBund}: \GlobHypAffGreen \to \mathsf{GlobHypLinGreen}$.
It is specified on objects by $\mathfrak{LinBund}(M,\bfA,P) = (M,\bfV,P_\bfV)$
and on morphisms by $\mathfrak{LinBund}(f,\underline{f}) = (f_\bfV,\underline{f})$.
\end{propo}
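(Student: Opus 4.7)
The plan is to verify that the stated assignment satisfies the three conditions required to be a covariant functor: (a) it sends objects of $\GlobHypAffGreen$ to objects of $\mathsf{GlobHypLinGreen}$; (b) it sends morphisms to morphisms; and (c) it preserves identities and composition. Most of the work can be piggybacked on earlier results, so I expect this to be a bookkeeping argument rather than a new computation.

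For well-definedness on objects, given $(M,\bfA,P)$ in $\GlobHypAffGreen$, the spacetime $M$ is globally hyperbolic and $\bfV$ is by definition a vector bundle over $M$ carrying the nondegenerate bilinear form $\ip{~}{~}$ inherited from the object data. The linear part $P_\bfV:\sect{M}{\bfV}\to\sect{M}{\bfV}$ is a Green-hyperbolic operator by the definition of an affine Green-hyperbolic operator in Section \ref{sec:classaff}, and it is formally self-adjoint with respect to $\ip{~}{~}$ by the very definition of objects in $\GlobHypAffGreen$ (Definition \ref{def:geometriccategory}). Thus $(M,\bfV,P_\bfV)$ is indeed an object in $\mathsf{GlobHypLinGreen}$.

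For well-definedness on morphisms, let $(f,\underline{f}):(M_1,\bfA_1,P_1)\to(M_2,\bfA_2,P_2)$ be a morphism in $\GlobHypAffGreen$. Lemma \ref{lem:linearpartfunctor} guarantees that $(f_\bfV,\underline{f})$ is a morphism in $\VecBund$, and Remark \ref{rem:vecinv} ensures that the fibre-wise maps $f_\bfV\vert_x$ are linear isomorphisms, so the required morphism properties in $\VecBund$ are inherited. The isometric embedding, causal compatibility and openness conditions on $\underline{f}$ are literally the same in both categories, and the preservation of the bilinear forms by $(f_\bfV,\underline{f})$ is built into the definition of morphisms in $\GlobHypAffGreen$. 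What remains is the commutativity of diagram (\ref{eqn:Plincompatible}) for $P_{1_\bfV}$ and $P_{2_\bfV}$; this is exactly the diagram (\ref{eqn:PVcompatible}) already derived in the proof of Lemma \ref{lem:Greenres} by taking the linear part of (\ref{eqn:Pcompatible}) and using the fact that the linear part of $f^\ast$ is $f_\bfV^\ast$. Hence $(f_\bfV,\underline{f})$ is a morphism in $\mathsf{GlobHypLinGreen}$.

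Functoriality is then immediate from Lemma \ref{lem:linearpartfunctor}: identity morphisms are sent to identities because $(\id_\bfA)_\bfV=\id_\bfV$, and composition is preserved because $(g\circ f)_\bfV=g_\bfV\circ f_\bfV$, as recorded in the discussion preceding that lemma. The only place where one might hesitate is checking that the linear-part diagram inherits commutativity from the affine one; but since this step has already been performed in the proof of Lemma \ref{lem:Greenres}, no genuine obstacle remains and the proof essentially reduces to citing the appropriate earlier results.
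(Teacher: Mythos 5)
Your proof is correct and follows essentially the same route as the paper: objects are checked directly from Definition \ref{def:geometriccategory}, morphisms are handled by citing Lemma \ref{lem:linearpartfunctor} together with the commutative diagram (\ref{eqn:PVcompatible}) established in the proof of Lemma \ref{lem:Greenres}, and functoriality is inherited from the linear part functor. Your explicit remark on identities and composition is a minor addition the paper leaves implicit; otherwise the two arguments coincide.
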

\begin{proof}
For any object $(M,\bfA,P)$ in $\GlobHypAffGreen$, $\mathfrak{LinBund}(M,\bfA,P) = (M,\bfV,P_\bfV)$
is an object in $\mathsf{GlobHypLinGreen}$: Indeed, $M$ is a globally hyperbolic spacetime,
$\bfV$ a vector bundle over $M$ with a nondegenerate bilinear form $\ip{~}{~}$
and $P_\bfV$ is a formally self-adjoint Green-hyperbolic operator.

For any morphism $(f,\underline{f})$ in $\GlobHypAffGreen$ the tuple $\mathfrak{LinBund}(f,\underline{f}) =(f_\bfV,\underline{f})$
is a morphism in $\mathsf{GlobHypLinGreen}$: From Lemma \ref{lem:linearpartfunctor} we know that
$(f_\bfV,\underline{f})$ is morphism in $\VecBund$. By definition, $(f_\bfV,\underline{f})$ preserves the nondegenerate
bilinear forms and $\underline{f}$ is an orientation and time-orientation preserving isometric embedding with
$\underline{f}[M_1]\subseteq M_2$ causally compatible and open. The required commutative diagram
(\ref{eqn:Plincompatible}) was shown to hold true in Lemma \ref{lem:Greenres}, see in particular the
diagram (\ref{eqn:PVcompatible}).
\end{proof}
\sk

The next step would be to find a covariant functor $\mathfrak{LinPhSp}$ from $\VecBiLin$ to itself,
such that  we can close a ``commutative diagram of functors'' (what we actually find below is a natural isomorphism)
\begin{flalign}\label{eqn:wishlist}
\xymatrix{
\GlobHypAffGreen\ar[d]_-{\PhaseSpace} \ar[rr]^-{\mathfrak{LinBund}} & & \mathsf{GlobHypLinGreen}\ar[d]^-{\PhaseSpace_\lin}\\
\VecBiLin \ar[rr]^-{\mathfrak{LinPhSp}} & & \VecBiLin
}\qquad\quad \text{ (heuristic!)}
\end{flalign}
This would show that linearization at the {\it geometric} level (via the functor $\mathfrak{LinBund}$) 
and a subsequent association of the phase space to the resulting linear theory is equivalent to linearization
of the phase space of affine theories at the {\it algebraic} level (via the functor $\mathfrak{LinPhSp}$).
In other words, diagram (\ref{eqn:wishlist}) would show that the affine matter field theory can be consistently linearized.
\sk

For carrying out this program we need a refinement of the category $\VecBiLin$, which we are now going to motivate
by studying properties of the functor $\PhaseSpace:\GlobHypAffGreen\to\VecBiLin$, and showing that
it can be actually regarded as a functor to a proper subcategory of $\VecBiLin$.
\begin{lem}\label{lem:nullspacemorph}
Let $(f,\underline{f})$ be a morphism between two objects $(M_1,\bfA_1,P_1)$ and
$(M_2,\bfA_2,P_2)$ in $\GlobHypAffGreen$. We use the notation
$\PhaseSpace(M_i,\bfA_i,P_i) =: (\EE_i,\tau_i)$, $i=1,2$, for the phase spaces and
$\mathcal{N}_i \subseteq \EE_i$,  $i=1,2$, for the null spaces. 
It holds true that $\PhaseSpace(f,\underline{f})\big[\mathcal{N}_1\big] \subseteq \mathcal{N}_2$.
\end{lem}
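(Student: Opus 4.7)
The plan is to exploit the explicit characterization of the null space provided by Proposition \ref{propo:degen}, which tells us that $[\varphi] \in \mathcal{N}_1$ if and only if $[\varphi]$ admits a representative of the form $\varphi = a\,\1_1$ with $a \in C^\infty_0(M_1)$, and similarly for $\mathcal{N}_2$. So to prove the inclusion $\PhaseSpace(f,\underline{f})[\mathcal{N}_1] \subseteq \mathcal{N}_2$, it suffices to show that for every $a \in C^\infty_0(M_1)$, the image $[{f^\dagger}_\ast(a\,\1_1)] \in \EE_2$ has a representative of this same form (with a compactly supported function on $M_2$ in place of $a$).

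The key calculation is to evaluate ${f^\dagger}_\ast(a\,\1_1)$ pointwise using (\ref{eqn:pushforward}). For $x \in \underline{f}[M_1]$, one has
\begin{flalign*}
\big({f^\dagger}_\ast(a\,\1_1)\big)(x) = f^\dagger\big(a(\underline{f}^{-1}(x))\,\1_{1,\underline{f}^{-1}(x)}\big) = a\big(\underline{f}^{-1}(x)\big)\,f^\dagger\big(\1_{1,\underline{f}^{-1}(x)}\big),
\end{flalign*}
and since $\1_{1,\underline{f}^{-1}(x)}:\bfA_1\vert_{\underline{f}^{-1}(x)} \to \bbR$ is the constant affine map with value $1$, its pre-composition with $f\vert_x^{-1}$ is still the constant $1$ map on $\bfA_2\vert_x$, i.e.\ $\1_{2,x}$. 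Outside $\underline{f}[M_1]$ the push-forward vanishes by definition, and this agrees with $\underline{f}_\ast(a)\,\1_2$ there since $\underline{f}_\ast(a)$ vanishes off $\underline{f}[M_1]$. Hence the identity
\begin{flalign*}
{f^\dagger}_\ast(a\,\1_1) = \underline{f}_\ast(a)\,\1_2
\end{flalign*}
holds in $\sectn{M_2}{\bfA^\dagger_2}$.

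To finish, I note that $\underline{f}_\ast(a) \in C^\infty_0(M_2)$: it is smooth because $\underline{f}[M_1]\subseteq M_2$ is open and $\underline{f}$ is a diffeomorphism onto its image, and it is compactly supported because $\underline{f}\big[\supp(a)\big]$ is compact in $M_2$. Therefore ${f^\dagger}_\ast(a\,\1_1) = \underline{f}_\ast(a)\,\1_2$ is precisely of the form required by Proposition \ref{propo:degen} for $[{f^\dagger}_\ast(a\,\1_1)]$ to lie in $\mathcal{N}_2$, which completes the argument. I do not anticipate any real obstacle here; the only subtlety is verifying that $f^\dagger$ preserves the distinguished constant section $\1$, but this is immediate from the definition (\ref{eqn:dualmap}) of the dual map, since the constant map with value $1$ is invariant under pre-composition with any bijection.
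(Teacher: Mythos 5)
Your proof is correct and follows exactly the same route as the paper: characterize $\mathcal{N}_1$ via Proposition \ref{propo:degen}, compute ${f^\dagger}_\ast(a\,\1_1)=\underline{f}_\ast(a)\,\1_2$, and apply Proposition \ref{propo:degen} again. The only difference is that you spell out the pointwise verification that $f^\dagger$ preserves the constant section $\1$, which the paper asserts without detail.
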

\begin{proof}
From Proposition \ref{propo:degen} we know that any element in $\mathcal{N}_1$ is of
the form $[a\,\1_1]$, with $a\in C^\infty_0(M_1)$.
Using the explicit form of $\PhaseSpace(f,\underline{f})$ given in (\ref{eqn:sympmorphism}) we obtain
\begin{flalign}
\PhaseSpace(f,\underline{f})\big([a\,\1_1]\big) = [{f^\dagger}_\ast(a\,\1_1)] = [\underline{f}_\ast(a)\,\1_2]~.
\end{flalign}
Then because of Proposition \ref{propo:degen}  we have $\PhaseSpace(f,\underline{f})\big([a\,\1_1]\big)\in\mathcal{N}_2$.
\end{proof}
Notice that due to Lemma \ref{lem:nullspacemorph} any morphism $\PhaseSpace(f,\underline{f}):\EE_1\to \EE_2$ restricts
to the quotients $\EE_i /\mathcal{N}_i$, $i=1,2$. Motivated by this property and
Proposition \ref{propo:degen} we refine the category $\VecBiLin$ as follows:
\begin{defi}\label{defi:refinedvec}
The category $\VecBiLinR$ consists of the following objects and morphisms:
\begin{itemize}
\item An object in $\VecBiLinR$ is a tuple $(\EE,\tau)$, where
$\EE$ is a (possibly infinite-dimensional) vector space and $\tau:\EE\times\EE\to \bbR$ is a bilinear map, such that
the left and right null spaces of $\tau$ coincide, i.e.~$\mathcal{L}=\mathcal{R}=:\mathcal{N}$.
\item A morphism between two objects $(\EE_1,\tau_1)$ and $(\EE_2,\tau_2)$ in $\VecBiLinR$
is an injective linear map $L:\EE_1\to\EE_2$, which preserves the bilinear maps, i.e.~$\tau_{2}(L(v),L(w))=\tau_1(v,w)$, for
all $v,w\in \EE_1$, and also preserves the null spaces, i.e.~$L\big[\mathcal{N}_1\big]\subseteq \mathcal{N}_2$.
\end{itemize}
\end{defi}
\sk

The linear phase space functor $\PhaseSpace_\lin$ is a functor from 
$\mathsf{GlobHypLinGreen}$ to the subcategory $\VecBiLinR$ of $\VecBiLin$, since the null spaces are trivial. We also find that this
holds true for the functor $\PhaseSpace$.
\begin{cor}
The functor $\PhaseSpace$ of Theorem  \ref{theo:PhaseSpace} is a functor from
$\GlobHypAffGreen$ to the subcategory $ \VecBiLinR$ of $\VecBiLin$.
\end{cor}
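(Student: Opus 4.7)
The plan is to verify that $\PhaseSpace$, already known by Theorem \ref{theo:PhaseSpace} to be a functor into $\VecBiLin$, actually factors through the subcategory $\VecBiLinR$. This amounts to checking the two extra conditions imposed in Definition \ref{defi:refinedvec}: one on objects (coincidence of left and right null spaces) and one on morphisms (preservation of the null space under the induced linear map).

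First, I would handle the object level. Given any object $(M,\bfA,P)$ in $\GlobHypAffGreen$, I need to show that the phase space $\PhaseSpace(M,\bfA,P)=(\EE,\tau)$ satisfies $\mathcal{L}=\mathcal{R}$. But this is precisely the content of Proposition \ref{propo:degen}, where it is explicitly established that $\mathcal{L}=\mathcal{R}=:\mathcal{N}$, with $\mathcal{N}$ characterized by representatives of the form $a\,\1$ with $a\in C^\infty_0(M)$. Thus no additional argument is required at this stage.

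Second, I would turn to the morphism level. For a morphism $(f,\underline{f})$ between $(M_1,\bfA_1,P_1)$ and $(M_2,\bfA_2,P_2)$ in $\GlobHypAffGreen$, Theorem \ref{theo:PhaseSpace} already tells us that $\PhaseSpace(f,\underline{f})$ is an injective linear map preserving the bilinear forms $\tau_i$, so that it is a morphism in $\VecBiLin$. The only additional condition required by Definition \ref{defi:refinedvec} is that it sends $\mathcal{N}_1$ into $\mathcal{N}_2$, and this is exactly the statement of Lemma \ref{lem:nullspacemorph}.

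Combining these two observations directly yields the claim: $\PhaseSpace$ sends each object of $\GlobHypAffGreen$ to an object of $\VecBiLinR$ and each morphism to a morphism of $\VecBiLinR$, with functoriality (preservation of identities and composition) inherited from Theorem \ref{theo:PhaseSpace}. There is no genuine obstacle to overcome here; the statement is essentially a bookkeeping consequence of Proposition \ref{propo:degen} and Lemma \ref{lem:nullspacemorph}, which together are precisely the two structural properties encoded into the refined category $\VecBiLinR$.
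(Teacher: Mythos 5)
Your proposal is correct and follows essentially the same route as the paper: the object-level condition $\mathcal{L}=\mathcal{R}$ is supplied by Proposition \ref{propo:degen} and the morphism-level condition $\PhaseSpace(f,\underline{f})[\mathcal{N}_1]\subseteq\mathcal{N}_2$ by Lemma \ref{lem:nullspacemorph}, exactly as in the paper's own argument. Nothing is missing.
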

\begin{proof}
Let $(M,\bfA,P)$ be an object in $\GlobHypAffGreen$. Due to Proposition \ref{propo:degen} we have that
$\mathcal{L} = \mathcal{R} =\mathcal{N}$ for the object $\PhaseSpace(M,\bfA,P)$ in $\VecBiLin$. Hence, 
$\PhaseSpace(M,\bfA,P)$ is an object in $\VecBiLinR$.

Let now $(f,\underline{f})$ be a morphism between two objects $(M_1,\bfA_1,P_1)$ and $(M_2,\bfA_2,P_2)$
in $\GlobHypAffGreen$. Due to Lemma \ref{lem:nullspacemorph} we have that
the morphism $\PhaseSpace(f,\underline{f})$ in $\VecBiLin$ preserves the null spaces.
Hence, $\PhaseSpace(M,\bfA,P)$ is a morphism in $\VecBiLinR$.
\end{proof}
We can now make precise the lower horizontal arrow in the heuristic diagram (\ref{eqn:wishlist}).
\begin{propo}
There is a covariant functor $\mathfrak{LinPhSp}:\VecBiLinR\to\VecBiLinR$.
It associates to any object $(\EE,\tau)$ in $\VecBiLinR$ the object $(\EE/\mathcal{N},\tau)$ in
$\VecBiLinR$ and to any morphism $L:\EE_1\to\EE_2$ between two objects $(\EE_i,\tau_i)$, $i=1,2$, in
$\VecBiLinR$ the canonically induced morphism $L:\EE_1/\mathcal{N}_1\to \EE_2/\mathcal{N}_2$.
\end{propo}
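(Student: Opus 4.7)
The plan is to verify, in turn, that the object assignment produces well-defined objects of $\VecBiLinR$, that the morphism assignment produces well-defined morphisms of $\VecBiLinR$, and that the two functor axioms (identities and composition) hold. The key input in each step is the defining property $\mathcal{L}=\mathcal{R}=\mathcal{N}$ of objects in $\VecBiLinR$ combined with the two compatibility conditions on morphisms, namely preservation of bilinear forms and preservation of null spaces.

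For the object assignment, given $(\EE,\tau)$ in $\VecBiLinR$, I would first verify that $\tau$ descends to a well-defined bilinear map on the quotient $\EE/\mathcal{N}$. Indeed, if $v-v^\prime\in\mathcal{N}=\mathcal{L}$ and $w-w^\prime\in\mathcal{N}=\mathcal{R}$, bilinearity gives $\tau(v,w)-\tau(v^\prime,w^\prime)=\tau(v-v^\prime,w)+\tau(v^\prime,w-w^\prime)=0$. Denoting the descended bilinear map by the same symbol $\tau$ as in the statement, its left and right null spaces on $\EE/\mathcal{N}$ are both $\{0\}$: if $\tau([v],[w])=0$ for all $[w]$, then $\tau(v,w)=0$ for all $w\in\EE$, hence $v\in\mathcal{L}=\mathcal{N}$ and $[v]=0$; the symmetric argument treats the right null space. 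Thus $(\EE/\mathcal{N},\tau)$ belongs to $\VecBiLinR$, with null space $\{0\}$.

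For the morphism assignment, given $L:\EE_1\to\EE_2$ in $\VecBiLinR$, the condition $L[\mathcal{N}_1]\subseteq\mathcal{N}_2$ ensures that $\widetilde L:\EE_1/\mathcal{N}_1\to\EE_2/\mathcal{N}_2$, $[v]\mapsto[L(v)]$, is well-defined and linear. Preservation of the bilinear maps descends directly from the representatives, and preservation of null spaces becomes vacuous since both quotients have trivial null space by the previous step. The only non-immediate item is injectivity of $\widetilde L$: if $\widetilde L([v])=0$ then $L(v)\in\mathcal{N}_2=\mathcal{L}_2$, so that for every $w\in\EE_1$ one obtains $\tau_1(v,w)=\tau_2(L(v),L(w))=0$, whence $v\in\mathcal{L}_1=\mathcal{N}_1$ and $[v]=0$. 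This is the step I expect to be the only mildly subtle point: it does not follow from injectivity of $L$ on $\EE_1$ but crucially relies on the preservation of the bilinear forms.

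Finally, the two functor axioms are immediate from the defining formula: $\mathfrak{LinPhSp}(\id_\EE)([v])=[v]=\id_{\EE/\mathcal{N}}([v])$, and for composable morphisms $L_1,L_2$ one has $\mathfrak{LinPhSp}(L_2\circ L_1)([v])=[L_2(L_1(v))]=\mathfrak{LinPhSp}(L_2)\bigl(\mathfrak{LinPhSp}(L_1)([v])\bigr)$, which closes the argument.
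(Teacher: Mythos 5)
Your proposal is correct and follows essentially the same route as the paper's proof: descend $\tau$ to the quotient using $\mathcal{L}=\mathcal{R}=\mathcal{N}$, observe the quotient has trivial null spaces, use $L[\mathcal{N}_1]\subseteq\mathcal{N}_2$ for well-definedness of the induced map, and derive its injectivity from preservation of the bilinear forms rather than from injectivity of $L$ itself. You merely spell out some steps (the descent computation, the functor axioms) that the paper leaves implicit.
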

\begin{proof}
Notice that the bilinear map $\tau:\EE\times\EE\to\bbR$ induces canonically a bilinear
map (denoted by the same symbol) $\tau:\EE/\mathcal{N} \times \EE/\mathcal{N}\to \bbR$, hence
$(\EE/\mathcal{N},\tau)$ is an object in $\VecBiLin$. Since the null spaces are trivial it is also an object 
in $\VecBiLinR$.

Let $L:\EE_1\to \EE_2$ be a morphism between two objects $(\EE_i,\tau_i)$, $i=1,2$, in
$\VecBiLinR$. Since $L[\mathcal{N}_1]\subseteq \mathcal{N}_2$, the map $L$ canonically induces
a linear map (denoted by the same symbol) $L:\EE_1/\mathcal{N}_1\to\EE_2/\mathcal{N}_2$.
This induced map is injective: Let $v\in \EE_1$ be such that $L(v) \in \mathcal{N}_2$.
Then, for all $w\in \EE_1$, $\tau_1(v,w) = \tau_2(L(v),L(w)) =0$ and thus $v\in \mathcal{N}_1$.
This implies that $L:\EE_1/\mathcal{N}_1\to\EE_2/\mathcal{N}_2$ is injective and hence a morphism
in $\VecBiLin$. Since the null spaces of $\EE_i/\mathcal{N}_i$, $i=1,2$, are trivial
the induced map is also a morphism in $\VecBiLinR$.
\end{proof}
We can now finally make precise the heuristic diagram (\ref{eqn:wishlist}).
\begin{theo}\label{theo:natiso}
There is a natural isomorphism between the two covariant functors
$\PhaseSpace_\lin\circ \mathfrak{LinBund}$ and $\mathfrak{LinPhSp}\circ \PhaseSpace$
from $\GlobHypAffGreen$ to $\VecBiLinR$.
\end{theo}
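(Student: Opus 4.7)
The plan is to construct the natural isomorphism component-wise via the ``take the linear part'' map. For each object $(M,\bfA,P)$ in $\GlobHypAffGreen$, recall that $\mathfrak{LinPhSp}\circ \PhaseSpace(M,\bfA,P) = (\EE/\mathcal{N},\tau)$, where $\EE = \big(\sectn{M}{\bfA^\dagger}/\mathrm{Triv}\big)/P^\ast[\sectn{M}{\bfV}]$, while $\PhaseSpace_\lin\circ \mathfrak{LinBund}(M,\bfA,P) = (\EE_\lin,\tau_\lin)$ with $\EE_\lin = \sectn{M}{\bfV}/P_\bfV[\sectn{M}{\bfV}]$. I would define the candidate component
\begin{flalign*}
\Phi_{(M,\bfA,P)}: \EE/\mathcal{N} \to \EE_\lin~,~~[\varphi] \mapsto [\varphi_\bfV]~,
\end{flalign*}
where $\varphi_\bfV\in \sectn{M}{\bfV}$ is the linear part of any representative $\varphi\in \sectn{M}{\bfA^\dagger}$.

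For well-definedness I would check that each quotient relation in the source is absorbed on the target side: the linear part of $a\,\1\in\mathrm{Triv}$ vanishes; the linear part of $P^\ast(h)$ is $P_\bfV^\ast(h) = P_\bfV(h)\in P_\bfV[\sectn{M}{\bfV}]$ by formal self-adjointness (and the computation in the proof of Theorem \ref{theo:adjoin}); and by Proposition \ref{propo:degen} every element of $\mathcal{N}$ admits a representative $a\,\1$ whose linear part is again zero. For injectivity, suppose $\varphi_\bfV = P_\bfV(h)$ for some $h\in\sectn{M}{\bfV}$; then $\varphi - P^\ast(h)$ has trivial linear part, hence is of the form $a\,\1$ for some $a\in C^\infty_0(M)$, so $[\varphi]$ vanishes in $\EE/\mathcal{N}$. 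For surjectivity, fixing any $\widehat{s}\in\sect{M}{\bfA}$ and given $h\in\sectn{M}{\bfV}$, the element $\varphi(s):=\ip{h}{s-\widehat{s}}$ lies in $\sectn{M}{\bfA^\dagger}$ with linear part exactly $h$. Preservation of the bilinear forms is immediate, since the defining formula (\ref{eqn:tau}) for $\tau$ on $\EE$ involves only the linear parts and agrees with (\ref{eqn:taulin}) under $\Phi$; similarly, $\Phi$ maps the null space $\mathcal{N}$ into the null space of $\EE_\lin$ (which is trivial), so it is a morphism in $\VecBiLinR$.

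Naturality with respect to a morphism $(f,\underline{f}): (M_1,\bfA_1,P_1) \to (M_2,\bfA_2,P_2)$ would follow from the identity
\begin{flalign*}
\big({f^\dagger}_\ast(\varphi)\big)_\bfV = {f_\bfV}_\ast(\varphi_\bfV)~,
\end{flalign*}
which was essentially used in the proof of Theorem \ref{theo:PhaseSpace}; it expresses that the linear part of the push-forward equals the push-forward of the linear part. Granted this identity, commutativity of the naturality square
\begin{flalign*}
\xymatrix{
\EE_1/\mathcal{N}_1 \ar[rr]^-{\Phi_1} \ar[d]_-{\mathfrak{LinPhSp}(\PhaseSpace(f,\underline{f}))} & & \EE_{\lin,1} \ar[d]^-{\PhaseSpace_\lin(\mathfrak{LinBund}(f,\underline{f}))} \\
\EE_2/\mathcal{N}_2 \ar[rr]^-{\Phi_2} & & \EE_{\lin,2}
}
\end{flalign*}
reduces to an elementary verification on representatives.

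The main obstacle is the bookkeeping required for well-definedness, given that three distinct quotients (by $\mathrm{Triv}$, by $P^\ast[\sectn{M}{\bfV}]$, and by $\mathcal{N}$) are stacked on the source side and must all collapse correctly under the linear-part map. The essential technical input here is the characterization of the null space $\mathcal{N}$ provided by Proposition \ref{propo:degen} (equivalently Corollary \ref{cor:nullspace}), without which one could not conclude that $\Phi$ is injective modulo $\mathcal{N}$; the rest of the argument then distributes naturally across the source quotients.
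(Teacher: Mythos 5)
Your proof is correct and takes essentially the same approach as the paper, just run in the opposite direction: the paper defines the natural isomorphism as $\eta_{(M,\bfA,P)}:\EE_\lin\to\EE/\mathcal{N}$, $[h]\mapsto\big[\ip{h}{\cdot-\widehat{s}}\big]$, and exhibits your linear-part map as its two-sided inverse. The same ingredients appear in both arguments, namely the characterization of $\mathcal{N}$ from Proposition \ref{propo:degen} and Corollary \ref{cor:nullspace}, the identity $P^\ast(h)_\bfV=P_\bfV(h)$ from the proof of Theorem \ref{theo:adjoin}, and the compatibility $\big({f^\dagger}_\ast(\varphi)\big)_\bfV={f_\bfV}_\ast(\varphi_\bfV)$ for naturality.
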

\begin{proof}
In the proof we use the simplified notations $\mathfrak{F}:= \PhaseSpace_\lin\circ \mathfrak{LinBund}$ 
 and $\mathfrak{G}:= \mathfrak{LinPhSp}\circ \PhaseSpace$ for the two functors.
 
Let $(M,\bfA,P)$ be an object in $\GlobHypAffGreen$. 
Applying $\mathfrak{F}$ we obtain the object $(\EE_\lin,\tau_\lin)$ in $\VecBiLinR$, where
$\EE_\lin : = \sectn{M}{\bfV}/P_\bfV\big[\sectn{M}{\bfV}\big]$ and $\tau_\lin$ is given by
\begin{flalign}
\tau_\lin:\EE_\lin\times \EE_\lin \to \bbR~,~~([h],[k])\mapsto \tau_\lin([h],[k]) = \int_M\vol~\ip{h}{G_\bfV(k)}~.
\end{flalign}
Applying $\mathfrak{G}$ we obtain the object $(\EE/\mathcal{N},\tau)$ in $\VecBiLinR$, where (cf.~Proposition \ref{propo:sympl})
$\EE := \big(\sectn{M}{\bfA^\dagger}/\mathrm{Triv}\big)/P^\ast\big[\sectn{M}{\bfV}\big]$, $\mathcal{N}$ is the null space
characterized in Corollary \ref{cor:nullspace} and $\tau$ is given in (\ref{eqn:tau}).

Let us fix any section $\widehat{s}\in \sect{M}{\bfA}$ and define a linear map
\begin{flalign}\label{eqn:naturaliso}
\eta_{(M,\bfA,P)}: \sectn{M}{\bfV}\mapsto \EE/\mathcal{N}~,~~h\mapsto \big[\ip{h}{\cdot - \widehat{s}}\big]~.
\end{flalign}
This map does not depend on the choice of $\widehat{s}$. Indeed, let $\widetilde{s}\in \sect{M}{\bfA}$ be another section,
then $\widehat{s} = \widetilde{s} +\sigma$ for some unique $\sigma\in\sect{M}{\bfV}$ and we obtain, for all $h\in \sectn{M}{\bfV}$,
\begin{flalign}
\big[\ip{h}{\cdot - \widehat{s}}\big] = \big[\ip{h}{\cdot - \widetilde{s} - \sigma}\big] 
= \big[\ip{h}{\cdot - \widetilde{s}} - \ip{h}{\sigma} \,\1\big] =\big[\ip{h}{\cdot - \widetilde{s}}\big]~,
\end{flalign}
where in the last equality we have used the characterization of $\mathcal{N}$ (cf.~Corollary \ref{cor:nullspace}).

The linear map (\ref{eqn:naturaliso}) canonically induces a linear map (denoted by the same symbol)
$\eta_{(M,\bfA,P)} : \EE_\lin \to \EE/\mathcal{N}$, since, for all $h\in \sectn{M}{\bfV}$, 
$\big[\ip{P_\bfV(h)}{\cdot - \widehat{s}}\big] =0$. To prove this last statement
remember the explicit form of $P^\ast$ given in (\ref{eqn:temp}). Using again the characterization of $\mathcal{N}$
 (cf.~Corollary \ref{cor:nullspace}) we obtain, for all $h\in \sectn{M}{\bfV}$,
 \begin{flalign}
 \big[\ip{P_\bfV(h)}{\cdot - \widehat{s}}\big] = \big[ \ip{h}{P(\widehat{s})}\,\1 + \ip{P_\bfV(h)}{\cdot - \widehat{s}} \big] = \big[P^\ast(h)\big] = 0~.
 \end{flalign}
The map $\eta_{(M,\bfA,P)}$ is a morphism in $\VecBiLinR$, since it preserves the bilinear maps, for 
all $[h],[k]\in\EE_\lin$,
\begin{flalign}
\tau\big(\eta_{(M,\bfA,P)}([h]),\eta_{(M,\bfA,P)}([k])\big)= \int_M\vol~\ip{h}{G_\bfV(k)} =\tau_\lin([h],[k])~.
\end{flalign}
 
 We next construct the inverse of this map. Let us consider the linear map
  $\sectn{M}{\bfA^\dagger} \to \sectn{M}{\bfV}\,,~\varphi \mapsto \varphi_\bfV$, which takes the linear part, and induce
  the linear map
  \begin{flalign}
  \eta_{(M,\bfA,P)}^{-1} : \sectn{M}{\bfA^\dagger} \to \EE_\lin~,~~\varphi \mapsto [\varphi_\bfV]~.
  \end{flalign}
  This map induces a well-defined linear map on the quotient $\EE/\mathcal{N}$ (which we denote with
  a slight abuse of notation by the same symbol), $ \eta_{(M,\bfA,P)}^{-1}: \EE/\mathcal{N}\to \EE_\lin$.
  This is the inverse of $\eta_{(M,\bfA,P)}$, since, for all $[h]\in\EE_\lin$,
\begin{subequations}  
  \begin{flalign}
  \eta_{(M,\bfA,P)}^{-1}\circ \eta_{(M,\bfA,P)}\big([h]\big)
  = \eta_{(M,\bfA,P)}^{-1}\big(\big[\ip{h}{\cdot -\widehat{s}}\big]\big) = [h]~,
  \end{flalign}
  and, for all $[\varphi]\in \EE/\mathcal{N}$,
  \begin{flalign}
 \nn \eta_{(M,\bfA,P)}\circ \eta_{(M,\bfA,P)}^{-1}\big([\varphi]\big) &= \eta_{(M,\bfA,P)}\big([\varphi_\bfV]\big) = \big[\ip{\varphi_\bfV}{\cdot -\widehat{s}}\big]\\
  &=\big[\varphi(\widehat{s})\,\1 + \ip{\varphi_\bfV}{\cdot -\widehat{s}}\big] = [\varphi]~.
  \end{flalign}
 \end{subequations}
 The family of morphisms $\eta_{(M,\bfA,P)}$ labelled by objects in $\GlobHypAffGreen$ provides a natural
 isomorphism, since for all morphisms $(f,\underline{f})$ between
 two objects $(M_i,\bfA_i,P_i)$, $i=1,2$, in $\GlobHypAffGreen$ and for all $\varphi\in\sectn{M_1}{\bfA_1^\dagger}$,
we have ${f^\dagger}_\ast(\varphi)_\bfV ={ f_\bfV}_\ast(\varphi_\bfV)$. Hence, the following diagram commutes
 \begin{flalign}
 \xymatrix{
 \mathfrak{F}(M_1,\bfA_1,P_1) \ar[d]_-{\eta_{(M_1,\bfA_1,P_1)}}\ar[rr]^-{\mathfrak{F}(f,\underline{f})} & & \mathfrak{F}(M_2,\bfA_2,P_2) \ar[d]^-{\eta_{(M_2,\bfA_2,P_2)}}\\
 \mathfrak{G}(M_1,\bfA_1,P_1)  \ar[rr]^-{\mathfrak{G}(f,\underline{f})} & & \mathfrak{G}(M_2,\bfA_2,P_2)
 }
 \end{flalign}
\end{proof}
For bosonic or fermionic theories we also obtain a natural isomorphism between functors from
the corresponding subcategories of $\GlobHypAffGreen$ to $\astAlg$.
\begin{cor}\label{cor:linalgebra}
There is a natural isomorphism between the two covariant functors
$ \CCR \circ\PhaseSpace_\lin\circ \mathfrak{LinBund}$ and $\CCR \circ \mathfrak{LinPhSp}\circ \PhaseSpace$
from $\GlobHypAffGreen^\mathrm{bos}$ to $\astAlg$.

Similarly, there is a natural isomorphism between the two covariant functors
$ \CAR \circ\PhaseSpace_\lin\circ \mathfrak{LinBund}$ and $\CAR \circ \mathfrak{LinPhSp}\circ \PhaseSpace$
from $\GlobHypAffGreen^\mathrm{ferm}$ to $\astAlg$.
\end{cor}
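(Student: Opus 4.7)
The plan is to obtain the corollary essentially for free by whiskering the natural isomorphism of Theorem \ref{theo:natiso} with the functor $\CCR$ (respectively $\CAR$). The content of the proof is mostly the verification that all intermediate functors restrict correctly to the subcategories for which $\CCR$ and $\CAR$ are defined.

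First I would restrict the natural isomorphism $\eta$ of Theorem \ref{theo:natiso} along the inclusion $\GlobHypAffGreen^\mathrm{bos}\hookrightarrow\GlobHypAffGreen$. I then have to check that on this subcategory both functors $\PhaseSpace_\lin\circ\mathfrak{LinBund}$ and $\mathfrak{LinPhSp}\circ\PhaseSpace$ land in $\VecBiLin^{\mathrm{asym}}$, so that $\CCR$ may be applied. For the first composition, an object $(M,\bfA,P)$ in $\GlobHypAffGreen^\mathrm{bos}$ is mapped by $\mathfrak{LinBund}$ to $(M,\bfV,P_\bfV)$ with $\ip{\cdot}{\cdot}$ symmetric; the linear phase space functor of \cite{Bar:2011iu} then produces an antisymmetric $\tau_\lin$ because $G_\bfV$ is formally skew-adjoint and $\ip{\cdot}{\cdot}$ is symmetric. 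For the second composition, $\PhaseSpace$ restricts to a functor into $\VecBiLin^{\mathrm{asym}}$ (this was observed already in the lemma preceding the definition of $\QFT^{\mathrm{bos}}$) and $\mathfrak{LinPhSp}$ manifestly preserves (anti)symmetry of the bilinear map since it only quotients by the null space. The analogous checks for $\GlobHypAffGreen^\mathrm{ferm}$ land both compositions in $\VecBiLin^{\mathrm{sym}}$.

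Next I would apply the covariant functor $\CCR$ component-wise: for every object $(M,\bfA,P)$ in $\GlobHypAffGreen^\mathrm{bos}$, set
\begin{flalign}
\widehat{\eta}_{(M,\bfA,P)} := \CCR(\eta_{(M,\bfA,P)}) : \CCR\big(\PhaseSpace_\lin\circ\mathfrak{LinBund}(M,\bfA,P)\big) \longrightarrow \CCR\big(\mathfrak{LinPhSp}\circ\PhaseSpace(M,\bfA,P)\big).
\end{flalign}
Since $\eta_{(M,\bfA,P)}$ is an isomorphism in $\VecBiLinR\cap\VecBiLin^{\mathrm{asym}}$ and $\CCR$ is a covariant functor, each $\widehat{\eta}_{(M,\bfA,P)}$ is an isomorphism in $\astAlg$. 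Naturality of $\widehat{\eta}$ follows by applying $\CCR$ to the naturality square established at the end of the proof of Theorem \ref{theo:natiso}: functoriality of $\CCR$ (preservation of composition) turns that commutative square in $\VecBiLinR$ into a commutative square in $\astAlg$. The fermionic statement is proved identically, replacing $\CCR$ by $\CAR$ and $\VecBiLin^{\mathrm{asym}}$ by $\VecBiLin^{\mathrm{sym}}$.

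The only potential obstacle is the bookkeeping in the first step, namely confirming that $\eta_{(M,\bfA,P)}$ really is a morphism in the subcategory on which $\CCR$ (respectively $\CAR$) is defined; but this amounts to checking symmetry/antisymmetry of the relevant bilinear forms, which is immediate from Proposition \ref{propo:sympl} and the skew-adjointness of $G_\bfV$ recorded in Lemma \ref{lem:greenadjoint}. No further calculation is required, and the remainder is a completely formal whiskering argument.
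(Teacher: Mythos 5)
Your proposal is correct and follows essentially the same route as the paper: the paper's own proof consists of the single observation that the natural isomorphism of Theorem \ref{theo:natiso} ``canonically lifts'' under the $\CCR$ (resp.\ $\CAR$) functor, which is exactly your whiskering argument. Your additional bookkeeping — verifying that both composite functors land in $\VecBiLin^{\mathrm{asym}}$ (resp.\ $\VecBiLin^{\mathrm{sym}}$) so that $\CCR$ (resp.\ $\CAR$) is applicable, and that functoriality transports both the invertibility of each component and the naturality squares — is left implicit in the paper but is a correct and welcome elaboration of the same idea.
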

\begin{proof}
The natural isomorphism constructed in Theorem \ref{theo:natiso}
canonically lifts to natural isomorphisms (denoted by the same symbols) 
$\eta_{(M,\bfA,P)}:\CCR(\EE_\lin,\tau_\lin) \to \CCR(\EE/\mathcal{N},\tau)$ (for bosonic theories)
and $\eta_{(M,\bfA,P)}:\CAR(\EE_\lin,\tau_\lin) \to \CAR(\EE/\mathcal{N},\tau)$ (for fermionic theories).
\end{proof}

\section{\label{sec:states}Induction of states}
In this section we prove that physically relevant states of affine quantum field theories can be
induced from those of their linear counterparts.
In particular we are interested in states satisfying the Hadamard condition.
In \cite{Brunetti:2001dx} it has been shown that the assignment of a state space to a quantum field theory 
can be described in terms of a suitable contravariant functor, however, the identification of a preferred
Hadamard state on all spacetimes is not possible.
This argument has been strengthened by a no-go theorem proven by Fewster and Verch \cite{Fewster:2011pe},
which holds under the assumption of the dynamical locality property. 
 We expect that a similar result holds true also in the case under study. 
Hence, we shall discuss in this section only the construction of states on a fixed background geometry
and do not attempt to give a categorical description.
We furthermore spell out our results only for bosonic quantum field theories, since
the constructions for fermionic ones are analogous.
\sk

Given an object $(M,\bfA,P)$ in $\GlobHypAffGreen^\mathrm{bos}$ we can construct a unital
$\ast$-algebra $\QFT^\mathrm{bos}(M,\bfA,P)$ by applying the functor $\QFT^\mathrm{bos} :=
 \CCR\circ \PhaseSpace: \GlobHypAffGreen^\mathrm{bos} \to \astAlg$.
We are interested in constructing states on this algebra which satisfy the Hadamard condition. 
This condition has been first imposed for linear theories, since it implies that
the ultraviolet behavior of the state coincides with that of the Minkowski vacuum 
and that the quantum fluctuations of all observables are bounded. The mathematical 
relevance of the Hadamard condition becomes manifest when one includes perturbative interactions
or studies an extended class of observables, containing e.g.~the stress energy tensor. In this
 case the algebra of field polynomials has to be enlarged to include also the so-called Wick polynomials and a locally 
 covariant notion of Wick products can be defined by exploiting Hadamard states \cite{Hollands:2001fb}.
\sk

Notice that we have used in the text above a slight abuse of language when referring to ``Hadamard condition''.
Properly speaking, an algebraic state is physical if it satisfies the 
{\em microlocal spectrum condition ($\mu$SC)} \cite{Brunetti:1995rf}, 
which is a stronger requirement and equivalent to the conventional Hadamard condition only under certain assumptions
(see \cite{Sanders:2009sw}). 
\sk

Since there has been a considerable amount of work on the explicit construction of states of Hadamard type
 for linear quantum field theories \cite{Dappiaggi:2005ci, Dappiaggi:2007mx, Dappiaggi:2010gt, Hack:2010iw, Dappiaggi:2011cj},
 we in particular can ask whether there exists a way to induce physically relevant states 
 from linear to affine quantum field theories. Indeed, using the linearization functor discussed in Section \ref{sec:linfunctor}
we can associate to $(M,\bfA,P)$ also the unital $\ast$-algebra $\QFT^\mathrm{bos}_\lin(M,\bfA,P)$
 of the linearized theory by applying the functor
$\QFT^\mathrm{bos}_\lin
 := \CCR \circ \PhaseSpace_\lin\circ \mathfrak{LinBund}:\GlobHypAffGreen^\mathrm{bos} \to \astAlg$.
Let us consider a state $\Omega$ for the linear quantum field theory, that is a positive and normalized linear functional 
from $\QFT^\mathrm{bos}_\lin (M,\bfA,P)$ to the complex numbers. 
We shall also assume that $\Omega$ is quasi-free and satisfies the Hadamard condition (which in this case is
equivalent to the $\mu$SC  \cite{Sanders:2009sw}).  
Any unital $\ast$-algebra homomorphism $\kappa: \QFT^\mathrm{bos}(M,\bfA,P)\to \QFT^\mathrm{bos}_\lin (M,\bfA,P)$
allows us to induce a state $\Omega_\kappa$ on $\QFT^\mathrm{bos}(M,\bfA,P)$ via the pull-back construction,
 for all $b\in \QFT^\mathrm{bos} (M,\bfA,P)$,
\begin{flalign}
\Omega_\kappa(b) := \Omega\big(\kappa(b)\big)~.
\end{flalign}
We are now proposing a particularly simple class of such unital $\ast$-algebra homomorphisms.
The pull-back of a quasi-free and Hadamard state $\Omega$ on $\QFT^\mathrm{bos}_\lin (M,\bfA,P)$ 
is investigated in detail and it is found that this state is not quasi-free (this is actually what one wants in an affine theory, see Remark
\ref{rem:state}), but it still satisfies the microlocal spectrum condition.
\sk

Let us consider the two vector spaces $\EE= \big(\sectn{M}{\bfA^\dagger}/\mathrm{Triv}\big)/P^\ast\big[\sectn{M}{\bfV}\big]$
and  $\EE_\lin \oplus \bbR$, where $\EE_\lin = \sectn{M}{\bfV}/P_\bfV\big[\sectn{M}{\bfV}\big]$.
Given any $\widehat{s}\in\sect{M}{\bfA}$ satisfying the affine equation of motion $P(\widehat{s}) =0$ there is a well-defined
linear map
\begin{flalign}
\EE \to \EE_\lin \oplus \bbR~,~~[\varphi] \mapsto [\varphi_\bfV] + \int_M\vol~\varphi(\widehat{s})~,
\end{flalign}
which lifts to a unital $\ast$-algebra homomorphism $\kappa_{\widehat{s}}: \QFT^\mathrm{bos}(M,\bfA,P)
\to \QFT_\lin^\mathrm{bos}(M,\bfA,P)$.
Indeed, defining on the generators $\Psi([\varphi])$, $[\varphi]\in \EE$, of $\QFT^\mathrm{bos}(M,\bfA,P)$ the map by
 $\kappa_{\widehat{s}}\big(\Psi([\varphi])\big) :=  \Psi_\lin\big([\varphi_\bfV]\big) +\int_M\vol~\varphi(\widehat{s})~~\oone $, 
for all $[\varphi]\in \EE$, canonically induces a unital $\ast$-algebra homomorphism. Here
 $\Psi_\lin\big([h]\big)$, $[h]\in\EE_\lin$, are the generators of
  $\QFT_\lin^\mathrm{bos}(M,\bfA,P)$ and $\oone$ the unit in this algebra.
Notice that the pull-back state $\Omega_{\kappa_{\widehat{s}}}$ on $\QFT^\mathrm{bos}(M,\bfA,P)$
is not quasi-free. In particular, the one-point distribution reads, for all $[\varphi]\in\EE$,
\begin{flalign}
 \Omega_{\kappa_{\widehat{s}}}\big(\Psi([\varphi])\big) = \Omega\Big(\kappa_{\widehat{s}}\big(\Psi([\varphi])\big)\Big)
= \Omega\Big(\Psi_\lin([\varphi_\bfV]) + \int_M\vol~\varphi(\widehat{s}) ~~\oone\Big)
=\int_M\vol~\varphi(\widehat{s}) ~.\label{eqn:onepoint}
\end{flalign}
For the two-point distribution we obtain, for all $[\varphi],[\psi]\in \EE$,
\begin{flalign}\label{eqn:twopoint}
\Omega_{\kappa_{\widehat{s}}}\big(\Psi([\varphi])\,\Psi([\psi])\big) = \Omega\big(\Psi_\lin([\varphi_\bfV])\,\Psi_\lin([\psi_\bfV])\big)
+ \left(\int_M\vol~\varphi(\widehat{s})\right)~\left(\int_M\vol~\psi(\widehat{s})\right)~.
\end{flalign}
\begin{rem}\label{rem:state}
The physical interpretation of the unital $\ast$-algebra homomorphism $\kappa_{\widehat{s}}$ is as follows:
We use some {\it fixed} reference solution $\widehat{s}$ in order to split any section $\varphi\in\sectn{M}{\bfA^\dagger}$
as $\varphi = \varphi(\widehat{s}) \,\1 + \ip{\varphi_\bfV}{\cdot -\widehat{s}}$.
Due to Proposition \ref{propo:degen} the first term is an element in the null space of $\tau$, hence in the quantum theory
an element in the center of the algebra. 
The classical observable (\ref{eqn:laf}) associated to the first term can be interpreted to measure 
the background solution $\widehat{s}$, and so we should interpret the corresponding quantum observable.
The second term is interpreted to measure the fluctuations
around the background solution  $\widehat{s}$. The state $\Omega_{\kappa_{\widehat{s}}}$ we have constructed
has a one-point distribution  (\ref{eqn:onepoint}) which reproduces classical results (i.e.~the correspondence principle
holds true) and a two-point distribution (\ref{eqn:twopoint}) which becomes  (upon truncation)  
the two-point function of the linear quantum field theory describing the
 fluctuations around $\widehat{s}$.
\end{rem}
\sk

We now study in detail the properties of the state $\Omega_{\kappa_{\widehat{s}}}$,
following in particular the idea and results of \cite{Sanders:2009sw}.
Let $\widetilde\Omega:\QFT^\mathrm{bos}(M,\bfA,P)\to\mathbb{C}$ be an arbitrary state and let us denote by
 $\widetilde{\omega}_n$, $n\in \bbN$, the $n$-point distribution of $\widetilde\Omega$, i.e.~the
 distribution $\widetilde{\omega}_n:\sectn{M}{\bfA^\dagger}^{\times n}\to\mathbb{C}$ defined by, for all
 $\varphi_1,\dots,\varphi_n\in \sectn{M}{\bfA^\dagger}$,
\begin{flalign}\label{eqn:npoint}
\widetilde{\omega}_n\big(\varphi_1,\dots,\varphi_n\big) :=\widetilde\Omega\big(\Psi([\varphi_1])\dots \Psi([\varphi_n])\big)~.
\end{flalign}
We denote the integral kernel of this distribution  by $\widetilde{\omega}_n(x_1,\dots,x_n)$. 
For all $n\geq 1$, let $\mathcal{I}_n$ be the set of all partitions of $\{1,\dots ,n\}$.
For all $I\in\mathcal{I}_n$, any element $i\in I$ is an ordered set with $\vert i\vert$ elements, namely
 $i(1)< \dots <i(\vert i\vert)$. The {\em truncated $n$-point distribution} $\widetilde\omega_n^{\mathrm{T}}(x_1,...,x_n)$, 
 $n\in \bbN$, of  $\widetilde\Omega$ is then defined implicitly by
\begin{flalign}\label{trunc}
\widetilde{\omega}_n(x_1,\dots ,x_n)=
\sum\limits_{I\in\mathcal{I}_n}\prod\limits_{i\in I}\widetilde{\omega}^{\mathrm{T}}_{\vert i\vert}(x_{i(1)},\dots ,x_{i(\vert i\vert)})~.
\end{flalign}

\begin{defi}\label{defi:affqf}
A state $\widetilde\Omega:\QFT^\mathrm{bos}(M,\bfA,P)\to\mathbb{C}$ is called 
{\bf affine quasi-free} if and only if $\widetilde\omega^{\mathrm{T}}_n=0$, for all $n>2$.
\end{defi}

In Definition \ref{defi:affqf} we deviate from the standard definition of quasi-free states, where it is also assumed
that the one-point distribution vanishes. This generalization is motivated by Remark \ref{rem:state}.
\begin{propo}\label{affinequasifree}
Let $\Omega:\QFT^\mathrm{bos}_\lin(M,\bfA,P)\to\mathbb{C}$ be a quasi-free state.
Then the state $\Omega_{\kappa_{\widehat{s}}}:\QFT^\mathrm{bos}(M,\bfA,P)\to\mathbb{C}$ 
is affine quasi-free.
\end{propo}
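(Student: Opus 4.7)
The plan is to compute the $n$-point distributions of $\Omega_{\kappa_{\widehat{s}}}$ explicitly and show that they fit the pattern of a ``quasi-free state with a non-vanishing one-point distribution'', from which the vanishing of $\omega^{\mathrm{T}}_n$ for $n>2$ will follow by the uniqueness of the inversion in (\ref{trunc}). Using the $\ast$-algebra homomorphism property of $\kappa_{\widehat{s}}$ and the definition of the pull-back state, the $n$-point distribution reads
\begin{flalign*}
\omega_n(\varphi_1,\ldots,\varphi_n) = \Omega\Big(\big(\Psi_\lin([\varphi_{1,\bfV}]) + c_1\,\oone\big)\cdots\big(\Psi_\lin([\varphi_{n,\bfV}]) + c_n\,\oone\big)\Big),
\end{flalign*}
where I set $c_j := \int_M\vol~\varphi_j(\widehat{s})\in\bbR$. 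Expanding by linearity of $\Omega$ and commuting out the scalar factors yields
\begin{flalign*}
\omega_n(\varphi_1,\ldots,\varphi_n) = \sum\limits_{S\subseteq\{1,\ldots,n\}} \Big(\prod\limits_{j\notin S} c_j\Big)\, \omega_{\lin,\vert S\vert}\big((\varphi_{j,\bfV})_{j\in S}\big)~,
\end{flalign*}
where $\omega_{\lin,k}$ denotes the $k$-point distribution of $\Omega$ on $\QFT^\mathrm{bos}_\lin(M,\bfA,P)$ and the arguments in $\omega_{\lin,\vert S\vert}$ are ordered according to the natural order inherited from $S\subseteq\{1,\ldots,n\}$.

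Since $\Omega$ is quasi-free in the standard sense (vanishing one-point distribution and $\omega^{\mathrm{T}}_{\lin,n}=0$ for all $n\neq 2$), the moments-cumulants relation collapses to Wick's theorem: $\omega_{\lin,k}$ vanishes when $k$ is odd, and for even $k$ it equals the sum over all pairings of $\{1,\ldots,k\}$ of products of the two-point truncated distributions $\omega^{\mathrm{T}}_{\lin,2}$. Substituting this into the previous display and observing that specifying a subset $S\subseteq\{1,\ldots,n\}$ of even cardinality together with a pairing of $S$ is in bijection with specifying a partition $I\in\mathcal{I}_n$ whose blocks all have size $1$ (namely the elements of $\{1,\ldots,n\}\setminus S$) or size $2$ (namely the pairs), I would rewrite
\begin{flalign*}
\omega_n(\varphi_1,\ldots,\varphi_n) = \sum\limits_{\substack{I\in\mathcal{I}_n\\ \vert i\vert\in\{1,2\}\ \forall\, i\in I}} \prod\limits_{i\in I} A_{\vert i\vert}\big(\varphi_{i(1)},\ldots,\varphi_{i(\vert i\vert)}\big)~,
\end{flalign*}
with $A_1(\varphi):=\int_M\vol~\varphi(\widehat{s})$ and $A_2(\varphi,\psi):=\omega^{\mathrm{T}}_{\lin,2}(\varphi_\bfV,\psi_\bfV)$.

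To conclude, I extend the family by setting $A_n:=0$ for $n>2$, so that the displayed identity may equivalently be written as a sum over \emph{all} partitions $I\in\mathcal{I}_n$ of products of $A_{\vert i\vert}$; this is precisely the moments-to-truncated-moments relation (\ref{trunc}) applied to the sequence $\{A_n\}$. Since (\ref{trunc}) can be solved recursively for $\omega^{\mathrm{T}}_n$ by isolating the trivial partition $\{\{1,\ldots,n\}\}$ (whose contribution on the right-hand side is $\omega^{\mathrm{T}}_n$ itself), a simple induction on $n$ gives $\omega^{\mathrm{T}}_n = A_n$ for every $n\geq 1$, so in particular $\omega^{\mathrm{T}}_n=0$ for all $n>2$, which is Definition \ref{defi:affqf}. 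The main delicate step is the combinatorial bookkeeping that reorganises the sum over (subset, pairing) pairs into a sum over partitions with blocks of size at most $2$; once this is settled, the remainder reduces to the uniqueness of Möbius inversion implicit in (\ref{trunc}) and to the defining Wick structure of a quasi-free state.
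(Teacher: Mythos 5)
Your proof is correct, and it reorganises the argument in a way that differs from the paper's. The paper proceeds from the truncated side: it computes $\omega^{\mathrm{T}}_{\kappa_{\widehat{s}},1}$, $\omega^{\mathrm{T}}_{\kappa_{\widehat{s}},2}$, $\omega^{\mathrm{T}}_{\kappa_{\widehat{s}},3}$ by hand and then runs an induction on $n$ inside the relation (\ref{trunc}), assuming $\omega^{\mathrm{T}}_{\kappa_{\widehat{s}},m}=0$ for $2<m<n$ and showing that the surviving partition sum (pairs and singlets) reproduces $\omega_{\kappa_{\widehat{s}},n}$, so that the residual truncated term must vanish. You instead compute all the moments $\omega_n$ in closed form first — the expansion $\omega_n=\sum_{S}\bigl(\prod_{j\notin S}c_j\bigr)\,\omega_{\lin,\vert S\vert}$ is exactly the step the paper leaves implicit in the phrase ``by considerations similarly to (\ref{eqn:twopoint})'' — and then conclude by the uniqueness of the recursive inversion of (\ref{trunc}): since the sequence $\{A_n\}$ with $A_1=\mathrm{ev}_{\widehat{s}}$, $A_2=\omega^{\mathrm{T}}_{\lin,2}$ composed with the linear-part map, and $A_n=0$ for $n>2$ satisfies the same moments-to-cumulants relation, it must coincide with $\{\omega^{\mathrm{T}}_n\}$. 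The two proofs use the same substantive inputs (the form of $\kappa_{\widehat{s}}$, centrality of $\oone$, Wick factorisation of the quasi-free $\Omega$, and the bijection between subset-plus-pairing data and partitions into blocks of size at most two), but your route avoids the induction on truncated distributions entirely and makes the key combinatorial identity explicit, which is arguably cleaner; the paper's induction, on the other hand, needs only the vanishing of the low-order truncated distributions as explicit input and keeps the computation closer to the definition of affine quasi-freeness. One point you should state explicitly for completeness is that the induced order on each block $i$ of a partition (the convention $i(1)<\dots<i(\vert i\vert)$ in (\ref{trunc})) matches the order in which the noncommuting generators $\Psi_\lin$ appear after you commute out the central factors $c_j\,\oone$; this is what makes your identification of the subset-plus-pairing sum with the partition sum legitimate.
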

\begin{proof}
We denote by $\omega_{\kappa_{\widehat{s}} , n}$, $n\in\bbN$, the $n$-point distribution
of the state $\Omega_{\kappa_{\widehat{s}}}:\QFT^\mathrm{bos}(M,\bfA,P)\to\mathbb{C}$
and by $\omega_{\bfV , n}$, $n\in\bbN$, the $n$-point distribution of
the state $\Omega_{\bfV}:\QFT^\mathrm{bos}(M,\bfA,P)\to\mathbb{C}$ defined by,
for all $[\varphi_1],\dots,[\varphi_n]\in \EE$,
\begin{flalign}\label{eqn:statetempV}
\Omega_\bfV\big(\Psi([\varphi_1])\dots \Psi([\varphi_n])\big) := \Omega\big(\Psi_\lin([\varphi_{1\,\bfV}])
\dots \Psi_\lin([\varphi_{n\,\bfV}])\big)~.
\end{flalign}
The hypothesis that $\Omega$ is quasi-free implies that $\Omega_\bfV$ is also quasi-free, 
i.e.~$\omega^\mathrm{T}_{\bfV , n} =0$, for all $n\neq 2$, and $\omega_{\bfV,2} = \omega_{\bfV , 2}^\mathrm{T}$.
From (\ref{eqn:onepoint}) and (\ref{eqn:twopoint}) we obtain 
$\omega_{\kappa_{\widehat{s}} , 1}(x) = \omega^\mathrm{T}_{\kappa_{\widehat{s}} , 1}(x) = \mathrm{ev}_{\widehat{s}}(x)$
 (where $\mathrm{ev}$ denotes the evaluation map) and $\omega_{\kappa_{\widehat{s}} ,  2}(x_1,x_2) = 
 \omega_{\bfV ,  2}(x_1,x_2)  + \omega_{\kappa_{\widehat{s}} ,  1}(x_1)~ \omega_{\kappa_{\widehat{s}} ,  1}(x_2)$, thus
 $\omega^{\mathrm{T}}_{\kappa_{\widehat{s}} ,  2}(x_1,x_2) = \omega_{\bfV,  2}(x_1,x_2)$.
 A direct calculation similar to (\ref{eqn:onepoint}) and (\ref{eqn:twopoint}) shows 
 that $\omega_{\kappa_{\widehat{s}} , 3}^{\mathrm{T}} = 0$. We now show by induction that 
 $\omega_{\kappa_{\widehat{s}} , n}^\mathrm{T} = 0$, for all $n >2$, and hence that 
 $\Omega_{\kappa_{\widehat{s}}}$ is affine quasi-free: Assume that $\omega_{\kappa_{\widehat{s}} , m}^\mathrm{T} = 0$,
 for all $2<m <n$, then for $n$ the right hand side of (\ref{trunc}) simplifies to
 \begin{flalign}
\nn \sum\limits_{I\in\mathcal{I}_n}\prod\limits_{i\in I} &\omega_{\kappa_{\widehat{s}} , \vert i\vert}^{\mathrm{T}}(x_{i(1)},\dots ,x_{i(\vert i\vert)}) \\
\nn &= \omega_{\kappa_{\widehat{s}} , n}^{\mathrm{T}}(x_1,\dots ,x_n)   +
  \sum\limits_{I\in\mathcal{I}_n\vert^{}_{\mathrm{2,1}}}\prod\limits_{i\in I_2} \omega_{\kappa_{\widehat{s}} , 2}^{\mathrm{T}}(x_{i(1)},x_{i(2)})~\prod\limits_{j\in I_1}\omega_{\kappa_{\widehat{s}} , 1}^{\mathrm{T}}(x_{j(1)})\\
 &= \omega_{\kappa_{\widehat{s}} , n}^{\mathrm{T}}(x_1,\dots ,x_n)   +
  \sum\limits_{I\in\mathcal{I}_n\vert^{}_{\mathrm{2,1}}}\prod\limits_{i\in I_2} \omega_{\bfV , 2}(x_{i(1)},x_{i(2)})~\prod\limits_{j\in I_1}\omega_{\kappa_{\widehat{s}} , 1}(x_{j(1)})~,\label{eqn:stateqftemp}
 \end{flalign}
 where we have denoted by $\mathcal{I}_n\vert^{}_{\mathrm{2,1}}$ the set of all partitions of $\{1,\dots,n\}$
 into pairs and singlets and  by $I = :I_1\cup I_2$ the split into singlets and pairs.
 Since $\Omega_\bfV$ is quasi-free, its $m$-point distribution factorizes, for all $m>1$, in two-point distributions
 \begin{flalign}
 \omega_{\bfV , m}(x_1,\dots, x_m) = \sum\limits_{I\in \mathcal{I}_m\vert^{}_{2}} \prod\limits_{i\in I} \omega_{\bfV , 2}(x_{i(1)},x_{i(2)})~,
\end{flalign}
where we have denoted by $\mathcal{I}_m\vert^{}_{\mathrm{2}}$ the set of all partitions of $\{1,\dots,m\}$
 into pairs (in our conventions the sum over the empty set $\mathcal{I}_m\vert^{}_{\mathrm{2}}$, for $m$ odd, is zero).
Using  this expression in order to simplify the second term in (\ref{eqn:stateqftemp}) yields
  \begin{flalign}
\nn \sum\limits_{I\in\mathcal{I}_n}\prod\limits_{i\in I} &\omega_{\kappa_{\widehat{s}} , \vert i\vert}^{\mathrm{T}}(x_{i(1)}, \dots ,x_{i(\vert i\vert)}) \\
\nn &= \omega_{\kappa_{\widehat{s}} , n}^{\mathrm{T}}(x_1,\dots ,x_n)   +\sum\limits_{k=0}^n
  \sum\limits_{I\in\mathcal{I}_n\vert^{}_{\mathrm{(n-k)\ast,1}}}\omega_{\bfV , n-k}(x_{i(1)},\dots, x_{i(n-k)})~\prod\limits_{j\in I_1}\omega_{\kappa_{\widehat{s}} , 1}(x_{j(1)})\\
  &= \omega_{\kappa_{\widehat{s}} , n}^{\mathrm{T}}(x_1,\dots ,x_n)    + \omega_{\kappa_{\widehat{s}} , n}(x_1,\dots ,x_n)   ~,
  \end{flalign}
  where we have denoted by $\mathcal{I}_n\vert^{}_{\mathrm{(n-k)\ast,1}}$ the set of all partitions of $\{1,\dots,n\}$
  into {\it exactly one} subset with $n-k$ elements and consequently $k$ singlets. The last equality follows by considerations
    similarly to  (\ref{eqn:twopoint}). Hence, (\ref{trunc}) implies that $ \omega_{\kappa_{\widehat{s}} , n}^{\mathrm{T}}=0$.
\end{proof}

We now prove that the state $\Omega_{\kappa_{\widehat{s}}}$ is physical.
In contrast to what is common in linear quantum field theories with standard quasi-free states,
this proof does not only consist of verifying the Hadamard condition, but rather the full microlocal spectrum condition.
We review its definition starting with the following ancillary structure \cite{Sanders:2009sw}:
\begin{defi}
Let $\mathcal{G}_n$, $n\in \bbN$, be the set of all graphs with $n$ vertices and finitely many edges.
 An immersion of a graph $G\in\mathcal{G}_n$ into a spacetime $M$ consists of the following assignments:
 \begin{itemize}
  \item[(i)] a point $x(i)\in M$ to each vertex $\nu_i$ of $G$, 
  \item[(ii)] a piecewise smooth curve $\gamma_r(i,j)$ between 
  $x(i)$ and $x(j)$ to every edge $e_r$ connecting the vertices $\nu_i$ and $\nu_j$ in $G$,
  \item[(iii)] a causal, future pointing 
  and covariantly constant covector field $k_r$ on each $\gamma_r(i,j)$ to each $e_r$.
  \end{itemize}
   Let $\mathcal{Z}$ denote the $0$-section in $T^\ast M^n$. We say that a point 
   $(x_1,\zeta_1;\dots ;x_n,\zeta_n)\in T^\ast M^n\setminus\mathcal{Z}$
   is instantiated by an immersion of a graph $G\in\mathcal{G}_n$ if and only if for each $i=1,\dots,n$ the 
   immersion sends the vertex $\nu_i$ to $x_i$ and 
   \begin{flalign}
   \zeta_i=\sum\limits_{\substack{\gamma_r(i,j)\\ i<j}}k_r(x_i)-\sum\limits_{\substack{\gamma_r(i,j)\\ i>j}}k_r(x_i)~.
   \end{flalign}
\end{defi}
\sk

The singularities allowed for the $n$-point distribution of a physical state can be described in terms of the set
\begin{flalign}
\nn \Gamma_n:= \{ &(x_1,\zeta_1;\dots;x_n,\zeta_n)\in T^\ast M^n\setminus\mathcal{Z}\;:\;\exists G\in\mathcal{G}_n\,\text{and an immersion of $G$ into}\\ 
&\text{$M$ which instantiates the point}\;(x_1,\zeta_1;\dots;x_n,\zeta_n) \}~.
\end{flalign}

\begin{defi}
Let $\widetilde{\Omega}:\QFT^\mathrm{bos}(M,\bfA,P)\to\mathbb{C}$ be any state and let $\mathrm{WF}(\widetilde{\omega}_n)$,
$n\in \bbN$,
 be the wavefront set of the $n$-point distribution. We say that $\widetilde{\Omega}$ satisfies the 
 {\bf microlocal spectrum condition} with smooth immersions if and only if, 
 for all $n\in\mathbb{N}$, $\mathrm{WF}(\widetilde{\omega}_n)\subset\Gamma_n$. 
\end{defi}
\sk

Notice that here we encounter a difference to the work of Sanders \cite{Sanders:2009sw}.
Our theories are not scalar fields, but fields where the vector dual bundle $\bfA^\dagger$ is an (in general nontrivial)
vector bundle of rank higher than one. The wavefront set of a (vector valued) distribution can be simply 
defined as the union of the wavefront sets of the component distributions obtained via
 a local basis expansion. This turns out to be an invariant concept due to the properties of 
 scalar wavefront sets \cite{SahlmannVerch}. The results of \cite{Sanders:2009sw} 
 can be translated to our setting. Of particular importance is Lemma 2.6 which, adapted to our 
 scenario, guarantees the following:
\begin{lem}
A state $\widetilde{\Omega}:\QFT^\mathrm{bos}(M,\bfA,P)\to\mathbb{C}$ satisfies the microlocal 
spectrum condition with smooth immersions if and only if $\mathrm{WF}(\widetilde{\omega}^{\mathrm{T}}_n)\subset\Gamma_n$,
for all $n\in\mathbb{N}$. Here $\widetilde{\omega}^{\mathrm{T}}_n$ stands for the truncated $n$-point distribution.
\end{lem}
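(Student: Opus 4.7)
The plan is to prove the two implications of the equivalence by induction on $n$, exploiting the relation (\ref{trunc}) between full and truncated $n$-point distributions together with standard calculus of wavefront sets under sums and tensor products. Since the bundle $\bfA^\dagger$ need not be trivial, I would first reduce the problem to the scalar case by observing that, in any local vector-bundle chart, the wavefront set of a vector-valued distribution is the union of the wavefront sets of its components, and that by the results in \cite{SahlmannVerch} this definition is intrinsic. Once this reduction is in place, all statements concerning $\mathrm{WF}$ can be formulated exactly as in the scalar case of \cite{Sanders:2009sw}.

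For the ``only if'' direction one assumes $\mathrm{WF}(\widetilde\omega_n)\subset \Gamma_n$ for all $n$ and proves the analogous statement for $\widetilde\omega^{\mathrm{T}}_n$ by induction. The inductive step proceeds by Möbius inversion of (\ref{trunc}), which expresses $\widetilde\omega^{\mathrm{T}}_n$ as a finite linear combination of tensor products of full $k$-point distributions with $k\le n$. Since the wavefront set of a finite sum is contained in the union of wavefront sets, and since tensor products satisfy the inclusion $\mathrm{WF}(u\otimes v)\subseteq (\mathrm{WF}(u)\cup\mathcal{Z}_u)\times(\mathrm{WF}(v)\cup\mathcal{Z}_v)\setminus\mathcal{Z}$, one only has to verify that every singular covector obtained from a product $\prod_{i\in I}\widetilde\omega_{|i|}$ with covectors in $\Gamma_{|i|}$ on each factor lies in $\Gamma_n$. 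This is true because $\Gamma_n$ is by definition closed under taking disjoint unions of graphs (a disjoint union of graphs immersed on the blocks of $I$ is again a graph in $\mathcal{G}_n$), and the covectors attached to vertices belonging to zero sections in some factors correspond to vertices with no incident edges.

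The ``if'' direction follows analogously, this time reading (\ref{trunc}) in the direct form: if each truncated distribution satisfies $\mathrm{WF}(\widetilde\omega^{\mathrm{T}}_n)\subset \Gamma_n$, then each term on the right-hand side of (\ref{trunc}) is a tensor product whose wavefront set is, by the same estimate, contained in the product of the $\Gamma_{|i|}$'s extended by zero sections, and once again this product is contained in $\Gamma_n$ via the disjoint-union construction of graph immersions. Summing over partitions $I\in\mathcal{I}_n$ produces $\widetilde\omega_n$ with wavefront set contained in $\Gamma_n$.

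The main obstacle I anticipate is the careful bookkeeping in the tensor-product estimate: one must track which vertices in the compound graph pick up a nonzero covector from which factor, and show that the zero-section contributions in the tensor-product wavefront-set estimate correspond precisely to immersions with no edges adjacent to the relevant vertices, so that nothing falls outside $\Gamma_n$. This is the content of Lemma 2.6 of \cite{Sanders:2009sw} in the scalar case, and the combinatorial step there carries over verbatim once the component-wise reduction for vector-valued distributions has been performed. Hence the statement reduces to the cited lemma, applied component-wise in any local frame of $\bfA^\dagger$.
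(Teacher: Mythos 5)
Your proposal is correct and follows essentially the same route as the paper: the paper gives no independent proof but reduces to the scalar case by defining the wavefront set of a vector-valued distribution component-wise (invariantly, by \cite{SahlmannVerch}) and then invokes Lemma 2.6 of \cite{Sanders:2009sw}, which is exactly your reduction plus the combinatorial induction on the truncation formula (\ref{trunc}) that underlies that cited lemma. The only difference is that you spell out the tensor-product wavefront-set estimate and the closure of $\Gamma_n$ under disjoint unions of immersed graphs, which the paper leaves to the reference.
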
 

We can now prove the main result of this section:
\begin{propo}
Let $\Omega:\QFT^\mathrm{bos}_\lin(M,\bfA,P)\to\mathbb{C}$ be a quasi-free state satisfying the Hadamard condition.
Then the state $\Omega_{\kappa_{\widehat{s}}}:\QFT^\mathrm{bos}(M,\bfA,P)\to\mathbb{C}$
 satisfies the microlocal spectrum condition with smooth immersions.
\end{propo}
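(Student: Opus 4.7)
The plan is to reduce the problem to checking the wavefront set only on the truncated one- and two-point distributions, using the preceding lemma together with Proposition \ref{affinequasifree}. Since $\Omega$ is quasi-free, Proposition \ref{affinequasifree} implies that $\Omega_{\kappa_{\widehat{s}}}$ is affine quasi-free in the sense of Definition \ref{defi:affqf}, so $\omega^{\mathrm{T}}_{\kappa_{\widehat{s}},n} = 0$ for all $n>2$. In particular $\mathrm{WF}(\omega^{\mathrm{T}}_{\kappa_{\widehat{s}},n}) = \emptyset \subset \Gamma_n$ trivially for $n>2$, and by the lemma just quoted it suffices to control the cases $n=1,2$.

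For $n=1$, the explicit computation in (\ref{eqn:onepoint}) together with the identification in the proof of Proposition \ref{affinequasifree} shows that the integral kernel of the one-point distribution is $\omega^{\mathrm{T}}_{\kappa_{\widehat{s}},1}(x) = \mathrm{ev}_{\widehat{s}}(x)$, which is the smooth section $\widehat{s}$ of $\bfA$ evaluated at $x$ (a smooth object in the relevant vector bundle, read fibrewise). A smooth distribution has empty wavefront set, so $\mathrm{WF}(\omega^{\mathrm{T}}_{\kappa_{\widehat{s}},1}) = \emptyset \subset \Gamma_1$.

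For $n=2$, the proof of Proposition \ref{affinequasifree} shows that $\omega^{\mathrm{T}}_{\kappa_{\widehat{s}},2}(x_1,x_2) = \omega_{\bfV,2}(x_1,x_2)$, i.e.~the two-point distribution of the auxiliary state $\Omega_\bfV$ defined in (\ref{eqn:statetempV}). By construction, $\omega_{\bfV,2}$ is obtained from the two-point distribution $\omega_2$ of the linear Hadamard state $\Omega$ on $\QFT^{\mathrm{bos}}_\lin(M,\bfA,P)$ by restricting its arguments to the linear parts $\varphi_\bfV,\psi_\bfV$, which is a smooth (fibrewise linear) operation and thus does not enlarge the wavefront set. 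By \cite{Sanders:2009sw}, the Hadamard condition for the quasi-free state $\Omega$ on the linear theory is equivalent to the microlocal spectrum condition with smooth immersions, so $\mathrm{WF}(\omega_2) \subset \Gamma_2$ and hence $\mathrm{WF}(\omega^{\mathrm{T}}_{\kappa_{\widehat{s}},2}) \subset \Gamma_2$. The wavefront set is computed componentwise in any local frame for $\bfA^\dagger$, which by \cite{SahlmannVerch} gives an invariant notion, so the bundle-valued nature of our distributions causes no problem.

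The main obstacle I anticipate is the careful identification of $\omega_{\bfV,2}$ with (a restriction of) the two-point distribution of the linear theory $\Omega$, together with justifying that passing to linear parts preserves the wavefront set estimate; once this is done, the argument is essentially a combination of the affine quasi-freeness from Proposition \ref{affinequasifree}, the smoothness of the one-point piece, and the translation between the Hadamard condition and the microlocal spectrum condition for quasi-free states on linear theories established in \cite{Sanders:2009sw}. No extra analytic input beyond these ingredients should be required.
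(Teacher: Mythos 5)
Your proposal is correct and follows essentially the same route as the paper's own proof: affine quasi-freeness (Proposition \ref{affinequasifree}) disposes of all $n>2$, the smoothness of $\mathrm{ev}_{\widehat{s}}$ handles $n=1$, and the identification $\omega^{\mathrm{T}}_{\kappa_{\widehat{s}},2}=\omega_{\bfV,2}$ together with the Hadamard property of $\Omega$ handles $n=2$, with the bundle-valued wavefront set treated componentwise as in \cite{SahlmannVerch}. No gaps.
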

\begin{proof}
Proposition \ref{affinequasifree} guarantees that $\Omega_{\kappa_{\widehat{s}}}$ is an affine quasi-free state, 
i.e.~$\omega^{\mathrm{T}}_{\kappa_{\widehat{s}},n}=0$, for all $n>2$. 
Thus,  $\mathrm{WF}(\omega^{\mathrm{T}}_{\kappa_{\widehat{s}},n})=\emptyset\subset \Gamma_n$, for all $n>2$. 
 In the same proposition we have shown that $\omega^{\mathrm{T}}_{\kappa_{\widehat{s}},2}=
 \omega_{\bfV,2}$. Since the quasi-free state $\Omega$ is per hypothesis of Hadamard form,
 it holds that $\mathrm{WF}(\omega^\mathrm{T}_{\kappa_{\widehat{s}},2})=\mathrm{WF}(\omega_{\bfV,2})
 =\mathrm{WF}(\omega_{2})\subset\Gamma_2$, where $\omega_2$ denotes the two-point distribution of $\Omega$.
 The second equality follows from the definition of the state $\Omega_\bfV$ (\ref{eqn:statetempV}).
  To conclude, we notice that $\omega^{\mathrm{T}}_{\kappa_{\widehat{s}},1}=\mathrm{ev}_{\widehat{s}}
  \in\Gamma^\infty(M,{\bfA^\dagger}^\ast)$, with ${\bfA^\dagger}^\ast$ denoting the dual of the vector dual bundle
  and $\mathrm{ev}$ the evaluation map, and thus 
  $\mathrm{WF}(\omega^{\mathrm{T}}_{\kappa_{\widehat{s}},1})=\emptyset\subset\Gamma_1$.
\end{proof}

\section{\label{sec:inhommat}Example: Inhomogeneous linear matter field theory}
To illustrate the interesting structures available in affine matter field theories we 
are going to discuss the simple model of Example \ref{ex:inhomlin} in more detail. 
Let us start with a usual {\it linear} matter field theory,
i.e.~an object $(M,\bfV,P_\bfV)$ in $\mathsf{GlobHypLinGreen}$.
To simplify the presentation we assume that the nondegenerate bilinear form $\ip{~}{~}$ is symmetric, i.e.~that the field theory is
bosonic. The fermionic case follows analogously.

The usual field equation one imposes for such a system is $P_\bfV(s) =0$, with
$s\in\sect{M}{\bfV}$. Let us now assume that for some physical reason there is an external source 
$J\in\sect{M}{\bfV}$, such that the dynamics is governed by the inhomogeneous field equation
$P(s) := P_\bfV(s) + J =0$. Since the operator $P$ is not a linear differential operator, this system can not
be described by the methods developed in  \cite{Bar:2011iu}. 

Remember that any vector bundle is canonically an affine bundle modeled on itself (cf.~Example \ref{ex:vectorbundasaffinebund}).
 Let us denote by 
$\bfA$ the bundle $\bfV$ when regarded as an affine bundle. Then $(M,\bfA,P)$ is an object in
$\GlobHypAffGreen^\mathrm{bos}$, since
\begin{flalign}
P:\sect{M}{\bfA}\to\sect{M}{\bfV}~,~~s\mapsto P(s) = P_\bfV(s) +  J~
\end{flalign}
is an affine Green-hyperbolic operator. The linear part of $P$ is $P_\bfV$.
Thus, inhomogeneous linear matter field theories can be interpreted as affine matter field theories.
A scenario, which falls partly in this scheme, is electromagnetism where the dynamical field is
a $1$-form on the underlying spacetime and the electric current is treated as an external source.
 Since there is the additional complication of the local gauge  symmetry, a direct application of the 
 methods developed in our paper is not possible. A thorough analysis of this 
case will appear in \cite{SDH}.

There are distinct ways how to address the quantization of inhomogeneous matter field theories:
First, one can apply the methods for the quantization of affine matter field theories developed in the sections
above. Second, one can {\it choose} some solution $\widehat{s}\in\sect{M}{\bfV}$ of the inhomogeneous
field equation, i.e.~$P(\widehat{s}) =0$, and describe fluctuations $\sigma\in\sect{M}{\bfV}$ around this
$\widehat{s}$ in terms of a linear matter field theory with field equation $P_\bfV(\sigma)=0 $.
Besides the fact that this choice of $\widehat{s}$ is quite unnatural, there are also interesting differences
in the structure of the algebra of observables depending whether we follow the first or second approach. The goal
of the remaining part of this section is to discuss these differences.

\begin{rem}
A potential third method\footnote{We are grateful to the referee for pointing us out this method.} consists of constructing an algebra of observables as the quotient of the tensor algebra, built out of $\sectn{M}{V}$, by a suitable ideal. This is generated by the canonical anti/commutation relations and by elements of the form $P_V(h)-\int_M\vol~\ip{h}{J}~\oone$ where $\oone$ is the identity of the algebra. The $*$-operation is complex conjugation. The advantage of this procedure is that the resulting algebra is isomorphic to the standard field algebra, that is the one where we set $J=0$. Hence every Hadamard state for the latter induces for the algebra of the inhomogeneous system a state, which fulfills the microlocal spectrum condition. The disadvantage is that this is an ad-hoc construction which cannot be applied to the generic scenario we consider in the previous sections. For this reason we shall not further pursue this method.
\end{rem}

Let us start by following the second method and linearize the inhomogeneous theory around some solution $\widehat{s}$.
Applying the quantization methods of \cite{Bar:2011iu} (see also Section \ref{sec:linfunctor}) 
one arrives at the unital $\ast$-algebra of observables $\QFT^\mathrm{bos}_\lin(M,\bfA,P)$,
whose generators can be interpreted as quantizations of the following classical linear functionals, for
$h\in\sectn{M}{\bfV}$,
\begin{flalign}\label{eqn:obslin}
F_h : \sect{M}{\bfV}\to \bbR~,~~\sigma \mapsto F_h(\sigma) = \int_M\vol~\ip{h}{\sigma}~,
\end{flalign}
modulo equation of motion. It is important to notice that from the classical observables
$F_h$ no information on the source term $J$ can be obtained. The same holds true for the quantum observables.
All the information on $J$ is absorbed in the section $\widehat{s}$, which does not enter the observable algebra.
However, any reasonable observable algebra should allow for observables determining 
the external source term. In this sense it is fair to say that the second approach to inhomogeneous linear matter
field theories does not lead to a complete description of those.

Let us now follow the first method and regard the inhomogeneous linear matter field theory as an affine matter
field theory. Applying the quantization methods developed in the sections above one arrives
at the unital $\ast$-algebra of observables $\QFT^\mathrm{bos}(M,\bfA,P)$, 
whose generators can be interpreted as quantizations of the following
classical affine functionals, for $\varphi\in\sectn{M}{\bfA^\dagger}$,
\begin{flalign}\label{eqn:obsaff}
F_\varphi: \sect{M}{\bfA} \to \bbR~,~~s \mapsto F_\varphi(s) = \int_M\vol~\varphi(s)~,
\end{flalign}
modulo trivial functionals and equation of motion. This set of classical observables in particular includes the observables
(\ref{eqn:obslin}) by choosing $\varphi = \ip{h}{\,\cdot\, -\widehat{s}\,}$, with $h\in\sectn{M}{\bfV}$.
Notice that here the role of $\widehat{s}$ is qualitatively different to the second approach. We employ an $\widehat{s}$
in order to specify observables which we can interpret to measure differences between $s$ and $\widehat{s}$, while
in the second approach a fixed choice of $\widehat{s}$ is necessary to construct the observable algebra.
Even more, the external source $J$ can be completely reconstructed from the classical observables (\ref{eqn:obsaff}):
Choose $\varphi = \ip{P_\bfV(h)}{\,\cdot\,}$, with $h\in\sectn{M}{\bfV}$.
When evaluated on any on-shell configuration $s\in\sect{M}{\bfA}$, i.e.~$P(s)=0$,
we find
\begin{flalign}
F_\varphi(s) = \int_M\vol~\ip{P_\bfV(h)}{s}= \int_M\vol~\ip{h}{P_\bfV(s)} = -\int_M\vol~\ip{h}{J} ~.
\end{flalign}
Thus, varying $h$ over $\sectn{M}{\bfV}$ those observables determine completely the source term $J$.
The same holds true for quantum observables evaluated in the class of states studied in Section \ref{sec:states},
where $J$ can be extracted from the one-point distribution.
 
To sum up, we can say that interpreting inhomogeneous linear matter field theories
as affine matter field theories naturally leads to an appropriate algebra of observables,
avoiding any choice of background solution $\widehat{s}$ and being sufficiently rich
to contain observables measuring the external source term.

\appendix

\section*{Acknowledgements}
We would like to thank Hanno Gottschalk, Thomas-Paul Hack as well as all organizers and participants of the workshop ``Algebraic Quantum Field Theory and Local Symmetries" (Bonn, 26-28/09/2012) for useful discussions and comments.
The work of C.D. is supported partly by the University of Pavia and partly
 by the Indam-GNFM project {``Effetti topologici e struttura della teoria di campo interagente''}. The work of M.B. is supported by a Ph.D. fellowship of the University of Pavia.


\section{\label{app:CCRandCAR}The $\CCR$ and $\CAR$ functor}
\paragraph*{Canonical commutation relations (CCR):}
Let $(\EE,\tau)$ be an object in $\VecBiLin^\mathrm{asym}$. We assign to any
element $v\in\EE$ a hermitian symbol $\Psi(v)$ and generate the free unital $\ast$-algebra $\mathcal{A}^\mathrm{free}$
over $\bbC$. We factor out the two-sided $\ast$-ideal $\mathcal{I}^{\mathrm{asym}}$ generated by the elements
\begin{subequations}
\begin{flalign}
&\Psi(\alpha \,v + \beta\,w) - \alpha\,\Psi(v) - \beta\,\Psi(w)~,\\
&\big[\Psi(v),\Psi(w)\big] - i\,\tau(v,w)\,\oone~,
\end{flalign}
\end{subequations}
for $\alpha,\beta\in\bbR$ and $v,w\in\EE$. We denote the resulting unital $\ast$-algebra by 
$\CCR(\EE,\tau):=\mathcal{A}^\mathrm{free}/\mathcal{I}^\mathrm{asym}$ and call it the
{\bf (bosonic) algebra of field polynomials}.

Let us define the category $\astAlg$ as follows:
An object in $\astAlg$ is a unital $\ast$-algebra $\mathcal{A}$ over $\bbC$.
A morphism between two objects $\mathcal{A}_1$ and $\mathcal{A}_2$ in $\astAlg$
is an injective unital $\ast$-algebra homomorphism, i.e.~an injective linear map
$\kappa :\mathcal{A}_1\to \mathcal{A}_2$ such that $\kappa(\oone_1) = \oone_2$,
$\kappa(a\,b) = \kappa(a)\,\kappa(b)$ and $\kappa(a)^\ast  = \kappa(a^\ast)$,
for all $a,b\in\mathcal{A}_1$.

The association of the algebra above is indeed a covariant functor $\CCR:\VecBiLin^\mathrm{asym}\to \astAlg$.
It associates to any object $(\EE,\tau)$ in $\VecBiLin^\mathrm{asym}$ the object
in $\astAlg$ given by $\CCR(\EE,\tau) := \mathcal{A}^\mathrm{free}/\mathcal{I}^\mathrm{asym}$.
Furthermore, given any morphism $L: \EE_1\to\EE_2$ between two objects $(\EE_1,\tau_1)$
and $(\EE_2,\tau_2)$ in $\VecBiLin^\mathrm{asym}$, we can induce an injective
unital $\ast$-algebra homomorphism $\CCR(L): \CCR(\EE_1,\tau_1)\to\CCR(\EE_2,\tau_2)$
by defining on the generating symbols $\CCR(L)(\Psi_1(v)) := \Psi_2(Lv)$, for all $v\in\EE_1$.

\paragraph*{Canonical anti-commutation relations (CAR):} Let $(\EE,\tau)$ be an object 
in $\VecBiLin^\mathrm{sym}$. We assign to any
element $v\in\EE$ a hermitian symbol $\Psi(v)$ and generate the free unital $\ast$-algebra $\mathcal{A}^\mathrm{free}$
over $\bbC$. We factor out the two-sided $\ast$-ideal $\mathcal{I}^\mathrm{sym}$ generated by the elements
\begin{subequations}
\begin{flalign}
&\Psi(\alpha \,v + \beta\,w) - \alpha\,\Psi(v) - \beta\,\Psi(w)~,\\
&\big\{\Psi(v),\Psi(w)\big\} -\tau(v,w)\,\oone~,
\end{flalign}
\end{subequations}
for $\alpha,\beta\in\bbR$ and $v,w\in\EE$. We denote the resulting unital $\ast$-algebra by 
$\CAR(\EE,\tau):=\mathcal{A}^\mathrm{free}/\mathcal{I}^\mathrm{sym}$.
We can endow the algebra $\CAR(\EE,\tau)$ with a $\mathbb{Z}_2$-grading by defining all generators
$\Psi(v)$, $v\in\EE$, to be {\it odd} and the unit $\oone$ to be {\it even}.
We call this algebra the {\bf (fermionic) algebra of field polynomials}.

We define a covariant functor $\CAR: \VecBiLin^\mathrm{sym} \to \astAlg$.
To any object $(\EE,\tau)$ in $\VecBiLin^\mathrm{sym}$ we associate the
object $\CAR(\EE,\tau) = \mathcal{A}^\mathrm{free}/\mathcal{I}^\mathrm{sym}$ in $\astAlg$.
Furthermore, given any morphism $L:\EE_1\to\EE_2$ between two objects
$(\EE_1,\tau_1)$ and $(\EE_2,\tau_2)$  in $\VecBiLin^\mathrm{sym}$, we can induce an
injective unital $\ast$-algebra homomorphism $\CAR(L): \CAR(\EE_1,\tau_1)\to\CAR(\EE_2,\tau_2)$
by defining on the generating symbols $\CAR(L)(\Psi_1(v)) := \Psi_2(Lv)$, for all $v\in\EE_1$.
This map preserves the $\mathbb{Z}_2$-grading on the algebras.

\begin{rem}
Let $(\EE,\tau)$ be an object in $\VecBiLin^\mathrm{sym}$
and consider the algebra $\CAR(\EE,\tau)$. 
Let us also assume that we have a state on this algebra, i.e.~a linear
map $\Omega: \CAR(\EE,\tau)\to \bbC$ such that
$\Omega(a)^\ast = \Omega(a^\ast)$, $\Omega(\oone) =1$ and $\Omega(a^\ast\,a) \geq 0$, for all
$a\in\CAR(\EE,\tau)$.
Consider as a special element the (equivalence class) of a generating element $\Psi(v)$,
$v\in\EE$. Due to the anti-commutation relation and hermiticity we obtain
$\Psi(v)^\ast\,\Psi(v) = \Psi(v)\,\Psi(v) = \frac{1}{2} \big\{\Psi(v),\Psi(v)\big\} = \frac{1}{2}\tau(v,v)\,\oone$.
Thus, acting with the state $\Omega$ we obtain the consistency condition
\begin{flalign}
\frac{1}{2}\tau(v,v) =  \Omega\left(\frac{1}{2}\tau(v,v)\,\oone\right) = \Omega\Big(\Psi(v)^\ast\,\Psi(v)\Big) \geq 0~.
\end{flalign}
This shows that for the existence of states it is necessary that $\tau$ is positive semi-definite.
\end{rem}

\end{document}